\newcommand{\FullVersion}[2]{%
    \ifstrequal{full}{full}
    {#1}
    {#2}
}
\tikzset{
    position/.style args={#1:#2 from #3}{
        at=(#3.#1), anchor=#1+180, shift=(#1:#2)
    }
}
\pgfplotsset{compat=1.5}
\spnewtheorem{claim}{Claim}{\bfseries}{\itshape}
\newenvironment{proofof}[1]{\begin{trivlist} \item {\bf Proof
#1:~~}}
  {\qed\end{trivlist}}
\renewenvironment{proofof}[1]{\par\medskip\noindent{\bf Proof of #1: \ }}{\hfill\qed\par\medskip}
\newcommand{\namedref}[2]{\hyperref[#2]{#1~\ref*{#2}}}
\newcommand{\namedrefeq}[2]{\hyperref[#2]{#1~(\ref*{#2})}}
\newcommand{\thmlab}[1]{\label{thm:#1}}
\newcommand{\thmref}[1]{\namedref{Theorem}{thm:#1}}
\newcommand{\lemlab}[1]{\label{lem:#1}}
\newcommand{\lemref}[1]{\namedref{Lemma}{lem:#1}}
\newcommand{\claimlab}[1]{\label{claim:#1}}
\newcommand{\claimref}[1]{\namedref{Claim}{claim:#1}}
\newcommand{\corlab}[1]{\label{cor:#1}}
\newcommand{\corref}[1]{\namedref{Corollary}{cor:#1}}
\newcommand{\seclab}[1]{\label{sec:#1}}
\newcommand{\secref}[1]{\namedref{Section}{sec:#1}}
\newcommand{\applab}[1]{\label{app:#1}}
\newcommand{\appref}[1]{\namedref{Appendix}{app:#1}}
\newcommand{\remlab}[1]{\label{rem:#1}}
\newcommand{\figlab}[1]{\label{fig:#1}}
\newcommand{\figref}[1]{\namedref{Figure}{fig:#1}}
\newcommand{\alglab}[1]{\label{alg:#1}}
\renewcommand{\algref}[1]{\namedref{Algorithm}{alg:#1}}
\newcommand{\deflab}[1]{\label{def:#1}}
\newcommand{\defref}[1]{\namedref{Definition}{def:#1}}
\newcommand{\eqnlab}[1]{\label{eq:#1}}
\newcommand{\eqnref}[1]{\namedrefeq{Equation}{eq:#1}}
\newcommand{\qedmath}{\tag*{$\Box$}}
\newenvironment{remindertheorem}[1]{\medskip \noindent {\bf Reminder of  #1.  }\em}{}
\def \depth    {\mdef{\mathsf{depth}}}
\def \sinks    {\mdef{\mathsf{sinks}}}
\newcommand{\genpeb}{\mathsf{GenPeb}}
\newcommand{\startlight}{\mathsf{StartLight}}
\newcommand{\lightreq}{\mathsf{LightReq}}
\newcommand{\balloonreq}{\mathsf{BalloonReq}}
\newcommand{\startballoon}{\mathsf{StartBalloon}}
\newcommand{\Endballoon}{\mathsf{EndBalloon}}
    \newcommand{\midballoon}{\mathsf{MidBalloon}}
\newcommand{\rev}{\mathsf{rev}}
\tikzset{
    position/.style args={#1:#2 from #3}{
        at=(#3.#1), anchor=#1+180, shift=(#1:#2)
    }
}
\definecolor{purduedigitalheadlinegold}{HTML}{98700D}
\definecolor{purduecampusgold}{HTML}{C28E0E}
\definecolor{purduecoalgray}{HTML}{4D4038}
\definecolor{purdueevertrueblue}{HTML}{5B6870}
\definecolor{purduemoondustgray}{HTML}{BAA892}
\definecolor{purdueslayterskyblue}{HTML}{6E99B4}
\definecolor{mahogany}{rgb}{0.75, 0.25, 0.0}
\definecolor{darkblue}{rgb}{0.0, 0.0, 0.55}
\definecolor{darkpastelgreen}{rgb}{0.01, 0.75, 0.24}
\definecolor{darkgreen}{rgb}{0.0, 0.2, 0.13}
\definecolor{darkgoldenrod}{rgb}{0.72, 0.53, 0.04}
\definecolor{forestgreen}{rgb}{0.13, 0.55, 0.13}
\definecolor{darkred}{rgb}{0.55, 0.0, 0.0}
\definecolor{tangocolordarkchameleon}{HTML}{4E9A06}
\newcommand{\COMMENTED}[1]{{}}
\newcommand{\Ex}[1]{\ensuremath{\mathbb{E}[#1]}}
\def \indeg    {\mdef{\mathsf{indeg}}}
\def \path    {\mdef{\mathsf{PATH}}}
\def \lpath    {\mdef{\mathsf{LongestPath}}}
\def \rgenpeb    {\mdef{\mathsf{RGenPeb}}}
\def \parents    {\mdef{\mathsf{parents}}}
\def \ancestors    {\mdef{\mathsf{ancestors}}}
\def \ArgonA   {\mdef{\mathsf{Arg}\text{-}\mathsf{A}}}
\def \ArgonB   {\mdef{\mathsf{Arg}\text{-}\mathsf{B}}}
\def \DRS   {\mdef{\mathsf{DRS}}}
\def \skipnode   {\mdef{\mathsf{Skip}}}
\def \lastd   {\mdef{\mathsf{LastDelete}}}
\def \lasta   {\mdef{\mathsf{LastAdd}}}
\def \trans   {\mdef{\mathsf{Trans}}}
\def \blockpebble   {\mdef{\mathsf{BlockPebble}}}
\def \lastblockpebble   {\mdef{\mathsf{LastBlockPebble}}}
\def \numskip   {\mdef{\mathsf{NumSkip}}}
\def \prevpeb   {\mdef{\mathsf{RevPeb}}}
\newcommand\sample{\mathrel{\overset{\vbox to.7ex{\hbox{\fontsize{6}{0}\selectfont\vspace{.5ex} \$}}}{\leftarrow}}}
\renewcommand{\O}[1]{\ensuremath{\mathcal{O}\left(#1\right)}}
\newcommand{\lrarrow}{\mathrel{\mathpalette\lrarrow@\relax}}
\newcommand{\lrarrow@}[2]{%
  \vcenter{\hbox{\ooalign{%
    $\m@th#1\mkern6mu\rightarrow$\cr
    \noalign{\vskip1pt}
    $\m@th#1\leftarrow\mkern6mu$\cr
  }}}%
}
\newcommand{\qPeb}[1]{\ensuremath{\mathcal{P}^{\lrarrow}_{#1}}}
\newcommand{\pqPeb}[1]{\ensuremath{\mathcal{P}^{\lrarrow,\|}_{#1}}}
\newcommand{\rqPeb}[1]{\ensuremath{\widetilde{\mathcal{P}}^{\lrarrow}_{#1}}}
\newcommand{\rpqPeb}[1]{\ensuremath{\widetilde{\mathcal{P}}^{\lrarrow,\|}_{#1}}}
\newcommand{\peb}{\Pi} 
\newcommand{\Peb}{{\cal P}} 
\newcommand{\ppeb}{\peb^{\parallel}} 
\newcommand{\qpeb}{\peb^{\lrarrow}}
\newcommand{\pqpeb}{\peb^{\lrarrow,\|}}
\newcommand{\pPeb}{\Peb^{\parallel}}
\newcommand{\Cspace}{s}
\newcommand{\Ctime}{t}
\newcommand{\Ccc}{{cc}}
\newcommand{\Cspacetime}{{st}}
\newcommand{\Cevery}{\{\Cspace,\Ctime,\Cspacetime,\Ccc\}}
\newcommand{\mdef}[1]{{\ensuremath{#1}}\xspace}  
\newcommand{\myfunc}[1]{\mdef{\mathsf{#1}}}      
\newcommand{\superscript}[1]{\ensuremath{^{\mbox{\tiny{\textit{#1}}}}}\xspace}
\def \th {\superscript{th}}     
\def \st {\superscript{st}}     
\def \poly  {\mdef{\myfunc{poly}}}             
\newcommand{\abs}[1]{\mdef{\left|#1\right|}}         
\newcommand{\ceil}[1]{\mdef{\left\lceil#1\right\rceil}}               
\newcommand{\ignore}[1]{}
\newif\ifnotes\notestrue
\newcommand{\samson}[1]{\textcolor{purple}{{\bf (Samson:} {#1}{\bf ) }} \marginpar{\tiny\bf
             \begin{minipage}[t]{0.5in}
               \raggedright SZ:
            \end{minipage}}}            
            \newcommand{\jeremiah}[1]{\textcolor{red}{{\bf (Jeremiah:} {#1}{\bf ) }} \marginpar{\tiny\bf
             \begin{minipage}[t]{0.5in}
               \raggedright JB:
            \end{minipage}}}  	
            \newcommand{\seunghoon}[1]{\textcolor{blue}{{\bf (Seunghoon:} {#1}{\bf ) }} \marginpar{\tiny\bf
             \begin{minipage}[t]{0.5in}
               \raggedright SL:
            \end{minipage}}}						
\newcommand{\samson}[1]{}
\newcommand{\jeremiah}[1]{}
\newcommand{\seunghoon}[1]{}
\renewcommand*{\@fnsymbol}[1]{\textcolor{mahogany}{\ensuremath{\ifcase#1\or *\or \dagger\or \ddagger\or
 \mathsection\or \triangledown\or \mathparagraph\or \|\or **\or \dagger\dagger
   \or \ddagger\ddagger \else\@ctrerr\fi}}}
\providecommand{\email}[1]{\href{mailto:#1}{\nolinkurl{#1}\xspace}}
\definecolor{mahogany}{rgb}{0.75, 0.25, 0.0}
\definecolor{darkblue}{rgb}{0.0, 0.0, 0.65}
\definecolor{darkpastelgreen}{rgb}{0.01, 0.75, 0.24}
\definecolor{darkgreen}{rgb}{0.0, 0.2, 0.13}
\definecolor{darkgoldenrod}{rgb}{0.72, 0.53, 0.04}
\definecolor{darkred}{rgb}{0.55, 0.0, 0.0}
\begin{document}
\title{The Parallel Reversible Pebbling Game: Analyzing the Post-Quantum Security of iMHFs}
\date{\today}
\titlerunning{The Parallel Reversible Pebbling Game: Analyzing the Post-Quantum Security of iMHFs}


\author{Jeremiah Blocki
\and Blake Holman
\and Seunghoon Lee
}

\institute{Purdue University, West Lafayette, IN, 47906, USA\\
\email{\{jblocki,holman14,lee2856\}@purdue.edu}
}

\maketitle

\begin{abstract}
The classical (parallel) black pebbling game is a useful abstraction which allows us to analyze the resources (space, space-time, cumulative space) necessary to evaluate a function $f$ with a static data-dependency graph $G$. Of particular interest in the field of cryptography are data-independent memory-hard functions $f_{G,H}$ which are defined by a directed acyclic graph (DAG) $G$ and a cryptographic hash function $H$. The pebbling complexity of the graph $G$ characterizes the amortized cost of evaluating $f_{G,H}$ multiple times as well as the total cost to run a brute-force preimage attack over a fixed domain $\mathcal{X}$, i.e., given $y \in \{0,1\}^*$ find $x \in \mathcal{X}$ such that $f_{G,H}(x)=y$. While a classical attacker will need to evaluate the function $f_{G,H}$ at least $m=|\mathcal{X}|$ times a quantum attacker running Grover's algorithm only requires $\O{\sqrt{m}}$ blackbox calls to a quantum circuit $C_{G,H}$ evaluating the function $f_{G,H}$. Thus, to analyze the cost of a quantum attack it is crucial to understand the space-time cost (equivalently width times depth) of the quantum circuit $C_{G,H}$. We first observe that a legal black pebbling strategy for the graph $G$ does not necessarily imply the existence of a quantum circuit with comparable complexity --- in contrast to the classical setting where any efficient pebbling strategy for $G$ corresponds to an algorithm with comparable complexity for evaluating $f_{G,H}$. Motivated by this observation we introduce a new parallel reversible pebbling game which captures additional restrictions imposed by the No-Deletion Theorem in Quantum Computing. We apply our new reversible pebbling game to analyze the reversible space-time complexity of several important graphs: Line Graphs, Argon2i-A, Argon2i-B, and DRSample. Specifically, (1) we show that a line graph of size $N$ has reversible space-time complexity at most $\O{N^{1+\frac{2}{\sqrt{\log N}}}}$. (2) We show that any $(e,d)$-reducible DAG has reversible space-time complexity at most $\O{Ne+dN2^d}$. In particular, this implies that the reversible space-time complexity of Argon2i-A and Argon2i-B are at most $\O{N^2 \log \log N/\sqrt{\log N}}$ and $\O{N^2/\sqrt[3]{\log N}}$, respectively. (3) We show that the reversible space-time complexity of DRSample is at most $\O{N^2 \log \log N/\log N}$. We also study the cumulative pebbling cost of reversible pebblings extending a (non-reversible) pebbling attack of Alwen and Blocki on depth-reducible graphs. 

\keywords{Parallel Reversible Pebbling \and Argon2i \and DRSample \and Data-Independent Memory-Hard Function}
\end{abstract}

\section{Introduction}
\seclab{sec:intro}
The (parallel) black pebbling game~\cite{HP70,Coo73} is a powerful abstraction which can be used to analyze the resources (space, space-time, amortized space-time) necessary to evaluate any function $f_G$ with a static data-dependency graph $G$. In the black pebbling game we are given a directed acyclic graph (DAG) $G = (V,E)$ where nodes intuitively represent intermediate data values and edges represent dependencies between these values, e.g., if $z = x \times y$ then we would add directed edges from nodes $x$ and $y$ to node $z$ to indicate that $x$ and $y$ are required to compute $z$. However, while the parallel black pebbling game is a useful abstraction for classical computation it is not a suitable model for reversible computation as in quantum computation. In this paper, we introduce a parallel reversible pebbling game as an abstraction which can be used to analyze the resources required to build a reversible quantum circuit evaluating our function $f_G$. We use the parallel reversible pebbling game to analyze the space-time cost of several important graphs (the line graph, Argon2i-A, Argon2i-B, DRSample) associated with prominent data-independent memory-hard functions (iMHFs) --- used in cryptography to design egalitarian proof of work puzzles and to protect low-entropy secrets (e.g., passwords) against brute-force attacks.  

\paragraph{Review: Parallel Black Pebbling.}
The classical parallel black pebbling game begins with no pebbles on the graph ($P_0=\{\}$), and during each round of the pebbling game, we may only place a new pebble on a node $v$ if all of $v$'s parents were pebbled in the previous round. Intuitively, if the data value $X_v$ corresponding to node $v$ is computed as  $X_v \coloneqq H(X_u ,X_{v-1})$ then $G$ would include directed edges $(u,v)$ and $(v-1,v)$ indicating that we cannot compute value $X_v$ (resp. place a pebble on node $v$) unless $X_u$ and $X_{v-1}$ are already available in memory (resp. we already have pebbles on nodes $u$ and $v-1$). More formally, if $P_i \subseteq V$ denotes the set of pebbled nodes during round $i$, then we require that $\parents(P_{i+1}\setminus P_i,G) \subseteq P_i$ where $\parents(S,G) = \bigcup_{v \in S} \{u:(u,v) \in E\}$. In the black pebbling game we are given a subset $T\subseteq V$ of target nodes (corresponding to output data values) and the goal of the black pebbling game is to eventually place a pebble on each node in $T$. A pebbling $P=(P_0, P_1,\ldots, P_t)$ is legal if $P_0=\{\}$ and $\parents(P_{i+1}\setminus P_i,G) \subseteq P_i$ for each $i < t$.  Intuitively, the requirement that $\parents(P_{i+1}\setminus P_i,G) \subseteq P_i$ enforces the natural constraint that we cannot compute a new data value before all dependent data values are available in memory.  In the sequential pebbling game, we additionally require that $\left| P_{i+1}\setminus P_i\right| \leq 1$ so that only one new pebble can be placed on the graph in each round  while the parallel pebbling game has no such restriction.  Thus, a legal parallel (resp. sequential) pebbling of a data-dependency graph $G$ naturally corresponds to a parallel (resp. sequential)  algorithm to compute $f_G$ and the number of pebbles $|P_i|$ on the graph in each round $i$ corresponds to memory usage during each round of computation. 

The sequential black pebbling game has been used to analyze space complexity~\cite{HPV77,PTC76} and to examine space-time tradeoffs~\cite{Cob66,Coo73,Pau75,PV76,Tom81}. In the field of cryptography, the parallel black pebbling game has been used to analyze the security of data-independent memory-hard functions (iMHFs). An iMHF $f_{G,H}$ is defined using a cryptographic hash function $H$ and a data-dependency graph $G$~\cite{STOC:AlwSer15,C:AlwBlo16,EC:AlwBloPie17,TCC:BloZho17}. The output of $f_{G,H}(x)$ is defined to be the label $X_N$ of the final sink node $N$ in $G$ where the label $X_1=H(X)$ of the first (source) node is obtained by hashing the input and the label of each internal node $v$ is obtained by hashing the labels of all of $v$'s parents, e.g., if $\parents(v,G)=\{u,v-1\}$ then we would set $X_v=H(X_u,x_{v-1})$. In many cryptographic applications (e.g., password hashing), we want to ensure that it is moderately expensive to evaluate $f_{G,H}$ to ensure that a brute-force pre-image attack (given $y$ find some $x$ such that $f_{G,H}(x)=y$) is prohibitively expensive even when the domain $\mathcal{X}$ of inputs is smaller (e.g., low entropy passwords).  When modeling the cryptographic hash function $H$ as a random oracle, one can prove that the cost to evaluate $f_{G,H}$ in the parallel random oracle model is exactly captured by the pebbling cost of $G$~\cite{STOC:AlwSer15,TCC:AlwTac17,EC:AlwBloPie18}. Thus, we would like to pick a graph $G$ with high pebbling costs and/or understand the pebbling costs associated with candidate iMHFs. Prior work demonstrated that the amortized space-time complexity of prominent iMHF candidates, including Password Hashing Competition winner Argon2i, was lower than previously hoped~\cite{C:AlwBlo16,EC:AlwBloPie17,ESP:AlwBlo17,TCC:BloZho17}. On the positive side, recent work has shown how to use depth-robust graphs~\cite{EGS75} to construct iMHFs with (essentially) optimum amortized space-time complexity~\cite{EC:AlwBloPie17,CCS:AlwBloHar17,C:BHKLXZ19}. However, it is important to note that the classical black pebbling game does not include any rules constraining our ability to remove pebbles. We are allowed to remove pebbles from the graph at any point in time which corresponds to freeing memory and can be done to reduce the space usage. While the classical pebbling game allows us to discard pebbles at any point in time to free memory, this action is often not possible in a quantum circuit due to the No-Deletion Theorem \cite{PatBra00}. In this sense, the black pebbling game cannot be used to model reversible computation as in a quantum circuit and an efficient parallel black pebbling for a graph $G$ does not necessarily imply the existence of a quantum circuit $C_{G,H}$ with comparable cost.
  
\paragraph{Review: Measuring Pebbling Costs.} There are several natural ways to measure the cost of a pebbling. The space cost of a pebbling $P=(P_0,\ldots,P_t)$ measures the maximum number of pebbles on the graph during any round, i.e., $\max_i |P_i|$ and the space complexity of a graph measures the minimum space cost over all legal pebblings of $G$. Similarly, the space-time cost of a pebbling $P=(P_0,\ldots,P_t)$ measures the product $t \times \max_i |P_i|$ and the cumulative pebbling cost is $\sum_{i} |P_i|$. Intuitively, space complexity measures the amount of memory (e.g., RAM) required for a computation and space-time cost measures the full cost of the computation by telling how long the memory will be locked up during computation. Cumulative pebbling cost gives the amortized space-time complexity of pebbling multiple copies of the graph $G$, i.e., when we are evaluating our function $f_G$ on multiple different inputs in parallel~\cite{STOC:AlwSer15}. 

\paragraph{(Quantum) Pre-Image Attacks.} Understanding the amortized space-time complexity of a graph $G$ is important to estimate the cost of a classical brute-force pre-image attack over a domain $\mathcal{X}$ of size $m$. In particular, suppose we are given a target output $y$ (e.g., $y=f_{G,H}(x')$ for a secret input $x \in \mathcal{X}$) and we wish to find some input $x' \in \mathcal{X}$ such that $y= f_{G,H}(x')$. Classically, the space-time cost of a black-box pre-image attack would require us to evaluate the function  $f_{G,H}$ on $\Omega(m)$ inputs. If the cumulative pebbling cost of $G$ is given by $\sum_{i} |P_i|$ then the total space-time cost of the pre-image attack would scale proportionally to $m \sum_{i} |P_i|$, i.e., $m$ times the amortized space-time complexity. Thus, a more efficient black pebbling strategy for $G$ yields a lower-cost pre-image attack.  

In the context of quantum computing, Grover's algorithm \cite{STOC:Grover96} substantially reduces the cost of a brute-force pre-image attack over a domain $\mathcal{X}$ of size $m$. In particular, Grover's algorithm only requires $\O{\sqrt{m}}$ black-box queries to the function $f_{G,H}$ evaluating the function $f_{G,H}$ and this is optimal --- any quantum algorithm using $f_{G,H}$ as a black box must make at least $\Omega(\sqrt{m})$ queries \cite{BennettBBV97}. If we instantiate $f_{G,H}$ with a quantum circuit of width $w$ and depth $d$ then full Grover circuit would have width $W= \O{w}$ and depth $D = d \times \O{\sqrt{m}}$. In particular, the total space-time (equivalently width-depth) cost of the attack would be $wd \times \O{\sqrt{m}}$. Thus, to analyze the cost of a quantum pre-image attack it is crucial to understand the space-time (or width-depth) cost of a quantum circuit $C_{G,H}$ computing $f_{G,H}$. Our goal will be to treat $H$ as a black box and use  graph pebbling to characterize the space-time cost. A natural first attempt would be to use the classical black pebbling game to analyze the parallel pebbling cost of $G$ as above. If this approach worked we could simply leverage prior (parallel) black pebbling analysis of prominent iMHF candidates~\cite{C:AlwBlo16,EC:AlwBloPie17,ESP:AlwBlo17,TCC:BloZho17} to analyze the cost of a quantum pre-image attack.  Unfortunately, this approach breaks down because a legal black pebbling strategy {\em does not} necessarily correspond to a valid quantum circuit $C_{G,H}$ with comparable cost. Thus, we will require a different pebbling game to analyze the width-depth cost of the quantum circuit $C_{G,H}$.

\paragraph{Notation.} We use the notation $[N]$ (resp. $[a,b]$) to denote the set $\{1,\ldots,N\}$ (resp. $\{a,a+1,\ldots, b\}$) for a positive integer $N$ (resp. $a \leq b$). The notation $\sample$ denotes a uniformly random sampling, e.g., we say $x\sample[N]$ when $x$ is a uniformly sampled integer from $1$ to $N$. For simplicity, we let $\log(\cdot)$ be a $\log$ base $2$, i.e., $\log x\coloneqq\log_2x$. 

Let $G=(V,E)$ be a directed acyclic graph (DAG) where we denote $N$ to be the number of nodes in $V=[N]$.  
Given a node $v\in V$, we define $\parents(v,G)$ to be the \emph{immediate parents} of node $v$ in $G$, and we extend this definition to a subset of nodes as well; for a set $W\subseteq V$, we define $\parents(W,G)\coloneqq \bigcup_{w\in W}\{u:(u,w)\in E\}$. 
We let $\ancestors(v,G)$ be the set of all ancestors of $v$ in $G$, i.e., $\ancestors(v,G) \coloneqq \bigcup_{i\geq 1} \parents^i(v,G)$, 
where $\parents^1(v,G)=\parents(v,G)$ and $\parents^i(v,G)= \parents(\parents^{i-1}(v,G),G)$. Similarly, for a set $W\subseteq V$, we define $\ancestors(W,G)\coloneqq \bigcup_{i\geq 1}\parents^i(W,G)$, where $\parents^1(W,G)=\parents(W,G)$ and recursively define $\parents^i(W,G)=\parents(\parents^{i-1}(W,G),G)$. 

We denote the set of all sink nodes of $G$ with $\sinks(G)\coloneqq\{v\in V : \nexists(v,u)\in E\}$ -- note that $\ancestors(\sinks(G),G)=V$. We define $\depth(v,G)$ to refer to the number of the longest directed path in $G$ ending at node $v$ and we define $\depth(G)=\max_{v\in V}\depth(v,G)$ to refer to the number of nodes in the longest directed path in $G$. Given a node $v\in V$, we define $\indeg(v)\coloneqq|\parents(v,G)|$ to denote the number of incoming edges into $v$, and we also define $\indeg(G)\coloneqq\max_{v\in V}\indeg(v)$. Given a set $S\subseteq V$ of nodes, we use $G-S$ to refer to the subgraph of $G$ obtained by deleting all the nodes in $S$ and all edges that are incident to $S$. We also use the notation $S_{\leq k} \coloneqq S\cap [k]$ denotes the subset of $S$ that only intersects with $[k]$. We say that a DAG $G=(V,E)$ is \emph{$(e,d)$-depth robust} if for any subset $S\subseteq V$ such that $|S|\leq e$ we have $\depth(G-S)\geq d$. Otherwise, we say that $G$ is \emph{$(e,d)$-reducible} and call the subset $S$ a \emph{depth-reducing set} (which is of size at most $e$ and yields $\depth(G-S)<d$).

We denote with $\Peb_{G,T}$ and $\pPeb_{G,T}$ the set of all legal sequential and parallel \emph{classical} pebblings of $G$ with target set $T$, respectively. In the case where $T=\sinks(G)$, we simply write $\Peb_G$ and $\pPeb_G$, respectively.

\subsection{Our Results} We introduce the parallel reversible pebbling game as a tool to analyze the (amortized) space-time cost of a quantum circuit evaluating a function $f$ with a static data-dependency graph $G$. Prior work~\cite{Bennett89,10.1007/3-540-45627-9_26,8715092} introduced a sequential reversible pebbling game. As we discuss, there are several key subtleties that arise when extending the sequential reversible pebbling game to the parallel setting. We argue that any parallel reversible pebbling $P=(P_0,\ldots, P_t)$ of the graph $G$ corresponds to a quantum circuit $C_P$ evaluating $f$ with comparable costs, e.g., the depth of the quantum circuit $C_P$ corresponds to the number of pebbling rounds $t$ and the width of the circuit corresponds to the space complexity of the pebbling, i.e., $\max_i |P_i|$. Thus, any reversible pebbling attack will yield a more efficient quantum pre-image attack\footnote{While one could use the parallel reversible pebbling game as a heuristic to {\em lower bound} the cost of a quantum pre-image attack we stress that, at this time, there is no pebbling reduction which provably lower bounds the cost of a quantum pre-image attack on $f_{G,H}$ using reversible pebbling cost of the underlying DAG $G$. We do have pebbling reductions for classical (non-reversible) pebblings in the parallel random oracle model~\cite{STOC:AlwSer15}, but there are several technical barriers which make it difficult to extend this reduction to the quantum random oracle model. }.

As an application, we use the parallel reversible pebbling game to analyze the space-time cost of several important password hashing functions $f_{G,H}$ including PBKDF2, BCRYPT, Argon2i, and DRSample.

\paragraph{Reversible Pebbling Attacks on Line Graphs.} We first focus on analyzing the reversible pebbling cost of a line graph $L_N$ with $N$ nodes $\{1,\ldots, N\}$ and edges $(i,i+1)$ for each $1 \leq i <N$.  Classically, there is a trivial black pebbling strategy for the line graph with simply walks a single pebble from node $1$ to node $N$ over $N$ pebbling rounds, i.e., in each round $i$ we place a new pebble on node $i$ and then delete the pebble on node $i-1$. This pebbling strategy is clearly optimal as the maximum space usage is just $1$ and the space-time cost is just $N \times 1 = N$. However, this simple pebbling strategy is no longer legal in the reversible pebbling game and it is a bit tricky just to find a reversible pebbling strategy whose space-time cost is significantly lower than $\O{N^2}$ --- the space-time cost of the na\"ive pebbling strategy which avoids removing pebbles. In \thmref{thm:line} we show that the (sequential) reversible space-time complexity of a line graph is $\O{N^{1+\frac{2}{\sqrt{\log N}}}}$. A similar argument seems to be implicitly assumed by Bennett~\cite{Bennett89} though the argument was never explicitly formalized as a reversible pebbling strategy. The result improves upon a result of Li and Vit\'{a}nyi~\cite{LiVit96} who showed that the space-time complexity is at most  $\O{N^{\log 3} \log N}$\footnote{The pebbling of Li and Vit\'{a}nyi~\cite{LiVit96} runs in time $\O{N^{\log 3}}$ while using at most $\O{\log N}$ pebbles. Our pebbling strategy uses more pebbles to reduce the  overall space-time cost by improving the pebbling time.}.


Because the space-time complexity of the line graph $G=L_N$ is so low, it is a poor choice for an iMHF $f_{G,H}$ or for password hashing~\cite{SP:BloHarZho18}. However, the line graph $L_N$ naturally corresponds to widely deployed password hashing algorithms like BCRYPT \cite{USENIX:ProMaz99} and PBKDF2 \cite{rfc2898} which use hash iteration to increase costs where the parameter $N$ controls the number of hash iterations. Thus, to understand the cost of a (quantum) brute-force password cracking attack it is useful to analyze the (reversible) pebbling cost of $L_N$.  

\paragraph{Reversible Pebbling Attack for Depth-Reducible DAGs.} In \thmref{thm:ed-reducible} we give a generic parallel reversible pebbling attack on any $(e,d)$-reducible DAG $G$ with space-time cost $\O{Ne + dN2^d}$ which corresponds to a meaningful attack whenever $e=o(N)$ and $d2^d = o(N)$. A DAG $G$ is said to be $(e,d)$-reducible if there is a subset $S\subseteq V$ of at most $e$ nodes such that any length $d$ path $P$ in $G$ contains at least one node in $S$. As we show this leads to meaningful reversible pebbling attacks on Argon2i, the winner of the Password Hashing Competition. Specifically, we demonstrate how to construct depth-reducing sets for Argon2i-A (an older version of Argon2i) and Argon2i-B (the current version of Argon2i) with $e=o(N)$ and $d2^{d}=o(N)$. This leads to 
reversible pebbling attacks with space-time complexity $\O{N^2\log\log N/\sqrt{\log N}}$ and $\O{N^2/\sqrt[3]{\log N}}$ against Argon2i-A and Argon2i-B, respectively --- see \corref{cor:argon2i}. 

In the classical pebbling setting, Alwen and Blocki \cite{C:AlwBlo16} previously gave a generic pebbling attack on $(e,d)$-reducible DAGs with amortized space-time cost $\O{Ne + N^2 d/e}$. However, this pebbling attack is not legal in the reversible setting, and without amortization, the space-time cost is still $N^2$ --- the average number of pebbles on the graph per round is just $e+Nd/e$ but at the peak, the pebbling strategy still requires $\Omega(N)$ pebbles. In our pebbling strategy, the maximum space usage is $\O{e+d2^d}$.

\paragraph{Reversible Pebbling Attack against DRSample.} Finally, we use the parallel reversible pebbling game to analyze DRSample~\cite{CCS:AlwBloHar17} --- a proposal to update the edge distribution in Argon2i with a depth-robust graph. With high probability, a randomly sampled DRSample DAG $G$ will not be $(e,d)$-reducible for parameters $e,d$ as large as $e=\Omega(N/\log N)$ and $d=\Omega(N)$. Thus, the generic reversible pebbling attack on $(e,d)$-reducible graphs does not seem to apply. We give an alternate pebbling strategy by partitioning the nodes of $G$ into $\lceil N/b\rceil$ consecutive blocks of size $b$ and converting a parallel reversible pebbling of the line graph $L_{
\lceil N/b\rceil}$ into a legal reversible pebbling of $G$. The reversible pebbling strategy will be cost-effective as long as we have an efficient pebbling strategy for $L_{\lceil N/b\rceil}$ and the graph $G$ does not contain too many ``long'' edges $(u,v)$ with $|v-u| \geq b$ --- we show that DRSample does not contain too many long edges when $b=N/\log^2 N$. Combined with our parallel reversible pebbling strategies for the line graph, this leads to an attack on DRSample with space-time cost at most $\O{N^2 \log \log N/\log N}$ --- see \corref{cor:stcostDRS}. 

More generally, in \thmref{thm:reduce} we give an efficient reversible pebbling algorithm which transforms a legal reversible pebbling $P'=(P_1',\ldots,P_{t'}')$ of the line graph $L_{\lceil N/b\rceil}$ into a legal reversible pebbling $P=(P_1,\ldots,P_t)$ of a DAG $G=(V,E)$. The reversible pebbling requires $t=\O{b t'}$ rounds and space $b s' + (\#skip)$ where $\#skip$ is upper bounded by the number of long edges $(u,v) \in E$ with $|v-u|\geq b$ and $s' = \max_i |P_i'|$ upper bounds the space usage of the pebbling $P'$. Thus, the total space-time complexity will be $\O{b^2 s't' + N \#skip}$ and we will be able to obtain an efficient reversible pebbling attack as long as $b=o(N)$ and $(\#skip)=o(N)$ --- we show that this is the case for DRSample. 

\ignore{
Intuitively, the pebbling transformation works as follows: (1) partition the nodes in $G$ into $\lceil N/b\rceil$ blocks $B_1,\ldots,B_{\lceil N/b\rceil}$ each containing $b$ consecutive nodes. (2) Whenever we place a new pebble on a node $v \in L_{\lceil N/b\rceil}$ in the pebbling $P'$, we pebble {\em all} of the nodes in block $B_v$ over the next $b$ rounds in the pebbling $P$ for $G$. (3) Whenever we remove a pebble from a node $v \in L_{\lceil N/b\rceil}$ in the pebbling $P'$, we remove all of the nodes in block $B_v$ over the next $b$ rounds with the possible exception of nodes $u \in B_v$ which have a long outgoing edge $(u,w)$ with $|w-u| \geq b$. 

The new pebbling requires $t=\O{b t'}$ rounds and if the pebbling $P'$  for $L_{\lceil N/b\rceil}$ uses at most $s'=\max_i |P_i'|$ pebbles then the pebbling $P$ requires space at most $b s' + (\#skip)$ where $\#skip$ is upper bounded by the number of long edges $(u,v) \in E$ with $|v-u|\geq b$. Thus, the total space-time complexity will be $\O{b^2 s't' + N \#skip}$ and we will be able to obtain an efficient reversible pebbling attack as long as $b=o(N)$ and $(\#skip)=o(N)$. 

We observe that with high probability, a randomly sampled DRSample graph $G=(V,E)$ has the property that there are not too many directed edges $(u,v)$ with length $|v-u| > b= N/\log^2 N$. In particular, we will have  $(\#skip) = \O{N \log \log N/\log N}$ with high probability. We can use our previous reversible pebbling attacks on the line graph to obtain a reversible pebbling of $L_{\lceil\log^2 N\rceil}$ using time $t' = \O{\log^2 N}$ and space $s' = o(\log N)$. Combined with our pebbling transformation, this leads to a reversible pebbling attack on DRSample with space-time cost at most $\O{N^2 \log \log N/\log N}$. 
}
\paragraph{Cumulative Pebbling Cost and Parallel Reversible Pebbling.} Alwen and Blocki~\cite{C:AlwBlo16} gave a general parallel black pebbling attack on any $(e,d)$-reducible graph. This general pebbling attack was used to upper bound the cumulative cost of many prominent iMHFs including Argon2i-A \cite{C:AlwBlo16} and Argon2i-B \cite{ESP:AlwBlo17}. More generally the attack shows that {\em any} constant indegree DAG $G$ has cumulative pebbling cost at most $\O{N^2 \log \log N/\log N}$. We show how the pebbling attack of Alwen and Blocki~\cite{C:AlwBlo16} can be extended to the parallel reversible pebbling game\footnote{Alwen, Blocki and Pietrzak~\cite{EC:AlwBloPie17} later provided a recursive version of the pebbling attacks of Alwen and Blocki ~\cite{C:AlwBlo16} which can further reduces the cumulative pebbling cost of a DAG which is $(e_i,d_i)$-reducible at a sequence of points $(e_i,d_i)$ with $d_i < d_{i-1}$ and $e_i \geq d_{i-1}$. The recursive pebbling attack yields tighter asymptotic upper bounds for some iMHF candidates \cite{TCC:BloZho17,EC:AlwBloPie17}. We conjecture that these recursive pebbling attacks can also be generalized to the reversible pebbling setting though we leave this as an open problem. }. In particular, we can show that the cumulative reversible pebbling costs of an $(e,d)$-reducible DAG with maximum indegree $\delta$ is upper bounded by $\O{e N + g\delta N + \frac{N^2 d}{g}}$ for any parameter $g \geq d$ matching the non-reversible pebbling attacks of Alwen and Blocki~\cite{C:AlwBlo16} --- see \thmref{thm:revcc}. More specifically, since any DAG $G$ with constant indegree $\delta=O(1)$ is $(e,d)$-reducible with $d=N/\log^2 N$ and $e= \O{N \log \log N/\log N}$~\cite{C:AlwBlo16} we can plug in $g=e$ to obtain a reversible pebbling strategy with cumulative cost at most $\O{N^2 \log \log N/\log N}$ --- see \corref{generalCMCDAG}. We can also upper bound the cumulative reversible pebbling costs of Argon2i-A and Argon2i-B as $\O{N^{1.75} \log N}$ and $\O{N^{1.8}}$ respectively\FullVersion{ --- see \corref{Argon2iCMC}}{ --- see the full version for the details}.

\subsection{Technical Overview}\seclab{sec:overview}

\paragraph{Defining the Parallel Reversible Pebbling Game.} We begin by defining and motivating the parallel reversible pebbling game. We want to ensure that any legal (parallel) reversible pebbling strategy for $G$ corresponds to a quantum circuit $C_{G,H}$ evaluating $f_{G,H}$ that could be used as part of a pre-image attack using Grover's algorithm.

We first consider the parallel quantum random oracle model~\cite{AC:BDFLSZ11} where the random oracle is a function $H:\{0,1\}^{\leq 2\lambda}\rightarrow \{0,1\}^\lambda$. In the parallel quantum random oracle model we are given access to a quantum oracle maps basis states of the form $|x_1 , y_1,\ldots, x_k,y_k, z \rangle$ to the new state $|x_1 , y_1 \oplus H(x_1),\ldots, x_k,y_k \oplus H(x_k), z \rangle$. Here, $x_1,\ldots,x_k$ denote the queries, $y_1,\ldots, y_k$ denote the output registers and $z$ denotes any auxiliary data. Notice that if $y_i = 0^\lambda$ then the $i$\th output register will just be $H(x_i)$ after the query is submitted. 

Now consider the function $f(x) = H^N(x)$ where $H^1(x)=H(x)$ and $H^{i+1}(x) = H(H^i(x))$. The data-dependency graph for $f$ is simply the line graph $G=L_N$. In our reversible pebbling game, we want to ensure that each pebbling transition corresponds to a legal state transition in the quantum random oracle model. If $N=5$, then the pebbling configuration $P_i=\{2, 3, 4\}$ intuitively corresponds to a quantum state containing the labels $X_2=H^2(x)$, $X_3=H^3(x)$ and $X_4=H^4(x)$. From this state, we could use $X_4$ and an input register and submit the query $|X_4, 0^\lambda \rangle$ to the random oracle to obtain  $X_5=H(X_4)$ from the resulting state $|X_4, H(X_4) \rangle$. Similarly, while we cannot simply delete $X_3$ we could uncompute this value by using $X_3$ as an output register and submitting the random oracle query $|X_2, X_3\rangle$ to obtain the new state $|X_2, H(X_2) \oplus X_3\rangle= |X_2, 0^\lambda \rangle$ in which the label $X_3$ has been removed. However, without the label $X_1$ there is no way to uncompute $X_2$ without first recomputing $X_1$. 

The above example suggests that we extend the parallel pebbling game by adding the rule that  $\parents(P_{i} \setminus P_{i+1},G) \subseteq P_i$, i.e., a pebble can only be deleted if all of its parents were pebbled at the end of the previous pebbling round. While this rule is necessary, it is not yet sufficient to prevent impossible quantum state transitions. In particular, the rule would not rule out the pebbling transition from $P_i = \{1,2,\ldots ,i\}$ to the new configuration $P_{i+1}= \{\}$ where all labels have been removed from memory. This pebbling transition would correspond to a quantum transition from a state in which labels $X_1,\ldots, X_i$ are stored in memory to a new state where all of these labels have been uncomputed after just one (parallel) query to the random oracle. Because quantum computation is reversible this would also imply that we could directly transition from the original state (no labels computed) to a state in which all of the labels $X_1,\ldots, X_i$ are available after just one (parallel) query to the quantum random oracle.  However, it is known that computing $X_i=H^i(x)$ requires at least $i$ rounds of computation even in the parallel quantum random oracle model~\cite{ITC:BloLeeZho21}. Thus, the pebbling transition from $P_i = \{1,2,\ldots ,i\}$ to $P_{i+1}= \{\}$ must be disallowed by our reversible pebbling rules as the corresponding quantum state transition is impossible.

We address this last issue by adding another pebbling rule: if $v \in \parents(P_i \setminus P_{i-1},G) \cup \parents(P_{i-1}\setminus P_i,G)$, then $v \in P_i$. Intuitively, the rule ensures that if the label $X_v$  appeared in an input register to either compute or uncompute some other data label then we cannot also uncompute $X_v$ in this round, i.e., we must keep a pebble at node $v$.

We make several observations about the reversible pebbling game. First, any legal reversible pebbling of a DAG $G$ is also a legal (classical) parallel black pebbling of $G$ since we only added additional pebbling restrictions. More formally, if $\pPeb_G$ (resp. $\Peb_G$) denotes the set of all legal parallel (resp. sequential) black pebblings of $G$ and $\pqPeb{G}$ (resp. $\qPeb{G}$) denotes the set of all legal parallel (resp. sequential) reversible pebblings of $G$ then we have $\pqPeb{G} \subseteq \pPeb_G$ and $\qPeb{G} \subseteq \Peb_G$. Thus, any lower bounds  on the classical parallel pebbling cost of $G$ will immediately carry over to the reversible setting. However, upper bounds will not necessarily carry over since classical pebbling attacks may not be legal in the reversible pebbling game. Second, we observe that the following sequential reversible pebbling strategy works for any DAG $G=(V=[N],E)$. In the first $N$ rounds, pebble all nodes in topological order without deleting any pebbles. In the next $N-1$ rounds remove pebbles from all nodes (excluding $\sinks(G)$) in reverse topological order. More formally, assuming that $1,\ldots, N$ is a topological order and that node $N$ is the only sink node we have $P_i = [i]$ for each $i \leq N$ and $P_{N+j} = [N] \setminus [N-j,N-1]$ for each $j \leq N-1$. The pebbling requires $N$ pebbles and finishes in $t=2N-1$ rounds so the space-time cost is $2N^2-N$. We refer to the above sequential strategy as the na\"ive  reversible pebbling for a graph $G$.

\paragraph{Reversible Pebbling Attack on Line Graphs.}
We give a reversible pebbling attack on a line graph $L_N$ of size $N$ with the space-time cost $\O{N^{1+\frac{2}{\sqrt{\log N}}}}$. This can be achieved by generalizing Li and Vit\'{a}nyi's work \cite{LiVit96}. Li and Vit\'{a}nyi \cite{LiVit96} gave a reversible pebbling strategy on a line graph of size $N$ with space-time cost $\O{N^{\log 3}\log N}$ by translating ideas of Bennett \cite{Bennett89} into a reversible pebbling argument. Intuitively, if we define $N(k)$ using the recurrence relationship $N(k) = k + \sum_{j=0}^{k-1} N(j)$, solving to $N(k)=2^k-1$, then they show that the line graph with $N(k)$ nodes can be pebbled using space $S(k)= S(k-1)+1 = k$ and time $T(k) = 3T(k-1)+1 = \O{3^k}$ for a total space-time cost of $\O{k3^{k}} = \O{ (N(k))^{\log 3} \log N(k)}$. Their pebbling strategy works as follows: (1) recursively apply the pebbling strategy to place a pebble on node $N(k-1)$ using space at most $S(k-1)$ and time at most $T(k-1)$, (2) place a pebble on node $v_1=N(k-1)+1$, (3) recursively apply the strategy (in reverse) to clear any leftover pebbles from nodes $1$ to $N(k-1)$ in time $T(k-1)$ and (additional) space at most $S(k-1)$. We are left with $(k-1) + \sum_{j=1}^{k-2} N(j) = N(k-1)$ remaining nodes which will be handled recursively using time $T(k-1)$ and (additional) space $S(k-1)$. 

We observe that by increasing the space usage slightly we can decrease the pebbling time to obtain a superior space-time cost. We note that Bennett \cite{Bennett89} mentions a similar idea in his paper, but that this idea was not formalized as a reversible pebbling strategy either by Bennett \cite{Bennett89} or by Li and Vit\'{a}nyi \cite{LiVit96}. The key modification is as follows: we redefine $N(k) = ck + \sum_{j=0}^{k-1} c N(j)$ solving to $N(k)=\Theta\left( (c+1)^k\right)$. We can now recursively pebble a line graph with $N(k)$ nodes in sequential time $T(k) = (2c+2) T(k-1) + c = \O{(2c+2)^k}$ and space $S(k) = c+S(k-1) = ck$. Intuitively, the recursive pebbling strategy will begin by dropping pebbles on each of the nodes $N(k-1)+1, 2N(k-1)+2,...,cN(k-1)+c$ using space at most $S(k-1)+c$ and time $2c \cdot T(k-1)$. We are left with $c(k-1) + \sum_{j=0}^{k-2} c N(j) = N(k-1)$ remaining nodes which can then be handled recursively. Setting $c=2^k$, we have $k=\Theta(\sqrt{\log N(k)})$ yielding an upper bound of  $\O{N(k)^{1+(2+o(1))\frac{1}{\sqrt{\log N(k)}}}}$ on the sequential  space-time cost. 

We can obtain a minor improvement by exploiting parallelism to save time while increasing space usage slightly. In particular, our parallel strategy uses space $\O{c 2^k}$ and time $\O{(c+2)^k}$ with total space-time cost $\O{c (2c+4)^k}$. Setting $c+1 =2^k$ we have a slightly better upper bound $\O{N(k)^{1+\frac{2}{\sqrt{\log N(k)}}}}$ on the space-time cost. Further details can be found in \FullVersion{\appref{app:line}}{the full version}.

\ignore{

Intuitively, it recursively defines the sequence of the consecutive locations of nodes as $I(k)=I(k-1)\circ i_{k-1}\circ I(k-2)\circ i_{k-2}\circ \ldots \circ I_1\circ i_1 \circ I_0\circ i_0$ for $k>0$ and $I(0)=\{\}$, where for $j=0,1,\ldots,k$, $i_j$ denotes the node incident to $I(j)$. The reversible pebbling works recursively; it sequentially pebbles $I(j)$, pebbles $i_j$, and unpebbles $I(j)$ for $j=k-1,k-2,\ldots,0$. When unpebble $I(j)$, it also removes node $i_{j-1}$ to make sure we have no intermediate nodes pebbled at the final pebbling configuration. Let $N(k)$ be the size of $I(k)$. Then they showed that the space cost to pebble $I(k)$ is $\O{\log N(k)}$ and the time cost is $\O{N(k)^{\log_2 3}}$, which yields the space-time cost $\O{N(k)^{\log_2 3}\log N}$. We instead make $c$ copies of each subset and intermediate node, i.e., we define $I(k)=I(k-1)'\circ I(k-2)'\circ \ldots\circ I(0)'$ for $k>0$ and $I(0)=\{\}$, where $I(j)' := I(j)^{(1)}\circ i_j^{(1)}\circ I(j)^{(2)}\circ i_j^{(2)}\circ\ldots\circ I(j)^{(c)}\circ i_j^{(c)},$ where $A^{(\ell)}$ denotes the $\ell\th$ copy of $A$. The reversible pebbling works similarly here; it sequentially pebbles $I(j)'$ and remove pebbles except for $i_j^{(c)}$, which is the last node in $I(j)'$, and then pebbles $I(j-1)'$ for $j=k-1,\ldots,1$. Within each subset, it sequentially pebbles $I(j)^{(\ell)}$ and $i_j^{(\ell)}$, unpebbles $I(j)^{(\ell)}$, and move on to the next copy to pebble $I(j)^{(\ell+1)}$, and so on.
Then we show that the size of $I(k)$ becomes $N(k)=\Theta((c+1)^k)$, the space cost to pebble $I(k)$ is $\O{ck}$, and the time cost is $\O{(2c+2)^k}$. Setting $c=2^k$, we have $k=\Theta(\sqrt{\log N(k)})$ and it gives the space-time cost $\O{N(k)^{1+(2+o(1))\frac{1}{\sqrt{\log N(k)}}}}$. 
We can parallelize this strategy by removing and pebbling the consecutive blocks at the same time, which saves time but would need more space. 
In this case, the space cost to pebble $I(k)$ becomes $\O{c2^k}$, and the time cost becomes $\O{(c+2)^k}$. Setting $c+1=2^k$, we have a slightly better space-time cost $\O{N(k)^{1+\frac{2}{\sqrt{\log N(k)}}}}$. Further details can be found in \appref{app:line}.}

\paragraph{Generic Reversible Pebbling Attack on Depth-Reducible Graphs.} We give a generic reversible pebbling attack on any $(e,d)$-reducible DAG $G=(V=[N],E)$ with maximum indegree $2$. The space-time cost of our reversible pebbling attack is at most $\O{Ne + N d2^d}$. Thus, the attack will be superior to the na\"ive reversible pebbling strategy as long as $e = o(N)$ and $d2^d =o(N)$. We begin with a depth-reducing set $S \subseteq V$ of size $|S|\leq e$. Our reversible pebbling strategy will never remove pebbles from the set $S$ until all of the sink nodes in $G$ are pebbled and we are ready to remove pebbles from the remaining nodes. On each round $i \leq N$ we will place a new pebble on node $\{i\}$. To ensure that this step is legal, we consider the subgraph formed by all of node $i$'s ancestors in $G-S$. Since $G-S$ does not contain a directed path of length $d$ and each node has at most $2$ parents there are at most $2^d$ ancestors of node $i$ in $G-S$. Once again applying the observation that the depth of $G-S$ is at most $d$ we can start to repebble $i$'s ancestors in round $i-d-1$ to ensure that $i$'s immediate parents are pebbled by round $i-1$. After we place a pebble on node $i$ we can remove pebbles from $i$'s ancestors in $G-S$  over the next $d$ rounds. Since we only keep pebbles on the set $S$ and the ancestors of {\em up to} $2d$ nodes in $G-S$, the maximum space usage of this reversible pebbling strategy will be $\O{e+d 2^d}$. 

We apply the generic attack to Argon2i-A and Argon2i-B. In particular, we apply ideas from the previous work~\cite{ESP:AlwBlo17,TCC:BloZho17} to show that Argon2i-A (resp. Argon2i-B) graphs are $(e,d)$-reducible with $e=\O{N \log \log N/\sqrt{\log N}}$ and $d=\log N/\log \log N$ (resp. $e=\O{N/\sqrt[3]{\log N}}$ and $d=(\log N)/2$). This leads to reversible pebbling attacks with cost $\O{N^2 \log \log N/\sqrt{\log N}}$ and $\O{N^2/\sqrt[3]{\log N}}$) for Argon2i-A and Argon2i-B, respectively. An intriguing open question is whether or not these are the best reversible pebbling attacks for Argon2i-A and Argon2i-B? 

\paragraph{Reversible Pebbling Attack on DRSample.} We provide a general reversible pebbling attack on any DAG $G$ with the property that $G$ contains few  \emph{skip nodes} (defined below). Intuitively, given a DAG $G=(V,E)$ with $|V|=N$ and a parameter $b \geq 1$, we can imagine partitioning the nodes of $V$ into consecutive blocks $B_1=\{v_1,\ldots,v_b\},B_2=\{v_{b+1},\ldots,v_{2b}\},\ldots, B_{\lceil N/b \rceil}=\{v_{(\lceil N/b \rceil-1)b+1},\ldots,v_N\}$ such that we have $\lceil N/b \rceil$ blocks in total and each block contains exactly $b$ nodes (with the possible exception of the last block if $N/b$ is not an integer). We call a node $u$ in block $B_i$ a \emph{skip node} if $G$ contains a directed edge $(u,v)$ from $u$ to some node $v \in B_j$ with $j> i+1$ and we call the edge $(u,v)$ a \emph{skip edge}, i.e., the edge $(u,v)$ skips over the block $B_{i+1}$ entirely. 

We first observe that if the graph $G$ contained no skip edges then it would be trivial to transform a (parallel) reversible pebbling $P'$ of the line graph $L_{\lceil N/b\rceil}=(V',E')$ with space-time cost $\pqpeb_\Cspacetime(P')$  into a (parallel) reversible pebbling $P$ of $G$  with space-time cost $\O{b^2 \pqpeb_\Cspacetime(P') } $ (see \defref{def:complexity} for the definition of $\pqpeb_\Cspacetime(\cdot)$). In particular, placing a pebbling on node $v' \in V'$ of the line graph corresponds to $b$ rounds in which we pebble all nodes in block $B_{v'}$. Thus, the pebbling time increases by a factor of $\O{b}$, and the total space usage also increases by a factor $b$. Unfortunately, this strategy may result in an illegal reversible pebbling when $G$ contains skip edges. However, we can modify the above strategy to avoid removing pebbles on skip nodes which intuitively increases our space usage by $s$ --- the total number of skip nodes in the graph $G$. \FullVersion{The procedure $P=\trans(G,P',b)$ is formally described in \algref{alg:trans} in \appref{algorithms}, and an example for the reversible pebbling strategy can be found in in \figref{fig:metapebbling} in \appref{examples}}{The procedure $P=\trans(G,P',b)$ and an example for the reversible pebbling strategy are formally described in the full version}. As long as $s$ is sufficiently small, we obtain an efficient parallel reversible pebbling attack on $G$. In particular, given a reversible pebbling $P'$ of the line graph $L_{\lceil N/b\rceil}=(V',E')$ with space-time cost $\pqpeb_\Cspacetime(P')$  we can find a reversible pebbling $P$ of $G$ with space-time cost $\O{s N + b^2 \pqpeb_\Cspacetime(P') } $. Combining this observation with our efficient reversible pebbling attacks on the line graph we can see that the space-time costs will be at most $\O{s N + b^2 (N/b)^{1+\frac{2}{\sqrt{\log (N/b)}}}} $. For graphs like DRSample \cite{CCS:AlwBloHar17}, we can show that (whp) the number of skip nodes is at most $s= \O{\frac{N \log \log N}{\log N}}$ when we set the block size $b=\O{\frac{N}{\log^2 N}}$ leading to a reversible pebbling attack with space-time cost $\O{\frac{N^2 \log \log N}{\log N}}$.

\paragraph{Cumulative Cost for Reversible Pebblings: Depth-Reducing Reversible Pebbling Attacks.}
Alwen and Blocki~\cite{C:AlwBlo16} gave a non-reversible pebbling attack with reduced cumulative pebbling cost for any $(e,d)$-reducible DAG $G$. While their pebbling attack is non-reversible, we observe that almost all pebbling rounds respect the constraints of reversible pebbling. We then identify the few non-reversible rounds and how these steps can be patched to respect the additional constraints of reversible pebbling. See details in \secref{sec:revpeb}.

\subsection{Related Work}\seclab{sec:relatedwork}

\paragraph{Related Pebbling Games.} 
Prior work \cite{Bennett89,10.1007/3-540-45627-9_26,8715092} introduced a reversible pebbling game to capture restrictions imposed by the Quantum No-Deletion Theorem and analyze space-time tradeoffs in quantum computing. However, the pebbling game considered in these works is sequential and only allows for the addition/removal of one pebble in each round. Thus, the sequential reversible pebbling game is not suitable for analyzing the space-time cost of a quantum circuit evaluating $f_{G,H}$ since the circuit can evaluate $H$ multiple times in parallel. We note that there are several important subtleties that must be considered when extending the game to the parallel setting. 

More recently, Kornerup et al.~\cite{kornerup2021spooky} introduced a new (sequential) pebbling game called the {\em spooky pebble game} to model measurement-based deletion in quantum computation. Intuitively, measurement-based deletion allows for the conversion of some qubits into (cheaper) classical bits which can later be used to restore the quantum state. The spooky pebble game only allows for sequential computation and the cost model ignores classical storage. One disadvantage of instantiating a spooky pebbling attack as part of a quantum pre-image attack is that the final attack requires many intermediate measurements which introduces additional technical challenges, i.e., we need to ensure that {\em each and every } intermediate measurement does not disturb the state of the nearby qubits or the rest of the quantum computer~\cite{Divincenzo00thephysical}. By contrast, a pebbling attack in our parallel reversible pebbling game naturally corresponds to a quantum circuit which does not require any intermediate measurements and our cost model accounts for the total storage cost (classical + quantum). While Kornerup et al.~\cite{kornerup2021spooky} introduced a spooky pebbling attack on the line graph, we note this spooky pebbling strategy does not yield an efficient reversible pebbling attack in our model as their pebbling attack inherently relies on frequent intermediate measurements to reduce the number of qubits. 

\begin{remark}
One could always try to eliminate the intermediate measurements by applying the ``principle of deferred measurement" \cite{nielsen2002}. However, ``deferred measurement" increases the space and/or depth of a quantum circuit. For example, if the quantum circuit $C$ acts on $s$ qubits and performs $m$ intermediate measurements then we can obtain an equivalent quantum circuit $C'$  with no intermediate measurements with the caveat that $C'$ operates on $s'=s+\poly(m)$ qubits. The space blowup is especially high if $C$ makes many intermediate measurements, e.g., $s=\O{\log m}$. Fefferman and Remscrim~\cite{STOC:FefRem21} gave a space-efficient version of the transform, but their transform yields a large penalty in running time cost, i.e., the transform incurs a multiplicative $\poly(t2^s)$ overhead in the total running time $t$.

If we apply spooky pebbling in the context of Grover's search then the total number of intermediate measurements $m$ would be exponential, i.e., even if we have a quantum circuit $C_f$ evaluating a function $f:\{0,1\}^k\rightarrow\{0,1\}^k$ with just a single intermediate measurement, performing the full Grover's search to find a pre-image of $f$ would involve $m=\O{2^{k/2}}$ intermediate measurements and applying ``deferred measurement'' to the full Grover circuit would incur a massive time (or space) penalty. Thus, finding a quantum circuit $C_f$ which has reduced space-time cost and does not require any intermediate measurements would yield a more compelling quantum pre-image attack.
\end{remark}


\section{Parallel Reversible Pebbling Games}
\seclab{sec:quantum_pebbling}

The biggest difference between the classical and reversible pebbling games occurs when removing pebbles from a pebbling configuration. In a classical setting, we can always delete any pebbles in any point in time when they are no longer needed. On the other hand, in a reversible setting, this is not feasible by quantum no-cloning theorem. Since we can only free a pebble by querying a random oracle at the same input, we can observe that a pebble can be deleted only if we know all of its parents, i.e., all of its parents were previously pebbled. The following definition captures this property:

\begin{definition}[Parallel/Sequential Reversible Graph Pebbling]\deflab{def:quantum-pebbling}
Let $G=(V,E)$ be a DAG and let $T\subseteq V$ be a target set of nodes to be pebbled.  A \emph{pebbling configuration} (of $G$) at round $i$ is a subset $P_i\subseteq V$. Let $P=\left(P_0,\dots, P_t\right)$ be a sequence of pebbling configurations. Below are the following properties which define various aspects of reversible pebblings.
\begin{enumerate}
\item The pebbling should start with no pebbles ($P_0=\emptyset$) and end with pebbles on all of the target nodes i.e., $T \subseteq P_t$. 
\item A pebble can be added only if all of its parents were pebbled at the end of the previous pebbling round, i.e., $\forall i\in[t] : x\in (P_i\setminus P_{i-1}) \Rightarrow \parents(x,G)\subseteq P_{i-1}$.
\item \emph{(Quantum No-Deletion Property)} A pebble can be deleted only if all of its parents were pebbled at the end of the previous pebbling round, i.e., $\forall i\in[t] : x\in (P_{i-1}\setminus P_i) \Rightarrow \parents(x,G)\subseteq P_{i-1}$.
\item \emph{(Quantum Reversibility)} If a pebble was required to generate new pebbles (or remove pebbles), then we must keep the corresponding pebble around, i.e., $\forall i\in[t]:x\in \parents(P_i\setminus P_{i-1},G)\cup\parents(P_{i-1}\setminus P_i,G)\Rightarrow x\in P_i$.
\item \emph{(Remove Excess Pebbles)} We also consider an optional constraint that $P_t = T$. If a pebbling does not satisfy this optional constraint we call it a relaxed pebbling.  
\item (Sequential pebbling only) At most one pebble is added or removed in each round, i.e., $\forall i\in[t] : \left|(P_i \cup P_{i-1}) \setminus (P_i \cap P_{i-1}) \right| \leq 1$.

\end{enumerate}

Now we give pebbling definitions with respect to the above properties. 
\begin{itemize}
    \item A \emph{legal parallel reversible pebbling} of $T$ is a sequence $P=(P_0,\ldots,P_t)$ of pebbling configurations of $G$ where $P_0=\varnothing$ and which satisfies conditions (1), (2), (3), (4) and (5) above. If our pebbling additionally satisfies condition (6) then we say that it is a sequential pebbling. Similarly, if our pebbling does not satisfy condition (5) then we call our pebbling strategy a relaxed pebbling.  
      \item A \emph{legal reversible pebbling sequence} is a sequence of pebbling configurations $\left(P_0,\dots, P_t\right)$ which satisfies properties (2) and (3) and (4) without requiring $P_0 = \{\}$.
    \item A \emph{legal (non-reversible) pebbling sequence} is a sequence of pebbling configurations $\left(P_0,\dots, P_t\right)$ satisfying condition (2).
  
\end{itemize}

We denote with $\qPeb{G,T}$ and $\pqPeb{G,T}$ the set of all legal sequential and parallel reversible pebblings of $G$ with a target set $T$, respectively. We denote with $\rqPeb{G,T}$ and $\rpqPeb{G,T}$ the set of all legal \emph{relaxed} sequential and parallel reversible pebblings of $G$ with target set $T$, respectively. Note that we have $\qPeb{G,T}\subseteq\pqPeb{G,T}$ and $\rqPeb{G,T} \subseteq \rpqPeb{G,T}$. We will mostly be interested in the case where $T=\sinks(G)$ in which case we simply write $\qPeb{G}$ and $\pqPeb{G}$ or $\rqPeb{G}$ and $\rpqPeb{G}$, respectively.


\end{definition}

\begin{remark}\remlab{grover} We first note that from any parallel relaxed reversible pebbling of $G$ we can obtain a quantum circuit $C_{G,H}$ which computes $f_{G,H}$. If our pebbling is not relaxed then the circuit $C_{G,H}$ will map the basis state $|x,y,z\rangle$ to the new state $|x, y \oplus f_{G,H}(x), z \rangle$ with no ancilla bits although this property is not necessary for Grover's search.  Including the requirement that a reversible pebbling eliminates excess pebbles makes it easier to apply the pebbling attack as a recursive subroutine. Thus, in this paper, we will focus on finding non-relaxed reversible pebbling attacks. We also note that the space-time cost of a relaxed/non-relaxed reversible pebbling is not fundamentally different. In particular, if $(P_1,\ldots,P_t)$ is a relaxed pebbling where $P_t = T$ contains the final sink node $N$, then $(P_1,\ldots,P_t,P_{t-1}\cup T,\ldots,P_1\cup T, T)$ is a legal and complete (non-relaxed) reversible pebbling of $G$. The running time increases by a multiplicative factor of $2$ and the space increases by an additive factor of $|T| \leq |P_t|$ where $T$ is the target set. In particular, the overall space-time costs increase by a multiplicative factor of $4$ {\em at most}. In the remainder of the paper, when we write ``legal reversible pebbling" we assume that the pebbling is parallel and non-relaxed by default. 
\end{remark}

\begin{definition}[Reversible Pebbling Complexity]\deflab{def:complexity}
Given a DAG $G=(V,E)$, we essentially use the same definitions for the reversible pebbling complexity as defined in the previous literature \cite{STOC:AlwSer15,EC:AlwBloPie17,EC:AlwBloPie18}. That is, the standard notion of \emph{time, space, space-time} and \emph{cumulative pebbling complexity (CC)} of a \emph{reversible} pebbling $P=\{P_0,\ldots,P_t\}\in\pqPeb{G}$ are also defined to be:
\begin{itemize}
\item (time complexity) $\pqpeb_\Ctime(P) = t$,
\item (space complexity) $\pqpeb_\Cspace(P) = \max_{i\in[t]}|P_i|$,
\item (space-time complexity) $\pqpeb_\Cspacetime(P)=\pqpeb_\Ctime(P)\cdot\pqpeb_\Cspace(P)$, and
\item (cumulative pebbling complexity) $\pqpeb_\Ccc(P)=\sum_{i\in[t]}|P_i|$.
\end{itemize}
For $\alpha\in\Cevery$ and a target set $T\subseteq V$, the \emph{parallel} reversible pebbling complexities of $G$ are defined as
\[ \pqpeb_\alpha(G,T)=\min_{P\in\pqPeb{G,T}}\pqpeb_\alpha(P). \]
When $T=\sinks(G)$ we simplify notation and write $\pqpeb_\alpha(G)$.

We define the \emph{time, space, space-time} and \emph{cumulative pebbling complexity} of a \emph{sequential} reversible pebbling $P=\{P_0,\ldots,P_t\}\in\qPeb{G}$ in a similar manner: $\qpeb_\Ctime(P) = t$, $\qpeb_\Cspace(P) = \max_{i\in[t]}|P_i|$, $\qpeb_\Cspacetime(P)=\qpeb_\Ctime(P)\cdot\qpeb_\Cspace(P)$, and $\qpeb_\Ccc(P)=\sum_{i\in[t]}|P_i|$. Similarly, for $\alpha\in\Cevery$ and a target set $T\subseteq V$, the \emph{sequential} reversible pebbling complexities of $G$ are defined as $\qpeb_\alpha(G,T)=\min_{P\in\qPeb{G,T}}\qpeb_\alpha(P)$. 
When $T=\sinks(G)$ we simplify notation as well and write $\qpeb_\alpha(G)$.
\end{definition}

When compared to the definition of a \emph{classical} pebbling, we can observe that a reversible pebbling has more restrictions, i.e., it only allows us to have pebbles exactly on the target nodes at the end of the pebbling steps, and it further requires quantum no-deletion property and quantum reversibility. 
This implies that any legal reversible pebblings are also legal classical pebblings, i.e.,  $\pPeb_{G,T} \subseteq \pqPeb{G,T}$ (resp. $\Peb_{G,T} \subseteq \qPeb{G,T}$). This implies that for any graph $G$, target set $T$ and cost metric $\alpha\in\{s,t,st,cc\}$, we have $\ppeb_\alpha(G,T)\leq \pqpeb_\alpha(G,T)$ (resp. $\peb_\alpha(G,T)\leq \qpeb_\alpha(G,T)$) for a DAG $G=(V,E)$ and a target set $T\subseteq V$, where $\ppeb_\alpha(G,T)$ (resp. $\peb_\alpha(G,T)$) denotes the parallel (resp. sequential) classical pebbling complexities which are defined essentially the same as in \defref{def:complexity} with a \emph{classical} pebbling $P=\{P_0,\ldots,P_t\}\in\pPeb_G$ (resp. $\Peb_G$). This means that any lower bound on the classical pebbling complexity of a graph $G$ immediately carries over to the reversible setting and an upper bound (attack) on the reversible pebbling cost immediately carries over to the setting classical pebbling.

In the context of quantum pre-image attacks, parallel space-time costs are arguably the most relevant metric. In particular, the depth of the full Grover circuit scales with the number of queries to our quantum circuit $C_{G,H}$ for $f_{G,H}$ multiplied by the number of pebbling rounds for $G$. Similarly, the width of the full Grover circuit will essentially be given by the space usage of our pebbling. Thus, the space-time of Grover's algorithm will scale directly with $\pqpeb_\Cspace(P)$. The cumulative pebbling complexity would still be relevant in settings where we are running multiple instances of Grover's algorithm in parallel and can amortize space usage over multiple inputs. In this paper, we primarily focus on analyzing reversible space-time costs, as this would likely be the most relevant metric in practice. However, cumulative pebbling complexity still can be worthwhile to study and we provide some initial results in this direction. 
\allowdisplaybreaks
\section{Reversible Pebbling Attacks and Applications on iMHFs}
\seclab{sec:attack}

\subsection{Warmup: Parallel Reversible Pebbling Attack on a Line Graph}
\seclab{sec:attack:linegraph}

We first consider two widely deployed hash functions, PBKDF2 \cite{rfc2898} and BCRYPT \cite{USENIX:ProMaz99}, as motivating examples for analyzing a line graph. Basically, they are constructed by hash iterations so they can be modeled as a line graph when simplified.  Hence, the pebbling analysis of a line graph tells us about the costs of PBKDF2 and BCRYPT. Although there has been some effort to replace such password-hash functions with memory-hard functions such as Argon2 or SCRYPT~\cite{SP:BloHarZho18}, PBKDF2 and BCRYPT are still commonly used by a number of organizations.  Thus, it is still important to understand the costs of an offline brute-force attack on passwords protected by functions like PBKDF2 and BCRYPT. In fact, NIST recommends using memory-hard functions for password hashing \cite{NIST17} but they still allow PBKDF2 and BCRYPT when used with long enough hash iterations. 
Hence, there is still value to analyze the quantum resistance of these functions. Our reversible pebbling attack on DRSample relies on efficient pebbling strategies for line graphs as a subroutine providing further motivation to understand the reversible pebbling costs of a line graph. 

As we illustrated in \secref{sec:overview}, we give a (sequential/parallel) reversible pebbling strategy for a line graph $L_N$ using recursion. It can be done by recursively define the sequence of consecutive locations $I(k)$ as $I(k) = I(k-1)'\circ I(k-2)'\circ \ldots\circ I(0)'$ for $k>0$ and $I(0)=\{\}$, where for $0\leq j<k$, $I(j)'$ is defined to be a concatenation of $c$ copies of $I(j)$ and $i_j$ (which is an incident node to $I(j)$), i.e., $I(j)' := I(j)^{(1)}\circ i_j^{(1)}\circ I(j)^{(2)}\circ i_j^{(2)}\circ\ldots\circ I(j)^{(c)}\circ i_j^{(c)},$ where $A^{(\ell)}$ denotes the $\ell\th$ copy of $A$. Intuitively, we can sequentially pebble $I(k)$ by pebbling $I(k-1)',I(k-2)',\ldots,I(0)'$. Here, pebbling $I(j)'$ means that we pebble $I(j)^{(\ell)}$, $i_j^{(\ell)}$, and unpebble $I(j)^{(\ell)}$, and we move on to the next copy to pebble $I(j)^{(\ell+1)}$. We can parallelize this strategy by removing and adding pebbles on the consecutive copies at the same time, which requires more space usage but saves time. Here, we only state the space-time cost of our reversible pebbling strategy on a line graph in \thmref{thm:line}. Details of our pebbling strategy can be found in \FullVersion{\appref{app:line}}{the full version}.

\begin{theorem}\thmlab{thm:line}
Let $L_N$ be a line graph of size $N$. Then we have $\qpeb_\Cspacetime(L_N)=\O{N^{1+(2+o(1))\frac{1}{\sqrt{\log N}}}}$ and $\pqpeb_\Cspacetime(L_N)=\O{N^{1+\frac{2}{\sqrt{\log N}}}}$.
\end{theorem}
\FullVersion{\begin{proof}
The proof directly comes from \lemref{lem:line} in \appref{app:line}.\qed
\end{proof}}{The proof of \thmref{thm:line} can be found in the full version.}

\subsection{Reversible Pebbling Attacks on $(e,d)$-reducible DAGs}
\seclab{sec:attack:reducible}

In this section, we introduce another type of reversible pebbling attack on $(e,d)$-reducible DAGs with depth-reducing sets with $d$ very small. \FullVersion{Recall that a DAG $G=(V,E)$ is $(e,d)$-reducible if there exists a subset $S\subseteq V$ with $|S|\leq e$ such that the subgraph $G-S$ does not contain a path of length $d$. Here, we call such subset $S$ a \emph{depth-reducing set}. }{}In this paper, we only consider DAGs with constant indegree, and especially the current state-of-the-art constructions of iMHFs have indegree $2$. Therefore, we will assume that $\indeg(G)=2$ for the DAGs that we consider.

Since the graph has indegree $2$, if we find a depth-reducing set $S$ such that $G-S$ has depth $d$, then we observe that $|\ancestors(v,G-S)|\leq 2^d$ for any node $v$ in $G-S$. If $d$ is small, i.e., $d\ll\log N$, then $2^d\ll N$ and we can expect that the space-time cost for pebbling such $(e,d)$-reducible DAG becomes $o(N^2)$. More precisely, we start with giving a regular pebbling strategy (without quantum restrictions) for such DAGs.

\paragraph{Classical Black Pebbling Strategy.} 
We begin by giving a classical pebbling strategy with small space-time complexity. Note that prior pebbling strategies focused exclusively on minimizing cumulative pebbling cost, but the pebbling attacks of Alwen and Blocki \cite{C:AlwBlo16}\footnote{If $G$ is $(e,d)$-reducible then Alwen and Blocki \cite{C:AlwBlo16} showed that $\displaystyle\ppeb_{cc}(G)\leq\min_{g\geq d}\left(eN+gN\cdot\indeg(G)+\frac{N^2d}{g}\right)\allowbreak =o(N^2)$.} for $(e,d)$-reducible graphs still have the space-time cost $\Omega(N^2)$. 

We first introduce the following helpful notation. For nodes $x$ and $y$ in a DAG $G=(V,E)$, let $\lpath_G(x,y)$ denote the number of nodes in the longest path from $x$ to $y$ in $G$. Then for a node $w\in V$, a depth-reducing set $S\subseteq V$, and a positive integer $i\in\mathbb{Z}_{> 0}$, we first define a set $A_{w,S,i}$ which consists of the nodes $v$ where the longest directed path from $v$ to $w$ in $G-S_{\leq w-1}$ has length $i$, i.e., it contains exactly $i$ nodes.
\[ A_{w,S,i} \coloneqq \left\{v:\lpath_{G-S_{\leq w-1}}(v,w)= i\right\}.\]
It is trivial by definition that for any $v\in V$, $A_{v,S,1}=\{v\}$.

Let $G=(V=[N],E)$ be an $(e,d)$-reducible DAG. We observe that $\depth(G_{\leq k}-S_{\leq k})\leq d$ is still true for any $k\leq N$. At round $k$, we have always ensured that we have pebbles on the set $S_{\leq k}$ and on $\{k\}$ itself. Further, at round $k$, we can look $d$ steps into the future so that at round $k+d$ we can pebble node $k+d$ without delay. Hence, we start to repebble $\ancestors(k+d,G-S)$ in this round and because $\depth(G_{\leq k}-S_{\leq k})\leq d$ we are guaranteed to finish within $d$ rounds --- just in time to pebble node $k+d$. Taken together, in round $k$, we have pebbles on $\{k\}$, $S_{\leq k}$, and $\ancestors(k+i,G-S)$ for all $i\leq d$. More precisely, for $v\in V$, let $P_v = S_{\leq v} \cup \left(\bigcup_{j=1}^d\bigcup_{i=j}^d A_{v-1+j,S,i}\right)$. Since each ancestor graph has size at most $2^d$ and there are at most $d$ of them, we observe that the total number of pebbles in each round is at most $1+|S_{\leq k}|+\sum_{i=1}^d |\ancestors(k+i,G-S)| \leq 1+e+d2^d$. Hence, we have that $\ppeb_\Cspacetime(G) \leq N(1+e+d2^d)$.

\paragraph{Reversible Pebbling Strategy.} While the above strategy works in the classical setting it will need to be tweaked to obtain a legal reversible pebbling. In particular, after node $k+d$ is pebbled we cannot immediately remove pebbles from all nodes in $\ancestors(k+d,G-S)$ because this would violate our quantum reversibility property. Instead, we can reverse the process and unpebble nodes in  $\ancestors(k+d,G-S)$ over the next $G-S$ rounds --- with the possible exception of nodes $v \in  \ancestors(k+d,G-S)$ which are part of $ \ancestors(k+d+j,G-S)$ and are still required for some future node $k+d+j$. Thus, if a DAG $G$ is $(e,d)$-reducible we can establish the following result.

\begin{theorem}\thmlab{thm:ed-reducible}
Let $G=(V=[N],E)$ be an $(e,d)$-reducible DAG. Then $\pqpeb_\Cspacetime(G)=\O{Ne+Nd2^d}$.
\end{theorem}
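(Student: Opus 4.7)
The plan is to lift the classical forward-looking strategy sketched just above into a legal \emph{quantum} pebbling. Fix a depth-reducing set $S$ with $|S|\leq e$ and $\depth(G-S)\leq d$, and let $\alpha = 2d+1$. I would construct $P=(P_0,\ldots,P_t)$ in two phases: a forward phase of $\O{N}$ rounds ending with pebbles on $\sinks(G)\cup S$ (together with transient ancestor pebbles near the current frontier), and a reverse cleanup phase of $\O{N}$ rounds that strips $S$ and all auxiliary pebbles, leaving $P_t = \sinks(G)$. The core invariant I would maintain during the forward phase is that, near round $\alpha k$, we have $\{k\}\cup S_{\leq k}\subseteq P_{\alpha k}$ together with partial ancestor-sets for up to $d$ future targets, organized as a pipeline so that the parents of node $k+1$ are exactly assembled in $P_{\alpha(k+1)-1}$.

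Building each ancestor set $\ancestors(j, G-S)$ takes $d$ rounds: we add one layer at a time in topological order within $G-S$. This is legal for rule (2) of \defref{def:quantum-pebbling} because the parents of a newly added node either lie in $S$ (already pebbled) or in a strictly earlier layer pebbled in the previous round. After node $j$ itself is pebbled we reverse the process, removing the nodes of $\ancestors(j, G-S)\setminus(S\cup\{j\})$ in reverse topological order over $d$ rounds. This is legal for rule (3) because at the moment we delete any node $v$, its parents (which were added to $P$ before $v$ during the build) are still resident in the current configuration, having not yet been scheduled for deletion. Rule (4) is satisfied because a node acts as an input/output only in a round where it is a parent of the node being added or removed, and such parents are explicitly kept in $P_i$ by the schedule.

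The one subtlety I foresee, and what I expect to be the main obstacle when writing the formal proof, is the overlap between $\ancestors(j, G-S)$ and $\ancestors(j', G-S)$ for distinct future targets $j,j'$: a node slated for tear-down after the pebbling of $j$ may simultaneously be needed for the build-up toward $j'$. I would resolve this by declining to delete any pebble that is scheduled to reappear within the current sliding window of $d$ forward targets. Since the space bound below already pays for having $d$ simultaneous ancestor sets resident, this postponement is charged for free; however, one must verify carefully that postponing a deletion does not retroactively invalidate rule (4) in any intervening round.

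For the cost accounting: at every round the configuration is contained in $\{k\}\cup S_{\leq N}\cup\bigcup_{i=1}^{d}\ancestors(k+i, G-S)$. Because $\indeg(G)=2$ and $\depth(G-S)\leq d$, each ancestor set has size at most $2^d$, so $\pqpeb_\Cspace(P)\leq 1+e+d\cdot 2^d = \O{e+d2^d}$. The forward phase takes $\O{N}$ rounds and the cleanup phase, implemented by running the forward schedule in reverse on $G-\sinks(G)$ starting from the terminal configuration of phase one, also takes $\O{N}$ rounds. Hence $\pqpeb_\Ctime(P)=\O{N}$ and the claimed bound $\pqpeb_\Cspacetime(G)\leq \pqpeb_\Ctime(P)\cdot\pqpeb_\Cspace(P) = \O{Ne+Nd2^d}$ follows.
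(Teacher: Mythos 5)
Your overall strategy is the same as the paper's: keep the depth-reducing set $S$ pebbled throughout a forward sweep, build and tear down the ancestor sets $\ancestors(j,G-S)$ layer by layer (legal because every parent of a node at layer $i$ lies in $S$ or at layer $i+1$, cf.\ \claimref{claim:AwSi}), refuse to delete pebbles still needed by a nearby future target, and obtain the cleanup phase by running the forward schedule backwards (the paper sets $P_v=P_{2N-v}\cup\{N\}$ for $N<v\leq 2N$, exploiting the time-symmetry of conditions (2)--(4) of \defref{def:quantum-pebbling}).

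The one genuine problem is your time accounting. You claim the forward phase takes $\O{N}$ rounds, but your stated invariant places node $k$ on the graph at round $\alpha k$ with $\alpha=2d+1$, i.e., consecutive nodes of $G$ are pebbled $2d+1$ rounds apart. Taken literally this schedule is fully serialized: the forward phase lasts $(2d+1)N$ rounds and the space-time bound degrades to $\O{dNe+Nd^2 2^d}$, which is not the claimed bound and would in particular cost an extra $\log N/\log\log N$ factor in the Argon2i corollaries. To actually achieve $\O{N}$ rounds you must place a \emph{new node of $G$ in every single round}, with all $d$ ancestor-set builds (for targets $k+1,\ldots,k+d$) and all $d$ teardowns (for the $d$ most recent targets) each advancing by one layer simultaneously in the same round; this is exactly what the paper's explicit configuration $P_v=S_{\leq v}\cup B_v$ with $B_v=\bigcup_{j=1}^{d+1}\bigcup_{i=j}^{d+1}\left(A_{v+1-j,S,i}\cup A_{v-1+j,S,i}\right)$ encodes, and writing the configuration in closed form is what lets the legality of every transition be verified in one shot via \claimref{claim:Pv}. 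Relatedly, your space bound $\{k\}\cup S\cup\bigcup_{i=1}^{d}\ancestors(k+i,G-S)$ omits the partially-torn-down ancestor sets of the recent targets, so the stated containment is false as written; including them only doubles the $d2^d$ term and the asymptotics survive. With the round index fixed to $k$ rather than $\alpha k$ and the teardown sets added to the space bound, your argument becomes the paper's proof.
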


We will give the proof of \thmref{thm:ed-reducible} later in the subsection. To prove \thmref{thm:ed-reducible}, we first would need to give a legal reversible pebbling for an $(e,d)$-reducible DAG $G$. \lemref{lem:edlegal} provides the desired reversible pebbling for $G$. \FullVersion{}{The proof of \lemref{lem:edlegal} can be found in the full version.}

\begin{lemma}\lemlab{lem:edlegal}
Let $G=(V=[N],E)$ be an $(e,d)$-reducible DAG and let $S\subseteq V$ be a depth-reducing set. Define
\[ B_v := \bigcup_{j=1}^{d+1}\bigcup_{i=j}^{d+1} \left( A_{v+1-j,S,i}\cup A_{v-1+j,S,i} \right),\]
for $v\in V$. Then $P=(P_0,P_1,\ldots,P_{2N})$, where each pebbling configuration is defined by
\begin{itemize}
\item $P_0=\varnothing$,
\item for $v\in[N]$, $P_v := S_{\leq v} \cup B_v$, and
\item for $N < v \leq 2N$, $P_v := P_{2N-v}\cup\{N\}$,
\end{itemize}
is a legal parallel reversible pebbling for $G$.
\end{lemma}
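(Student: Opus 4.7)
The plan is to verify the four conditions of \defref{quantum-pebbling} for the proposed pebbling $P=(P_0,\ldots,P_{2N})$. Condition (1) is immediate: $P_{2N}=P_0\cup\{N\}=\{N\}=\sinks(G)$ under the standard iMHF convention that $N$ is the unique sink. The substantive work is checking conditions (2), (3), and (4) for each transition $P_{v-1}\to P_v$, and I would split this into the \emph{forward half} $v\in[N]$ and the \emph{backward half} $N<v\leq 2N$.

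For the forward half, the key technical tool is a \emph{depth-vs-position bound}: whenever $u\in A_{w,S,i}$, the witnessing $i$-node topologically-ordered path from $u$ to $w$ forces $u\leq w-i+1$. Applied with $w=v-1+j$ and $i=j$ on the forward window this gives $u\leq v$; applied with $w=v+1-j$ and $i=j$ on the backward window it gives $u\leq v-2j+2$. I would then characterize newly added nodes as lying in the outermost shells $A_{v-1+j,S,j}$ (forward) or $A_{v+1-j,S,j}$ (backward), and newly removed nodes as lying in the innermost shells of $B_{v-1}$ whose window has just shifted past them. For condition (2), any newly added $u$ has its $S$-parents in $S_{\leq v-1}$ (since each is $<u\leq v$) and its non-$S$ parents at depth $i+1$, lying in $A_{w,S,i+1}$, which sits in $B_{v-1}$ via the shifted index $j'=j+1$. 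Condition (3) is handled by an analogous argument on the removed side, using the $j'=1$ backward-window entry of $B_{v-1}$ to capture the depth-$(i+1)$ ancestors of $v-1$. For condition (4), every parent of a changed node is either in $S_{\leq v-1}\subseteq S_{\leq v}$ or is a depth-$(i+1)$ ancestor that belongs to \emph{both} $B_{v-1}$ and $B_v$ via appropriate shifted indices, so the quantum reversibility requirement is satisfied.

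For the backward half, I plan to exploit symmetry. Conditions (2)--(4) together assert that $\parents(P_{v-1}\triangle P_v,G)\subseteq P_{v-1}\cap P_v$, a joint constraint that is invariant under swapping $P_{v-1}$ with $P_v$. Because $P_v=P_{2N-v}\cup\{N\}$ for $v\geq N$, each backward transition $P_v\to P_{v+1}$ is a time-reversed forward transition with the sink $N$ pinned throughout, and is therefore legal by the symmetry. Pinning $\{N\}$ is harmless: $N\in P_N$ already (via $A_{N,S,1}\subseteq B_N$) and $N$ is a sink with no outgoing edges whose parents would need checking.

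The main obstacle will be the boundary cases of the index-chase. The $j=1$ case, where $u=v$ is the newly added node, requires a separate verification that $v$'s non-$S$ parents lie in $A_{v,S,2}\subseteq B_{v-1}$ via the forward index $j'=2$. The $j=d+1$ case requires observing that the would-be depth-$(d+2)$ ancestors of $w$ do not exist, since any directed path ending at $w$ in $G-S_{\leq w-1}$ consists of at most $d$ nodes drawn from $G-S$ followed by $w$ itself, so $\depth(G-S_{\leq w-1})\leq d+1$ and $A_{w,S,d+2}=\varnothing$; only the $S$-parent check remains and is immediate from $u\leq v$. Symmetric arguments handle the analogous backward-window boundaries and establish all transitions are legal.
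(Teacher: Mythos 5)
Your proposal is correct and follows essentially the same route as the paper: the paper likewise reduces conditions (2)--(4) on the forward half to the single swap-symmetric containment $\parents(P_v\setminus P_{v-1},G)\cup\parents(P_{v-1}\setminus P_v,G)\subseteq P_{v-1}\cap P_v$ (its \claimref{claim:Pv}), proves it through the shell fact $\parents(A_{w,S,i},G)\setminus S\subseteq A_{w,S,i+1}$ together with the vanishing of $A_{w,S,d+2}$ from $(e,d)$-reducibility, and handles $N<v\leq 2N$ by exactly your time-reversal substitution $w=2N-v+1$ with the sink pinned. The only cosmetic difference is that you make the membership of $S$-parents in $S_{\leq v-1}$ explicit via the position bound $u\leq w-i+1$, which the paper leaves implicit.
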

\FullVersion{Before proving \lemref{lem:edlegal}, we observe the following key claim. The proof of \claimref{claim:Pv} can be found in \appref{missing}.

\newcommand{\claimPv}{For $v\in[N]$, $\parents(P_v\setminus P_{v-1},G)\cup\parents(P_{v-1}\setminus P_v,G)\subseteq P_{v-1}\cap P_v$.}
\begin{claim}\claimlab{claim:Pv}
\claimPv
\end{claim}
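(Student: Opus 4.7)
My plan is to establish \claimref{Pv} by first distilling the sets $B_v$ into a cleaner symmetric form, and then running a short case analysis on additions and removals using a structural ``lift'' lemma on parents in $A_{w,S,i}$. The distillation observation I would prove is that $A_{w,S,i} \subseteq B_v$ if and only if $w \in \{v-d,\ldots,v+d\}$ and $|w-v|+1 \leq i \leq d+1$; this is obtained by folding the ``past'' index $w = v+1-j$ and ``future'' index $w = v-1+j$ into a single inequality centered at $v$. Comparing the resulting criteria for $B_v$ and $B_{v-1}$ immediately yields that $A_{w,S,i} \subseteq B_v \setminus B_{v-1}$ exactly when $(w,i) = (v+k,\, k+1)$ for some $k \in \{0,\ldots,d\}$, and $A_{w,S,i} \subseteq B_{v-1}\setminus B_v$ exactly when $(w,i) = (v-1-k,\, k+1)$ for some $k \in \{0,\ldots,d\}$.

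The key structural sub-claim I would prove is: if $x \in A_{w,S,i}$ and $u$ is any parent of $x$ in $G$, then either $u \in S_{\leq w-1}$ or $u \in A_{w,S,i'}$ for some $i' \geq i+1$. This is immediate from extending the witnessing longest path from $x$ to $w$ one more step back through $u$, which produces an $(i+1)$-node path in $G - S_{\leq w-1}$ whenever $u$ survives in that subgraph. Combined with the topological bound $x \leq w - i + 1$ (since the $i$-node path must be strictly increasing and end at $w$), every parent of a node $x \in A_{v+k,S,k+1}$ lies in $\{1,\ldots,v-1\}$, and similarly every parent of a node in $A_{v-1-k,S,k+1}$ lies in $\{1,\ldots,v-2\}$.

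The claim is then almost mechanical. For an addition $x \in P_v \setminus P_{v-1}$, either $x = v$ (whose parents are $< v$, so $S$-parents lie in $S_{\leq v-1}$ and non-$S$ parents lie in $A_{v,S,2}$, which by the symmetric criterion is in both $B_{v-1}$ and $B_v$), or $x$ lies in $A_{v+k,S,k+1}$ for some $k \in \{1,\ldots,d\}$. Applying the sub-claim to such $x$, every $S$-parent is in $S_{\leq v-1}$, and every non-$S$ parent lies in $A_{v+k,S,i'}$ with $i' \geq k+2$. For $k<d$, the pair $(v+k, i')$ satisfies both the $B_{v-1}$ and $B_v$ criteria, placing the parent in $P_{v-1}\cap P_v$; for $k=d$, the depth bound $i' \leq d+1$ contradicts $i' \geq d+2$, so the parent must in fact be in $S$. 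A mirror-image argument handles removals $x \in A_{v-1-k,S,k+1}$.

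The main obstacle will be the careful bookkeeping at the extremes $k=d$ (the deepest layer) and $x=v$ (the boundary between past and future terms), where the depth saturation of $G-S$ is exactly what rules out otherwise problematic non-$S$ parents and ensures all parents are accounted for in $P_{v-1}\cap P_v$. A secondary subtlety is that a single node may appear in multiple $A_{w,S,i}$ sets; however, because we only need to verify that parents of \emph{some} witness land in $P_{v-1}\cap P_v$, the symmetric criterion handles this uniformly without further cases.
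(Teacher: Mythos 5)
Your proposal is correct and follows essentially the same route as the paper's proof: over\hyp{}approximate $P_v\setminus P_{v-1}$ and $P_{v-1}\setminus P_v$ by the ``diagonal'' index pairs $A_{v+k,S,k+1}$ and $A_{v-1-k,S,k+1}$, lift parents to a higher path-length index within the same $A_{w,S,\cdot}$ family, absorb $S$-parents into $S_{\leq v-1}\subseteq P_{v-1}\cap P_v$, and invoke $(e,d)$-reducibility to make the extremal index $d+2$ vanish. The one substantive difference is your lift lemma: you only conclude that a non-$S$ parent of $x\in A_{w,S,i}$ lies in $A_{w,S,i'}$ for \emph{some} $i'\geq i+1$, whereas the paper's \claimref{AwSi} asserts membership in $A_{w,S,i+1}$ exactly; your weaker statement is the one that is actually provable, since the longest path from the parent to $w$ can strictly exceed one plus the longest path from $x$ to $w$ via an alternative route. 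Your accompanying observation that the index criterion for $B_{v-1}\cap B_v$ is upward-closed in $i$ (up to the depth cap $d+1$) is precisely what keeps the conclusion intact under this weakening, so your write-up is, if anything, a small improvement in rigor over the paper's.
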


\begin{proofof}{\lemref{lem:edlegal}}
We want to show that it satisfies conditions in \defref{def:quantum-pebbling}.\\[5pt]
\noindent \ul{Conditions (1) and (5): $P_{2N}=\{N\}$.}
\begin{itemize}
\item It is clear that $P_{2N} = P_0\cup\{N\}=\{N\}$ which is the only target node of the pebbling game.
\end{itemize}

\noindent \ul{Condition (2): $\forall v\in[2N] : x\in (P_v\setminus P_{v-1}) \Rightarrow \parents(x,G)\subseteq P_{v-1}$.}
\begin{itemize}
\item If $v\in [N]$, by \claimref{claim:Pv}, we have $\parents(P_v\setminus P_{v-1})\subseteq P_{v-1}\cap P_v \subseteq P_{v-1}$.
\item If $N<v\leq 2N$, we have $P_v\setminus P_{v-1} = (P_{2N-v}\cup\{N\})\setminus(P_{2N-v+1}\cup\{N\}) = P_{2N-v}\setminus P_{2N-v+1}$. Let $w=2N-v+1$, then we have that $w\in[N]$ and $P_v\setminus P_{v-1} = P_{w-1}\setminus P_w$. Now we want to show that $\parents(P_{w-1}\setminus P_w,G)\subseteq P_{v-1} = P_w\cup\{N\}$, which also holds by \claimref{claim:Pv}.
\end{itemize}

\noindent \ul{Condition (3): $\forall v\in[2N] : x\in (P_{v-1}\setminus P_v) \Rightarrow \parents(x,G)\subseteq P_{v-1}$.}
\begin{itemize}
\item If $v\in [N]$, by \claimref{claim:Pv}, we have $\parents(P_{v-1}\setminus P_v)\subseteq P_{v-1}\cap P_v \subseteq P_{v-1}$.
\item If $N<v\leq 2N$, we have $P_{v-1}\setminus P_v = (P_{2N-v+1}\cup\{N\})\setminus(P_{2N-v}\cup\{N\}) = P_{2N-v+1}\setminus P_{2N-v}$. Then similarly, letting $w=2N-v+1$, we have that $w\in[N]$ and $P_{v-1}\setminus P_v = P_w\setminus P_{w-1}$. Now we want to show that $\parents(P_w\setminus P_{w-1},G)\subseteq P_{v-1} = P_w\cup\{N\}$, which also holds by \claimref{claim:Pv}.
\end{itemize}

\noindent \ul{Condition (4): $\forall v\in[2N]:x\in \parents(P_v\setminus P_{v-1},G)\cup\parents(P_{v-1}\setminus P_v,G)\Rightarrow x\in P_v$.}
\begin{itemize}
\item If $v\in[N]$, this is clear from \claimref{claim:Pv} since $\parents(P_v\setminus P_{v-1},G)\cup\parents(P_{v-1}\setminus P_v,G) \subseteq P_{v-1}\cap P_v \subseteq P_v$.
\item If $N<v\leq 2N$, by similar argument from above, by letting $w=2N-v+1$, we have that $w\in[N]$ and $\parents(P_v\setminus P_{v-1},G)\cup\parents(P_{v-1}\setminus P_v,G) = \parents(P_{w-1}\setminus P_w,G)\cup\parents(P_w\setminus P_{w-1},G)\subseteq P_{w-1}\cap P_w \subseteq P_{w-1} \subseteq P_{w-1}\cup\{N\}=P_v$.
\end{itemize}
Taken together, we can conclude that for any $v\in [2N]$, $P_v$ is a legal reversible pebbling configuration for $G$.
\end{proofof}}{}

Now we are ready to prove \thmref{thm:ed-reducible}.

\begin{proofof}{\thmref{thm:ed-reducible}}
Let $P=\{P_0,P_1,\ldots,P_{2N}\}$ as defined in \lemref{lem:edlegal}, in which we showed that it is a legal quantum pebbling. Clearly, $\pqpeb_\Ctime(P)=2N$. Further, we observe that $\pqpeb_\Cspace(P)\leq\max_{v\in V}\{|S_{\leq v}|+|B_v|+1\}$. Since we assume that $\indeg(G)=2$, we have
\begin{align*}
|B_v| &= \left|\bigcup_{j=1}^{d+1}\bigcup_{i=j}^{d+1} \left( A_{v+1-j,S,i}\cup A_{v-1+j,S,i} \right)\right| \\
&\leq \sum_{j=1}^{d+1}\sum_{i=j}^{d+1} | A_{v+1-j,S,i}|+ |A_{v-1+j,S,i}|\\
&\leq \sum_{j=1}^{d+1}\sum_{i=j}^{d+1} 2^{i+1} = 8d2^d+2.
\end{align*}
Taken together, \FullVersion{$\pqpeb_\Cspacetime(P)=\pqpeb_\Ctime(P)\pqpeb_\Cspace(P)\leq 2N(e+8d2^d+3)=\O{Ne+Nd2^d}$}{$\pqpeb_\Cspacetime(P)\leq 2N(e+8d2^d+3)=\O{Ne+Nd2^d}$}. Hence, \FullVersion{we can conclude that }{}$\pqpeb_\Cspacetime(G)=\min_{P\in\pqPeb{G,\{N\}}}\pqpeb_\Cspacetime(P)=\O{Ne+Nd2^d}$.
\end{proofof}

\paragraph{Analysis of Argon2i.}
There are a number of variants for the Argon2i graphs. We will focus on Argon2i-A \cite{BDK15,AC:BonCorSch16} and Argon2i-B\footnote{We will follow the naming convention of Alwen and Blocki~\cite{ESP:AlwBlo17} throughout the paper and use Argon2i-A to refer to Argon2i-A v1.1 and Argon2i-B to refer to v1.2+.} \cite{BDKJ16} here. Recall that Argon2i-A is a graph $G=(V=[N],E)$, where $E=\left\{(i,i+1):i\in[N-1]\right\}\cup\{(r(i),i)\}$, where $r(i)$ is a random value that is picked uniformly at random from $[i-2]$. Argon2i-B has the same structure, except that $r(i)$ is not picked uniformly at random but has a distribution as follows:
\[ \Pr\left[r(i)=j\right]=\Pr_{x\in[N]}\left[ i\left(1-\frac{x^2}{N^2}\right)\in (j-1,j] \right]. \]

\newcommand{\argonlemma}{Let $G_\ArgonA=(V_A=[N],E_A)$ and $G_\ArgonB=(V_B=[N],E_B)$ be randomly sampled graphs according to the Argon2i-A and Argon2i-B edge distributions, respectively. Then with high probability, the following holds:
\begin{enumerate}
\item $G_\ArgonA$ is $(e_1,d_1)$-reducible for $e_1=\frac{N}{d'}+\frac{N\ln\lambda}{\lambda}$ and $d_1=d'\lambda$, for any $0<\lambda<N$ and $0<d'<\frac{N}{\lambda}$.
\item $G_\ArgonB$ is $(e_2,d_2)$-reducible for $e_2=\frac{N}{d'}+\frac{2N}{\sqrt{\lambda}}$ and $d_2=d'\lambda$, for any $0<\lambda<N$ and $0<d'<\frac{N}{\lambda}$.
\end{enumerate}}
\begin{lemma}\lemlab{lem:argon2i}
\argonlemma
\end{lemma}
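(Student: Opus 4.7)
The plan is to construct a depth-reducing set $S = S_1 \cup S_2$ of the required size for each Argon2i variant, where $S_1$ handles the deterministic line structure and $S_2$ handles the random parents. First, set $S_1 = \{k d' : 1 \leq k \leq \lfloor N/d'\rfloor\}$, so $|S_1| \leq N/d'$ and $[N] \setminus S_1$ partitions into intervals $I_k = ((k-1)d', kd')$ of size at most $d'-1$. In $G - S_1$, no line edge crosses an interval boundary (each right endpoint $kd'$ has been removed). Consequently, any path of length $d'\lambda$ in $G - S_1$ must visit at least $\lambda + 1$ distinct intervals, traversing at least $\lambda$ random edges to cross these boundaries, since within one interval the line structure contributes at most $d'-1$ consecutive nodes.

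Next, I would construct $S_2$ by passing to a ``meta-graph'' $M$ whose vertices are the intervals $I_1, I_2, \ldots, I_{N/d'}$, with a directed edge from $I_j$ to $I_k$ (for $j < k$) whenever some random pointer $r(i) \in I_j$ has $i \in I_k$. A path of length $d'\lambda$ in $G - S_1$ projects to a walk of length at least $\lambda$ in $M$, so it suffices to make $M$ have depth less than $\lambda$ after removing the lifts of a small vertex set. Construct $S_2$ by iteratively identifying a ``thin'' layer of $M$ (an antichain in the current subgraph of small size) and removing its preimage in $G$, continuing until the depth drops below $\lambda$. The total size of $S_2$ depends on how concentrated the layer sizes are, which in turn depends on the parent distribution.

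For Argon2i-A, $r(i)$ is uniform on $[i-2]$, giving $\Pr[r(i) \in I_j] \leq d'/(i-2)$. Following the approach of \cite{AB17,TCC:BloZho17}, a Chernoff/union-bound argument on the indegree profile of $M$ shows that the iterative layer-removal strategy removes at most $O(N \ln \lambda / \lambda)$ nodes in total with high probability, because the ``thin layer'' at each step has expected size roughly $N/(d' \lambda)$ and the number of iterations needed is $O(\ln \lambda)$. For Argon2i-B, the density $\Pr[r(i) = j] \approx 1/(2i\sqrt{1-j/i})$ is biased toward recent parents, so ``short'' random edges dominate. Here I would group the intervals into super-blocks of $\sqrt{\lambda}$ consecutive intervals and apply a Chernoff bound to the count of random edges crossing each super-block boundary; balancing the super-block size against the target depth yields $|S_2| = O(N/\sqrt{\lambda})$ w.h.p., with a tighter counting argument sharpening the constant to $2N/\sqrt{\lambda}$.

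The main obstacle will be the concentration argument for $|S_2|$: we need the bound to hold with high probability over the random $r(\cdot)$, and the set $S_2$ itself depends on the realized graph, so a naive union bound over all potential depth-reducing sets is too weak. The standard remedy is to fix an oblivious procedure (e.g., the layered/block-based removal above) whose output size is a function of the realized indegree profile of $M$, and then apply concentration only to the profile statistics. For Argon2i-B the non-uniform distribution requires more care, because intervals closer to $N$ receive asymptotically more random-edge endpoints than earlier ones, so the super-block sizes or counting thresholds must be adapted to equalize the failure probability across super-blocks before the final union bound.
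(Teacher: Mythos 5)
Your first step (taking $S_1=\{kd'\}$, so that every surviving line segment has fewer than $d'$ nodes and any long path must repeatedly use random edges) is sound, and it matches the ``delete one out of every $d'$ nodes'' component of the paper's depth-reducing set. The gap is in $S_2$. Your plan is to reduce the depth of the meta-graph $M$ on the $N/d'$ intervals below $\lambda$ by ``iteratively identifying a thin antichain and removing its preimage,'' claiming that $O(\ln\lambda)$ iterations of antichains of expected size $N/(d'\lambda)$ suffice. Nothing in the proposal supports either claim: $M$ is itself an Argon2i-like random DAG on $N/d'$ meta-nodes in which each meta-node receives roughly $d'$ random in-edges, and its initial depth can be close to $N/d'$. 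Bounding how many meta-nodes must be removed to drive its depth down to $\lambda$ is essentially the original depth-reducibility question in disguise, and removing antichains is not a mechanism that controls depth (a DAG of depth $D$ can require $D-\lambda$ antichain-removal rounds). The Argon2i-B variant has the same problem: counting random edges that cross super-block boundaries does not, by itself, produce a set whose removal bounds the length of paths that stay strictly inside a super-block or hop between adjacent ones.

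The paper avoids this entirely by choosing the coarse partition first: it splits $[N]$ into $\lambda$ consecutive layers of size $N/\lambda$ (not into $N/d'$ intervals of size $d'$) and deletes every node $v$ whose random parent $r(v)$ lies in $v$'s own layer. After this step the only surviving intra-layer edges are line edges, so deleting every $d'$-th node caps the contribution of each layer to any directed path at $d'$; since node indices increase along any path, each layer is visited in one contiguous stretch and the depth is at most $d'\lambda$ deterministically. The cost is $N/d'$ plus $\sum_{i=1}^{\lambda}\Pr[\mathsf{Delete}_i]\cdot N/\lambda$, with $\Pr[\mathsf{Delete}_i]\le 1/i$ for Argon2i-A (giving $N\ln\lambda/\lambda$) and $\Pr[\mathsf{Delete}_i]\lesssim 1/\sqrt{i}$ for Argon2i-B (giving $2N/\sqrt{\lambda}$). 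Your concern about the set depending on the realized graph is then moot: the deletion rule is local and oblivious (node $v$ is deleted iff $r(v)$ falls in a fixed window), the indicators are independent across $v$, and standard concentration applies directly to their sum. If you want to salvage your outline, replace the antichain step with exactly this rule --- delete every node whose random parent lands within the same block of $N/\lambda$ consecutive nodes --- and the rest of your calculation goes through.
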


Alwen and Blocki~\cite{C:AlwBlo16,ESP:AlwBlo17}  established similar bounds to \lemref{lem:argon2i}, but focused on parameter settings where the depth $d$ is large. By contrast, we will need to pick a depth-reducing set with a smaller depth parameter $d \ll \log N$ to minimize the $d2^d$ cost term in our pebbling attack. The full proof of \lemref{lem:argon2i} can be found in \FullVersion{\appref{missing}}{the full version}. Here, we only give a brief intuition of the proof. To reduce the depth of a graph, we follow the approach of Alwen and Blocki~\cite{C:AlwBlo16,ESP:AlwBlo17} and divide $N$ nodes into $\lambda$ layers of size $N/\lambda$ and then reduce the depth of each layer to $d'$ so that the final depth becomes $d=d'\lambda$. To do so, we delete all nodes with parents in the same layer, and then delete one out of $d'$ nodes in each layer. And then we count the number of nodes to be deleted in both steps for each graph.

Applying the result from \lemref{lem:argon2i} to \thmref{thm:ed-reducible}, we have the following space-time cost of reversible pebbling for Argon2i-A and Argon2i-B. Intuitively, we obtain \corref{cor:argon2i} by setting $\lambda= \sqrt{\log N}$ and $d' = \lambda/\ln \lambda \approx 2 \sqrt{\log N}/\log \log N$ (resp. $\lambda= \sqrt[3]{\log^2 N}$ and $d'=\sqrt[3]{\log N}/2$) in \lemref{lem:argon2i} for Argon2i-A (resp. Argon2i-B).  The full proof of \corref{cor:argon2i} can be found in \FullVersion{\appref{missing}}{the full version}.

\newcommand{\argoncorollary}{Let $G_\ArgonA=(V_A=[N],E_A)$ and $G_\ArgonB=(V_B=[N],E_B)$ be randomly sampled graphs according to the Argon2i-A and Argon2i-B edge distributions, respectively. Then with high probability, $\pqpeb_\Cspacetime(G_\ArgonA)=\O{\frac{N^2\log\log N}{\sqrt{\log N}}}$, and $\pqpeb_\Cspacetime(G_\ArgonB)=\O{\frac{N^2}{\sqrt[3]{\log N}}}$.}
\begin{corollary}\corlab{cor:argon2i}
\argoncorollary
\end{corollary}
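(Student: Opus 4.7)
The plan is to invoke \lemref{lem:argon2i} with carefully chosen parameters $\lambda$ and $d'$ so as to make both terms $Ne$ and $Nd2^d$ in the space-time bound of \thmref{thm:ed-reducible} as small as possible. The key tradeoff is that shrinking $d$ makes $Nd2^d$ negligible but forces $e$ to grow (since $d=d'\lambda$ and $e$ depends inversely on $d'$ and $\lambda$). The sweet spot balances these contributions by keeping $d$ just slightly below $\log N$ so that $2^d$ stays well below $N$, while simultaneously driving $e$ down by a factor slightly below $\sqrt{\log N}$ (for Argon2i-A) or $\sqrt[3]{\log N}$ (for Argon2i-B).

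For Argon2i-A, I would set $\lambda = \sqrt{\log N}$ and $d' = \lambda/\ln\lambda = \Theta(\sqrt{\log N}/\log\log N)$. Then \lemref{lem:argon2i} yields (whp) $(e_1,d_1)$-reducibility with $e_1 = N/d' + N\ln\lambda/\lambda = \O{N\log\log N/\sqrt{\log N}}$ and $d_1 = d'\lambda = \Theta(\log N/\log\log N)$. Plugging into \thmref{thm:ed-reducible} gives $\pqpeb_\Cspacetime(G_\ArgonA) = \O{Ne_1 + Nd_1 2^{d_1}}$. The first term is exactly $\O{N^2\log\log N/\sqrt{\log N}}$. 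For the second term we observe $2^{d_1} = N^{\Theta(1/\log\log N)}$, which is sub-polynomial in $N$, so $Nd_1 2^{d_1} = \O{N^{1+2/\log\log N}\log N/\log\log N}$ is asymptotically dominated by the first term for sufficiently large $N$.

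For Argon2i-B, I would set $\lambda = (\log N)^{2/3}$ and $d' = (\log N)^{1/3}/2$. Then \lemref{lem:argon2i} yields (whp) $(e_2,d_2)$-reducibility with $e_2 = N/d' + 2N/\sqrt{\lambda} = \O{N/\sqrt[3]{\log N}}$ and $d_2 = d'\lambda = (\log N)/2$, so $2^{d_2} = \sqrt{N}$. Applying \thmref{thm:ed-reducible} gives $\pqpeb_\Cspacetime(G_\ArgonB) = \O{Ne_2 + Nd_2 2^{d_2}} = \O{N^2/\sqrt[3]{\log N} + N^{3/2}\log N}$, and since $N^{3/2}\log N = o(N^2/\sqrt[3]{\log N})$, the first term dominates.

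The main technical obstacle is the Argon2i-A bound: because $d_1 \approx \log N/\log\log N$ is only a polylogarithmic factor below $\log N$, one must verify carefully that $2^{d_1} = N^{o(1)}$ so that $Nd_1 2^{d_1}$ stays dominated by $Ne_1$. For Argon2i-B this comparison is essentially immediate. Beyond this asymptotic check, the proof is a straightforward composition of \lemref{lem:argon2i} and \thmref{thm:ed-reducible}, optimized by the parameter choices above.
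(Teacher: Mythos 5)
Your proposal is correct and follows essentially the same route as the paper: it instantiates \lemref{lem:argon2i} with the identical parameter choices ($\lambda=\sqrt{\log N}$, $d'=\lambda/\ln\lambda$ for Argon2i-A and $\lambda=\sqrt[3]{\log^2 N}$, $d'=\sqrt[3]{\log N}/2$ for Argon2i-B) and then applies \thmref{thm:ed-reducible}, verifying that the $Nd2^d$ term is dominated. The asymptotic checks you flag (in particular $2^{d_1}=N^{o(1)}$ for Argon2i-A) match the paper's computation.
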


\FullVersion{
\begin{remark}
Our reversible pebbling attacks on Argon2i-A and Argon2i-B have space-time cost $\pqpeb_\Cspacetime(G_\ArgonA)=\O{\frac{N^2\log\log N}{\sqrt{\log N}}}$, and $\pqpeb_\Cspacetime(G_\ArgonB)=\O{\frac{N^2}{\sqrt[3]{\log N}}}$ respectively. In the classical setting it was known that $\ppeb_{\Ccc}(G_{\ArgonA}) = \tilde{\mathcal{O}}(N^{1.708})$ and  $\ppeb_{\Ccc}(G_{\ArgonB}) = \tilde{\mathcal{O}}({N^{1.768} })$ \cite{EC:AlwBloPie17,TCC:BloZho17}. While these pebbling attacks achieve more impressive reductions in cost, we stress that the attacks are (1) non-quantum (i.e., non-reversible) and (2) the space-time complexity of these classical pebbling attacks is still $\Omega(N^2)$ since there will be a few pebbling rounds with $\Omega(N)$ pebbles on the graph. We remark that since any reversible pebbling is a legal classical pebbling that  it immediately follows that $\ppeb_\Cspacetime(G_\ArgonA)=\O{\frac{N^2\log\log N}{\sqrt{\log N}}}$, and $\ppeb_\Cspacetime(G_\ArgonB)=\O{\frac{N^2}{\sqrt[3]{\log N}}}$. The best known classical lower bounds for Argon2i-A and Argon2i-B are $\ppeb_{\Ccc}(G_{\ArgonA}) = \Omega({N^{1.66}})$ and  $\ppeb_{\Ccc}(G_{\ArgonB}) = \tilde{\Omega}({N^{1.75} })$  which immediately implies that $\pqpeb_\Cspacetime(G_\ArgonA)= \Omega({N^{1.66}})$, and $\pqpeb_\Cspacetime(G_\ArgonB)= \tilde{\Omega}({N^{1.75} })$. Thus, there remains a gap between the best upper/lower bounds for $\pqpeb_\Cspacetime(G_\ArgonA)$ and $\pqpeb_\Cspacetime(G_\ArgonB)$. Closing or tightening this gap is an interesting open research challenge. Similarly, it would be interesting to figure out if we can find better reversible pebbling strategies to reduce the cumulative cost of Argon2i-A and Argon2i-B, e.g., can one adapt the classical pebbling strategy of the previous work~\cite{C:AlwBlo16,EC:AlwBloPie17,TCC:BloZho17} to the reversible setting. 
\end{remark}
}{}

\subsection{Reversible Pebbling Attacks using an Induced Line Graph}\seclab{sec:attack:reducedgraph}

In this section, we give another general strategy to pebble DAGs by ``reducing" the DAG $G$ to a line graph, as shown in \figref{fig:metagraph}. Intuitively, given a DAG $G=(V,E)$ with $|V|=N$ and an integer parameter $b\geq 1$, we can partition $V$ into consecutive blocks $B_1,\ldots,B_{\lceil N/b\rceil}$ such that each block contains exactly $b$ nodes, while for the last block we can have less than $b$ nodes if $N/b$ is not an integer.

\begin{figure}[ht!]
\centering
\begin{tikzpicture}
[every node/.style={node distance=2cm},
unpeb/.style={circle,draw,inner sep=0pt,minimum width=0.2cm},
peb/.style={circle,draw,fill=black,inner sep=0pt,minimum width=0.2cm}]
\node (G) at (-2,0) {$G$};
\node (cdot1) at (-1,0) {$\cdots$};
\node[unpeb] (1) at (0,0) {};
\node[unpeb] (2) at ($(1) + (0.5,0)$) {};
\node[peb] (3) at ($(2) + (0.5,0)$) {};
\node[peb] (4) at ($(3) + (0.5,0)$) {};
\node[unpeb] (5) at ($(4) + (0.5,0)$) {};
\node[peb] (6) at ($(5) + (1.5,0)$) {};
\node[peb] (7) at ($(6) + (0.5,0)$) {};
\node[peb] (8) at ($(7) + (0.5,0)$) {};
\node[peb] (9) at ($(8) + (0.5,0)$) {};
\node[peb] (10) at ($(9) + (0.5,0)$) {};
\node[peb] (11) at ($(10) + (1.5,0)$) {};
\node[peb] (12) at ($(11) + (0.5,0)$) {};
\node[peb] (13) at ($(12) + (0.5,0)$) {};
\node[peb] (14) at ($(13) + (0.5,0)$) {};
\node[peb] (15) at ($(14) + (0.5,0)$) {};
\node (cdot2) at ($(15) + (1,0)$) {$\cdots$};
\foreach \x in {3,8,13}{
\node[dashed,ellipse,draw,minimum width=2.5cm, minimum height=0.8cm] at (\x) {};}
\node at ($(3) + (0,-0.7)$) {$B_{i-1}$};
\node at ($(8) + (0,-0.7)$) {$B_i$};
\node at ($(13) + (0,-0.7)$) {$B_{i+1}$};
\path[draw,->] (cdot1) edge (1);
\path[draw,->] (1) edge (2);
\path[draw,->] (2) edge (3);
\path[draw,->] (3) edge (4);
\path[draw,->] (4) edge (5);
\path[draw,->] (5) edge (6);
\path[draw,->] (6) edge (7);
\path[draw,->] (7) edge (8);
\path[draw,->] (8) edge (9);
\path[draw,->] (9) edge (10);
\path[draw,->] (10) edge (11);
\path[draw,->] (11) edge (12);
\path[draw,->] (12) edge (13);
\path[draw,->] (13) edge (14);
\path[draw,->] (14) edge (15);
\path[draw,->] (15) edge (cdot2);
\path[draw,->] (3) edge[bend left=20] (11);
\path[draw,->] (4) edge[bend right] (7);
\path[draw,->] (4) edge[bend right] (14);
\path[draw,->] (9) edge[bend left] (13);
\node (Gm) at (-2,-2) {$L_{\lceil N/b\rceil}$};
\node (cdot3) at (-1,-2) {$\cdots$};
\node[unpeb,minimum width=0.8cm] (m1) at ($(3) + (0,-2)$) {$v'_{i-1}$};
\node[peb,minimum width=0.8cm] (m2) at ($(8) + (0,-2)$) {$\textcolor{white}{v'_i}$};
\node[peb,minimum width=0.8cm] (m3) at ($(13) + (0,-2)$) {$\textcolor{white}{v'_{i+1}}$};
\node (cdot4) at ($(cdot2) + (0,-2)$) {$\cdots$};
\path[draw,->] (cdot3) edge (m1);
\path[draw,->] (m1) edge (m2);
\path[draw,->] (m2) edge (m3);
\path[draw,->] (m3) edge (cdot4);
\end{tikzpicture}
\caption{A line graph $L_{\lceil N/b\rceil}$ induced from a DAG $G$. Note that each block in an original graph corresponds to a node in the corresponding line graph, e.g., a block $B_i$ in $G$ that consists of five nodes correspond to the node $v'_i$ in $L_{\lceil N/b\rceil}$.}
\figlab{fig:metagraph}
\end{figure}
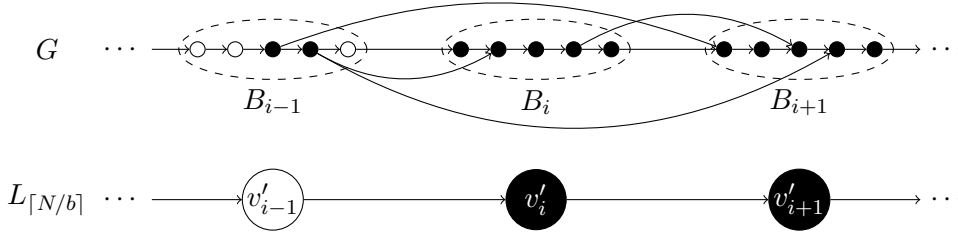

\paragraph{Notation.} 
Now we consider a reversible pebbling $P'$ of the line graph $L_{\lceil N/b\rceil}=(V'=[\lceil N/b \rceil],E')$. 
Intuitively, each node in $L_{\lceil N/b\rceil}$ corresponds to each block in $G$. 
To transform $P'$ into a pebbling $P$ of $G$, it will be useful to introduce some notation. 
Given a node $v' \in V'$ and the pebbling $P'$ of $L_{\lceil N/b\rceil}$, we define $\lastd(P',v')\coloneqq \max \left\{i :v'\in P_i'\right\}$ to denote the unique index $i$ such that node $v' \in P'_i$, but $v' \not \in P'_j$ for all rounds $j >i$, i.e., the pebble on node $v'$ was removed for the final time in round $i+1$. 
Similarly, it will be convenient to define $\lasta(P')\coloneqq \max \left\{ i :  \lceil N/b\rceil    \not\in P'_{i-1} \right\}$ to be the unique round where a pebble was placed on the last node $v=\lceil N/b\rceil$ for the final time (Note: it is possible that a legal pebbling $P'$ places/removes a pebble on node $v=\lceil N/b\rceil$ several times). 
We make a couple of basic observations. 
First, we note that if $u' < v'$ then $\lastd(P',u') > \lastd(P',v')$ since we need node $v'-1$ on the graph to remove a pebble from node $v'$. 
Similarly, we note that for any node $v' < \lceil N/b\rceil$ that $\lastd(P',v') > \lasta(P')$ since we need node $ \lceil N/b\rceil-1$ to be pebbled before we can place a pebble on the final node. 
Given our graph $G=(V,E)$, a parameter $b$, and a partition $B_1,\ldots,B_{\lceil N/b \rceil}$ of $V$ into consecutive blocks of size $b$, we define $\skipnode(B_i,G)$, for each $i$, to be the set of all skip nodes in block $B_i$, i.e., the set of nodes with an outgoing edge that skips over block $B_{i+1}$:
\begin{equation}\eqnlab{skipnodes}
\skipnode(B_i,G)\coloneqq\{v\in B_i:\exists j>i+1\text{ such that }v\in\parents(B_j,G)\}.
\end{equation}
We further define $\numskip(G,b)$ as the total number of skip nodes in $G=(V,E)$ after partitioning the set of nodes $V$ into consecutive blocks of size $b$, i.e., $\numskip(G,b)\coloneqq \sum_{i=1}^{\lceil N/b\rceil} |\skipnode(B_i,G)|$, where $B_i$'s are defined as before.

\paragraph{Pebbling Attempt 1.} Our first approach to convert $P'\in\pqPeb{L_{\lceil N/b\rceil}}$ to a legal reversible pebbling $P$ of $G$ is as follows. Since each node in $L_{\lceil N/b\rceil}$ corresponds to a block (of size at most $b$) in $G$, we can transform placing a pebble on a node in $L_{\lceil N/b\rceil}$ to pebbling all nodes in the corresponding block in $G$ in at most $b$ steps. Similarly, we can convert removing a pebble on a node in $L_{\lceil N/b\rceil}$ to removing pebbles from all nodes in the corresponding block in $G$ in at most $b$ steps. It gives us $\pqpeb_\Cspace(P)\leq b\pqpeb_\Cspace(P')$ since each node is transformed to a block of size at most $b$, and $\pqpeb_\Ctime(P)\leq b\pqpeb_\Ctime(P')$ since one pebbling/removing step in $L_{\lceil N/b\rceil}$ is transformed to at most $b$ pebbling/removing steps in $G$.

However, this transformation does \emph{not} yield a legal reversible pebbling of $G$ due to the skip nodes. In particular, given a reversible pebbling configuration $P'_k=\{v'\}$ of $L_{\lceil N/b\rceil}$, it is legal to proceed as $P'_{k+1}=\{v',v'+1\}$. However, when converting it to a reversible pebbling of $G$, one would need to place pebbles on block $B_{v'+1}$ while only having pebbles on block $B_{v'}$. This could be illegal if there is a node $v\in V$ such that $v\in B_i$ for $i<v'$ and $v\in\parents(B_{v'+1},G)$, i.e., $v$ is a skip node in $B_i$, because $v$ must be previously pebbled to place pebbles on block $B_{v'+1}$.

\paragraph{Reversible Pebbling Strategy.}
To overcome this barrier, when we convert $P'\in\pqPeb{L_{\lceil N/b\rceil}}$ to a legal reversible pebbling $P$ of $G$, we define a transformation $P=\trans(G,P',b)$ which convert placing/removing a pebble on/from a node $v'$ in $L_{\lceil N/b\rceil}$ to placing/removing pebbles on/from all nodes in the corresponding block $B_{v'}$ in $G$ in at most $b$ steps as our first attempt, but when we remove pebbles from $B_{v'}$ in $G$, we keep skip nodes for the block in the transformation until we delete pebbles from the block for the last time, i.e., after round $\lastd(P',v')$, since these skip nodes will no longer needed to pebble nodes in other blocks in the future. 

Furthermore, for the last block (in $G$), when a pebble is placed on the last node (in $L_{\lceil N/b\rceil}$) for the final time, i.e., in round $\lasta(P')$, we indeed want to only pebble the last node (sink node) in the block but not the entire block. Hence, we need additional (at most $b-1$) steps to remove pebbles from all nodes except for the last node in the block.

We can argue the legality of the converted pebbling of $G$ because pebbling steps in each block is legal and keeping skip nodes during the transformation does not affect the legality of pebbling. Intuitively, whenever we pebble a new node $v$ in  $L_{\lceil N/b\rceil}$ the node $v-1$ must have been pebbled in the previous round. Thus, in $G$ we will have pebbles on all nodes in the block $B_{v-1}$. Now for every node $w \in B_{v}$ and every edge of the form $(u,w)$  we either have (1) $u \in B_{v-1}$, (2) $u \in B_{v}$ or (3) $u \in B_{j}$ with $j < v-1$. In the third case, $u$ is a skip node and will already be pebbled allowing us to legally place a pebble on node $w$. Similarly, in the first case, we are guaranteed that $u$ is already pebbled before we begin pebbling nodes in block $B_v$ since every node in $B_{v-1}$ is pebbled, and in the second case, $u$ will be (re)pebbled before node $w$. A similar argument shows that all deletions are legal as well. The full proof of \lemref{lem:translegal} can be found in \FullVersion{\appref{missing}}{the full version}.

\newcommand{\translegal}{Let $G=(V=[N],E)$ and $b\in[N]$ be a parameter. If $P'\in\pqPeb{L_{\lceil N/b\rceil}}$, then $P=\trans(G,P',b)\in\pqPeb{G}$.}
\begin{lemma}\lemlab{lem:translegal}
\translegal
\end{lemma}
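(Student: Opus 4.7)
The plan is to verify the four conditions of \defref{def:quantum-pebbling} for $P = \trans(G, P', b)$. Since the transformation expands each round of $P'$ into $b$ consecutive rounds of $P$ (a ``super-round''), the proof naturally divides into an analysis of what happens within and at the boundaries between super-rounds.

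\textbf{Structural invariant.} I would first establish, by induction on the super-round index $k$, that at the boundary of the $k$-th super-round of $P$ the pebbled set is exactly
\[
\Bigl( \bigcup_{v' \in P'_k} B_{v'} \Bigr) \;\cup\; \Bigl( \bigcup_{\substack{v' \notin P'_k \\ \lastd(P',v') > k}} \skipnode(B_{v'},G) \Bigr).
\]
Within each super-round, additions to a block $B_{v'}$ are done in the topological ordering of $B_{v'}$, and removals (after $B_{v'}$ has been fully pebbled) are done in reverse topological order but \emph{skip} the nodes in $\skipnode(B_{v'},G)$, unless the current super-round index has reached $\lastd(P',v')$, in which case those skip nodes are also removed.

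\textbf{Verifying conditions.} For Condition (1), after the super-round corresponding to $\lasta(P')$ the invariant guarantees that $B_{\lceil N/b\rceil}$ is fully pebbled and that no skip nodes from earlier blocks remain, since $\lasta(P') > \lastd(P',v')$ for every $v' < \lceil N/b\rceil$. A final cleanup phase of at most $b-1$ rounds removes every non-sink node from the last block, yielding $P_t=\sinks(G)$. For Conditions (2) and (3), I perform a case analysis on each edge $(u,w)\in E$ incident to a node $w\in B_{v'}$ that is being added or removed. Either (a) $u\in B_{v'}$, in which case the topological / reverse-topological ordering within the super-round guarantees $u$ is present; (b) $u\in B_{v'-1}$, in which case the invariant plus the fact that $v'-1\in P'_{k-1}$ (a precondition for adding or removing $v'$ in $P'$) ensures $B_{v'-1}$ is fully pebbled throughout this super-round; or (c) $u\in B_j$ for some $j<v'-1$, in which case $(u,w)$ is a skip edge so $u\in\skipnode(B_j,G)$ is retained by the invariant, since $\lastd(P',j) > k$ as long as $v'$ is still active. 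Condition (4) then follows immediately: because each individual round of $P$ adds or removes only one node of a single block, the case analysis above shows that every parent of that node is present both before and after the transition.

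\textbf{Main obstacle.} The delicate part is handling the parallelism of $P'$: a single transition $P'_{k-1}\to P'_k$ may add some nodes and remove other nodes simultaneously in the line graph, so the $b$-round per-block operations for several blocks must be scheduled in parallel within a single super-round of $P$. I need to verify that the preconditions for all of these concurrent per-block operations can be satisfied from the pebbles guaranteed by the invariant, and in particular that the retention-versus-removal decisions for skip nodes (governed by the $\lastd(P',v')$ clock) never leave a parent un-pebbled at the moment it is needed. Once the invariant is stated tightly enough to capture both the fully-pebbled ``active'' blocks and the skip-node ``residue'' of previously-pebbled blocks, the inductive step for each of the four conditions reduces to the routine case analysis above.
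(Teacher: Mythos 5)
Your overall strategy matches the paper's: the proof there also proceeds by checking the four conditions of \defref{def:quantum-pebbling} and, for conditions (2)--(4), runs exactly your three-way case analysis on each edge $(u,w)$ with $w\in B_{v'}$ --- namely $u\in B_{v'}$ (handled by the topological/reverse-topological ordering within the block), $u\in B_{v'-1}$ (handled because the preceding block is fully pebbled whenever $v'$ is added or removed in $P'$), and $u\in B_j$ with $j<v'-1$ (handled because $u$ is then a skip node that is retained). Your explicit structural invariant is a reasonable way to make precise what the paper leaves implicit, and your ``main obstacle'' paragraph correctly identifies where the bookkeeping effort lies.

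There is, however, one concrete error in your verification of Condition (1). You assert that $\lasta(P') > \lastd(P',v')$ for every $v' < \lceil N/b\rceil$ and conclude that, already at the super-round corresponding to $\lasta(P')$, no skip nodes from earlier blocks remain. The inequality goes the other way: the paper observes that $\lastd(P',v') > \lasta(P')$ for all $v' < \lceil N/b\rceil$, precisely because node $\lceil N/b\rceil - 1$ must still be pebbled when the sink is pebbled for the last time (and, by quantum reversibility, must remain pebbled through that round), and every earlier node can only be deleted after its successor has been repebbled. Consequently, at super-round $\lasta(P')$ the earlier blocks (and their skip nodes) are in general still occupied; they are emptied only in the \emph{subsequent} super-rounds, driven by the tail of $P'$ that unwinds the line graph down to its sink. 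Condition (1) still holds, but for a different reason: since $P'_t=\{\lceil N/b\rceil\}$, your invariant applied at the final round $t$ gives that every block other than the last is empty (each $v'<\lceil N/b\rceil$ satisfies $\lastd(P',v')<t$, so its skip-node residue has also been cleared), and the extra $\leq b-1$ cleanup rounds reduce the last block to the sink. You should restate the boundary argument accordingly; as written, that step fails.
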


\FullVersion{The entire procedure $\trans(G,P',b)$ is formally described in \algref{alg:trans} in \appref{algorithms}, and an example for the reversible pebbling strategy can be found in \figref{fig:metapebbling} in \appref{examples}.}{The formal definition of the procedure $\trans(G,P',b)$ and an example for the reversible pebbling strategy can be found in the full version.} Now we observe the following theorem describing the space-time cost of the converted pebbling in terms of the cost of the reduced pebbling of the line graph. \FullVersion{}{We defer the proof of \thmref{thm:reduce} to the full version.}

\newcommand{\reducethm}{Given a DAG $G=(V,E)$ with $|V|=N$ nodes, a reduced line graph $L_{\lceil N/b\rceil}=(V',E')$ with $|V'|=\lceil N/b\rceil$ nodes (where $b$ is a positive integer), and a legal reversible pebbling $P'\in\pqPeb{L_{\lceil N/b\rceil}}$, there exists a legal reversible pebbling $P=\trans(G,P',b)\in\pqPeb{G}$ such that 
\[\pqpeb_\Cspacetime(P)\leq 2b^2 \pqpeb_\Cspacetime(P') + 2b\pqpeb_\Ctime(P')\cdot \numskip(G,b).\]}
\begin{theorem}\thmlab{thm:reduce}
\reducethm
\end{theorem}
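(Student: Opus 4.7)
The plan is to invoke \lemref{lem:translegal} for the legality of $P=\trans(G,P',b)\in\pqPeb{G}$ and then bound the time and space of $P$ separately. Because $\pqpeb_\Cspacetime(P)=\pqpeb_\Ctime(P)\cdot\pqpeb_\Cspace(P)$, it suffices to prove
\[ \pqpeb_\Ctime(P)\leq 2b\,\pqpeb_\Ctime(P') \qquad\text{and}\qquad \pqpeb_\Cspace(P)\leq b\,\pqpeb_\Cspace(P')+\numskip(G,b), \]
since multiplying these two inequalities gives exactly $2b^2\pqpeb_\Cspacetime(P')+2b\,\pqpeb_\Ctime(P')\cdot\numskip(G,b)$, the claimed bound.

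For the time bound I would walk through the definition of $\trans(G,P',b)$ in \algref{alg:trans} and observe that every transition $P'_{k-1}\to P'_k$ of the line-graph pebbling is simulated by at most $b$ consecutive rounds of $P$: whenever $P'$ places or removes a pebble on some $v'\in V'$, the transformation places or removes pebbles on the (at most $b$) nodes of the block $B_{v'}$ in topological or reverse topological order, and additions and removals occurring in the same step of $P'$ are handled in parallel within the same batch of at most $b$ rounds. Adding at most $b-1$ additional cleanup rounds near $\lasta(P')$ to depebble the non-sink nodes of the final block yields $\pqpeb_\Ctime(P)\leq b\,\pqpeb_\Ctime(P')+b\leq 2b\,\pqpeb_\Ctime(P')$.

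For the space bound I would split the pebbles present at any round $i$ of $P$ into two groups. Group (i) consists of pebbles in blocks $B_{v'}$ such that $v'$ is currently pebbled in the round of $P'$ being expanded, or $v'$ is being added to/removed from $P'$; there are at most $\pqpeb_\Cspace(P')$ such $v'$'s, each contributing at most $b$ pebbles, so the contribution is at most $b\,\pqpeb_\Cspace(P')$. Group (ii) consists of the skip nodes retained from blocks that have already had their non-skip pebbles removed but have not yet been depebbled for the final time (i.e., blocks $B_{v'}$ with $i\leq \lastd(P',v')$ that have entered a retention phase). By construction of $\trans$, every node in group (ii) lies in $\skipnode(B_{v'},G)$ as defined in \eqnref{skipnodes}, and because the blocks $B_{v'}$ partition $V$, retained skip nodes belonging to different blocks are disjoint; hence the simultaneous count of group (ii) is at most $\sum_{v'}|\skipnode(B_{v'},G)|=\numskip(G,b)$. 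Combining the two groups gives the desired space bound.

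The main obstacle, in my view, is the second space claim: one must verify that at \emph{every single round} of $P$ (not merely cumulatively across the simulation) the collection of simultaneously retained skip nodes fits inside $\numskip(G,b)$. This requires carefully chasing $\trans$ to confirm that (a) during the deletion phase of a block $B_{v'}$ the only pebbles the transformation keeps beyond that phase are those in $\skipnode(B_{v'},G)$, and (b) once $P'$ deletes $v'$ for the final time (round $\lastd(P',v')$), even the retained skip nodes in $B_{v'}$ are removed, so no block ever contributes more than its skip-node count and the bookkeeping across blocks is additive. Once these two inequalities are in hand, the theorem follows by multiplication as above.
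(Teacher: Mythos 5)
Your proposal follows essentially the same route as the paper's proof: legality via \lemref{lem:translegal}, the time bound $\pqpeb_\Ctime(P)\leq b(\pqpeb_\Ctime(P')+1)\leq 2b\,\pqpeb_\Ctime(P')$, the space bound $\pqpeb_\Cspace(P)\leq b\,\pqpeb_\Cspace(P')+\numskip(G,b)$, and multiplication of the two. Your accounting of the retained skip nodes is in fact slightly more careful than the paper's one-sentence justification; the only minor imprecision is that during a transition step of $P'$ with simultaneous additions and deletions the number of partially occupied blocks can exceed $\pqpeb_\Cspace(P')$, though the total pebble count is still at most $b\,\pqpeb_\Cspace(P')$ because a block being filled and a block being emptied contribute complementary fractions of $b$.
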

\FullVersion{
\begin{proof}
Consider the algorithm $P=\trans(G,P',b)$ as shown in \algref{alg:trans} in \appref{algorithms}. We argue that the reversible pebbling $P$ is legal in \appref{missing} and focus here on analyzing the cost of the pebbling $P$. First, we consider the time cost of $P$. Notice that in each round $P'_j$ in $P'$ (of the line graph $L_{\lceil N/b\rceil}$), we have two cases: if $j\neq\lasta(P',\lceil N/b\rceil)$, we need $b$ rounds to place/remove pebbles in the corresponding blocks in $G$; otherwise, i.e., $j=\lasta(P',\lceil N/b\rceil)$, we need $b+N-\left(\lceil N/b\rceil-1\right)b-1 \leq 2b$ rounds to place/remove pebbles in the corresponding blocks in $G$. Hence, we have
\begin{equation*}
\pqpeb_\Ctime(P) \leq b\left(\pqpeb_\Ctime(P')-1\right) + 2b = b\left(\pqpeb_\Ctime(P')+1\right).
\end{equation*}
When it comes to the space cost of the pebbling $P$, we need space for the pebbling $P'$ multiplied by the block size since each node in $P'$ has a 1-1 correspondence between each block of size $b$ in $G$. Furthermore, we additionally need space for the skip nodes as they should not be removed to make the pebbling $P=\trans(G,P',b)$ legal. That is, we have
\begin{equation*}
\pqpeb_\Cspace(P) \leq b\cdot\pqpeb_\Cspace(P') + \numskip(G,b).
\end{equation*}
Combining these inequalities together, we can conclude that
\begin{align*}
\pqpeb_\Cspacetime(P) &= \pqpeb_\Cspace(P)\cdot\pqpeb_\Ctime(P)\\
&\leq \left( b\cdot\pqpeb_\Cspace(P') + \numskip(G,b)\right) \cdot b\left(\pqpeb_\Ctime(P')+1\right)\\
&= 2b^2 \pqpeb_\Cspacetime(P') + 2b\pqpeb_\Ctime(P')\cdot \numskip(G,b).\qedmath
\end{align*}
\end{proof}}{}

\paragraph{Analysis on DRSample.}
DRSample \cite{CCS:AlwBloHar17} is the first practical construction of an iMHF which modified the edge distribution of Argon2i. Consider a DAG $G=(V=[N],E)$. Intuitively, similar to Argon2i, each node $v\in V\setminus\{1\}$ has at most two parents, i.e., there is a directed edge $(v-1,v)\in E$ and a directed edge from a random predecessor $r(v)$. While Argon2i-A picks $r(v)$ uniformly at random from $[v-2]$, DRSample picks $r(v)$ according to the following random process: (1) We randomly select a bucket index $i \leq \log v$, (2) We randomly sample $r(v)$ from the bucket  $B_i(v)=\{u: 2^{i-1}<v-u\leq 2^i\}$. \FullVersion{We observe the following lemma which (whp) upper bounds the number of skip nodes when we sample $G$ according to this distribution.

\newcommand{\skipnodesDRS}{Let $G_\DRS=(V_\DRS=[N],E_\DRS)$ be a randomly sampled graph according to the DRSample edge distribution. Then with high probability, we have $\numskip\left(G_\DRS, \left\lceil \frac{N}{\log^2 N}\right\rceil\right)=\O{\frac{N\log\log N}{\log N}}$.}
\begin{lemma}\lemlab{lem:skipnodesDRS}
\skipnodesDRS
\end{lemma}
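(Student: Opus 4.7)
The plan is to bound the expected number of ``long'' random edges in $G_\DRS$, observe that this dominates the number of skip nodes, and then apply a Chernoff concentration bound. First, I would observe that each node $v\geq 2$ in DRSample has exactly two incoming edges: the deterministic predecessor edge $(v-1,v)$, which has length $1$ and is therefore never a skip edge, and the random edge $(r(v),v)$. Consequently, every skip node is the tail of at least one random long edge, so
\[ \numskip\!\left(G_\DRS, \lceil N/\log^2 N\rceil \right) \;\leq\; \sum_{v=2}^N X_v,\]
where $X_v \coloneqq \mathds{1}[(r(v),v) \text{ is a skip edge}]$. Moreover, if $(r(v),v)$ is a skip edge then $r(v)$ lies in a block at least two indices before $v$'s block, which forces $v - r(v) > b$ where $b = \lceil N/\log^2 N\rceil$. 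Thus $\Pr[X_v=1] \leq \Pr[v-r(v)>b]$.

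Next, I would bound $\Pr[v-r(v)>b]$ using the bucket sampling structure of DRSample. Recall that $r(v)$ is sampled by first picking a bucket index $i$ uniformly at random from $\{1,\ldots,\lfloor \log v\rfloor\}$ and then sampling $r(v)$ uniformly from $B_i(v) = \{u: 2^{i-1} < v - u \leq 2^i\}$. Conditioned on bucket $i$ we have $v-r(v) \leq 2^i$, so the event $\{v-r(v)>b\}$ forces $2^i > b$, i.e., $i \geq \lceil \log b\rceil + 1$. Since $\log b \geq \log N - 2\log\log N - 1$, the number of admissible bucket indices in $\{\lceil \log b\rceil +1,\ldots, \lfloor \log v\rfloor\}$ is at most $2\log\log N + O(1)$, giving $\Pr[v-r(v)>b] \leq (2\log\log N + O(1))/\lfloor \log v\rfloor$ for $v>b$ and $0$ for $v\leq b$.

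Summing and using $\log v \geq \log b = \Omega(\log N)$ in the relevant range, the expected number of long random edges satisfies
\[ \mu \;\coloneqq\; \sum_{v=2}^N \Pr[X_v=1] \;\leq\; \sum_{v=b+1}^N \frac{O(\log\log N)}{\log v} \;=\; O\!\left(\frac{N\log\log N}{\log N}\right).\]
Since the $r(v)$'s are sampled independently across $v$, the indicators $X_v$ are mutually independent Bernoulli random variables. A standard multiplicative Chernoff bound then yields $\Pr\!\left[\sum_v X_v \geq 2\mu\right] \leq e^{-\mu/3}$, which is $o(1)$ because $\mu = \omega(1)$, finishing the high-probability bound.

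I do not anticipate a substantial obstacle: the argument reduces to a direct probability computation followed by standard concentration, once the key reduction from skip nodes to long random edges is in place. The only slightly delicate points are handling integer rounding of $\log b$ and $\lfloor\log v\rfloor$ at the boundary $v \approx b$, and the small-$v$ regime where $\lfloor\log v\rfloor$ is tiny; both issues are absorbed into the implicit constants in the $O(\cdot)$ notation. If a sharper concentration is desired (e.g., exponentially small failure probability), one could alternatively apply Bernstein's inequality, but the Chernoff bound already suffices for the ``with high probability'' statement.
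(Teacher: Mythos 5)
Your proposal is correct and follows essentially the same route as the paper's proof: bound the number of skip nodes by the number of long random edges $(r(v),v)$ with $v-r(v)>b$, use the bucket structure to bound this probability by roughly $\log(N/b)/\log N = O(\log\log N/\log N)$ per node, and conclude with linearity of expectation plus a Chernoff bound over the independent indicators. The only difference is cosmetic — you count admissible buckets directly while the paper writes the same quantity as $1-\log b/\log v$ — so there is nothing substantive to add.
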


The full proof of \lemref{lem:skipnodesDRS} can be found in \appref{missing}. Here, we only give a brief intuition. To count the number of skip nodes, we need to find edges with length longer than $b$ so that the edge skips over a block. There are at most $\log v - \log b$ (out of $\log v$) buckets which potentially could result in a skip node i.e., any edge $(r(v),v)$ with length $v-r(v)\leq b$ cannot produce a new skip node. The probability that the edge $(r(v),v)$ is longer than $b$ is at most $1-\log b/\log v \leq 1- \log b/\log N = \log(N/b)/\log N$. Thus, the expected number of skip nodes in DRSample is at most $N \log(N/b)/\log N$ and standard concentration bounds imply that the number of skip nodes will be upper bounded by $\mathcal{O}(N \log(N/b)/\log N)$ with high probability. Setting $b=\lceil N/\log^2N\rceil$ we can conclude that the expected number of skip nodes in DRSample is at most $\mathcal{O}(N\log\log N/\log N)$ with high probability. Applying the result from \lemref{lem:skipnodesDRS} to \thmref{thm:reduce}, we have the following space-time cost of reversible pebbling for DRSample.}{We can upper bound the number of skip nodes when we sample $G$ according to this distribution. In particular, we observe that $\numskip\left(G_\DRS, \left\lceil \frac{N}{\log^2 N}\right\rceil\right)=\O{\frac{N\log\log N}{\log N}}$ where $G_\DRS$ is a randomly sampled graph according to the DRSample edge distribution. Intuitively, to count the number of skip nodes, we need to find edges with length $>b$ so that the edge skips over a block. There are at most $\log v-\log b$ (out of $\log v$) buckets which potentially could result in a skip node, which implies that the probability that the edge $(r(v),v)$ is longer than $b$ is at most $1-\log b/\log v\leq 1-\log b/\log N = \log(N/b)/\log N$. Thus, the expected number of skip nodes in DRSample is at most $N\log(N/b)/\log N$ and standard concentration bounds imply that the number of skip nodes will be upper bounded by $\mathcal{O}(N\log(N/b)/\log N)$ with high probability. Setting $b=\lceil N/\log^2N\rceil$ we can conclude that the expected number of skip nodes in DRSample is at most $\mathcal{O}(N\log\log N/\log N)$ with high probability. Further details can be found in the full version. Applying this result to \thmref{thm:reduce}, we have the following space-time cost of reversible pebbling for DRSample.}

\newcommand{\stcostDRS}{Let $G_\DRS=(V_\DRS=[N],E_\DRS)$ be a randomly sampled graph according to the DRSample edge distribution. Then with high probability, $\pqpeb_\Cspacetime(G_\DRS)=\O{\frac{N^2\log\log N}{\log N}}$.}
\begin{corollary}\corlab{cor:stcostDRS}
\stcostDRS
\end{corollary}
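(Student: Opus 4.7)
The plan is to combine the pebbling transformation of \thmref{thm:reduce} with the efficient line-graph pebbling of \thmref{thm:line} and the skip-node bound of \lemref{lem:skipnodesDRS}, choosing the block size so that both terms in the cost expression balance at $\O{N^2 \log\log N / \log N}$.

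First I would fix the block size $b := \lceil N/\log^2 N\rceil$, so that the induced line graph $L_{\lceil N/b\rceil}$ has $M := \lceil N/b\rceil = \Theta(\log^2 N)$ nodes, matching the choice for which \lemref{lem:skipnodesDRS} guarantees $\numskip(G_\DRS,b) = \O{N\log\log N/\log N}$ with high probability. Next I would invoke the explicit recursive construction from the proof of \thmref{thm:line}: for any constant integer $c\ge 2$ it yields a legal parallel quantum pebbling $P'\in\pqPeb{L_M}$ with $\pqpeb_\Ctime(P') \le 2^{c-1} M = \O{\log^2 N}$ and $\pqpeb_\Cspace(P') \le c\cdot M^{1/c} = \O{\log^{2/c} N}$. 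Crucially, I need the separate bounds on time and space here (not just the product $\O{M^{1+\eps}}$), because the two terms in \thmref{thm:reduce} depend on these quantities differently; fixing $c=3$ (say) suffices.

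Then I would apply \thmref{thm:reduce} to obtain $P = \trans(G_\DRS, P', b) \in \pqPeb{G_\DRS}$ with
\[
\pqpeb_\Cspacetime(P) \;\le\; 2b^2\,\pqpeb_\Cspacetime(P') \;+\; 2b\,\pqpeb_\Ctime(P')\cdot\numskip(G_\DRS,b).
\]
Substituting $b = \O{N/\log^2 N}$, $\pqpeb_\Cspacetime(P') = \O{M\cdot M^{1/c}} = \O{\log^{2+2/c} N}$, $\pqpeb_\Ctime(P') = \O{\log^2 N}$, and $\numskip(G_\DRS,b) = \O{N\log\log N/\log N}$, the first term becomes $\O{(N/\log^2 N)^2 \cdot \log^{2+2/c} N} = \O{N^2/\log^{2-2/c} N}$, which for $c=3$ is $\O{N^2/\log^{4/3} N}$ and is dominated by the target bound. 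The second term becomes $\O{(N/\log^2 N)\cdot \log^2 N \cdot N\log\log N/\log N} = \O{N^2 \log\log N/\log N}$, matching the target exactly.

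Adding the two contributions yields $\pqpeb_\Cspacetime(G_\DRS) \le \pqpeb_\Cspacetime(P) = \O{N^2 \log\log N/\log N}$ with high probability, as claimed. The only conceptually delicate point is the second step: one must not use \thmref{thm:line} as a black box giving only a space-time product, because the factor $\pqpeb_\Ctime(P')$ enters linearly (not via the product) in the second term of \thmref{thm:reduce}, and letting $t'$ be as large as the full space-time product $\O{\log^{2+2\eps} N}$ would introduce an extra $\log^{2\eps} N$ factor and destroy the bound. Using the explicit recursive construction — which gives $t' = \O{M}$ and $s' = \O{M^{1/c}}$ separately — is what makes the two terms balance cleanly.
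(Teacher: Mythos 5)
Your proposal is correct and follows essentially the same route as the paper: choose $b=\lceil N/\log^2 N\rceil$, apply \thmref{thm:reduce}, bound the two terms using \thmref{thm:line} (with the time bound $\pqpeb_\Ctime(P')=\O{\log^2 N}$ taken separately from the space-time product) and \lemref{lem:skipnodesDRS}. Your explicit remark that the second term needs the standalone time bound rather than the product is exactly the point the paper's proof also relies on, just stated more carefully.
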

\FullVersion{
\begin{proof}
Given $G_\DRS$, we can consider a reduced line graph $L_{\lceil \log^2N \rceil}=(V',E')$ with $|V'|=\lceil \log^2N \rceil$. Then by \thmref{thm:reduce}, we have
\[ \pqpeb_\Cspacetime(G_\DRS)\leq2
\left(\frac{N}{\log^2N}\right)^2 \pqpeb_\Cspacetime(L_{\lceil \log^2N \rceil}) + \frac{2N}{\log^2N}\cdot\pqpeb_\Ctime(L_{\lceil \log^2N \rceil})\cdot \numskip\left(G_\DRS,\left\lceil\frac{N}{\log^2N}\right\rceil\right). \]
By \thmref{thm:line}, we have $\pqpeb_\Cspacetime(L_{\lceil \log^2N \rceil}) = \O{(\log N)^{2\left(1+\frac{2}{\sqrt{\log \log^2N}}\right)}}$ and $\pqpeb_\Ctime(L_{\lceil \log^2N \rceil}) = \O{\log^2 N}$. By \lemref{lem:skipnodesDRS}, we have
\begin{align*}
\pqpeb_\Cspacetime(G_\DRS) &\leq 2
\left(\frac{N}{\log^2N}\right)^2 \O{(\log N)^{2\left(1+\frac{2}{\sqrt{\log \log^2N}}\right)}} + 
\frac{2N}{\log^2N}\cdot\O{\log^2 N}\cdot \O{\frac{N\log\log N}{\log N}}\\
&= \O{\frac{N^2}{\log^{2(1-\sqrt{2/\log\log N})}N} + \frac{N^2\log\log N}{\log N}}=\O{\frac{N^2\log\log N}{\log N}}.\qedmath
\end{align*}
\end{proof}
}{The proof of \corref{cor:stcostDRS} is deferred to the full version and we only give a brief intuition here. Basically, we can reduce $G_\DRS$ to the induced line graph $L_{\lceil\log^2 N\rceil}$ of size $\lceil\log^2 N\rceil$. Then by plugging in the reversible time and space-time cost of $L_{\lceil\log^2 N\rceil}$ and the number of skip nodes of $G_\DRS$ in \thmref{thm:reduce} with setting $b=\lceil N/\log^2 N\rceil$, we can conclude that $\pqpeb_\Cspacetime(G_\DRS)=\O{\frac{N^2\log\log N}{\log N}}$.
}
%
%
\section{Reversible Pebbling Attacks for Minimizing Cumulative Complexity}\seclab{sec:revpeb}


In this section, we adapt the depth-reducing pebbling attack $\genpeb$ from Alwen and Blocki~\cite{C:AlwBlo16} to a reversible pebbling attack with the same asymptotic CC. The pebbling attack of Alwen and Blocki~\cite{C:AlwBlo16} applies to any $(e,d)$-reducible DAG $G$ with $e=o(N)$ and $d=o(N)$. We first provide an overview of their pebbling strategy before describing how we extend the attack to obtain a reversible pebbling. 

\paragraph{Intuitive Overview of \cite{C:AlwBlo16} Attack.} Suppose that we are given a DAG $G=(V=[N], E)$ with constant indegree $\delta$ along with a depth-reducing set $S$ of size $|S| \leq e$. Intuitively, the pebbling attack of Alwen and Blocki~\cite{C:AlwBlo16} can be divided into a series of alternating ``light phases" and ``balloon phases." It is also helpful to imagine partitioning the nodes $[N]$ into intervals $I_i=[(i-1)g+1,ig]$ of $g$ consecutive nodes. 

\begin{itemize}
    \item \emph{Light Phases:} During the $i$\th light phase our goal will be to pebble all of the nodes in $I_i$ over the next $g$ consecutive pebbling rounds. The pre-condition for the $i$\th light phase is that we start off with pebbles on all of the nodes $\left(\parents(I_i) \cup S \right) \cap [(i-1)g]$ where $\parents(I_i) = \{ u : \exists v \in I_i ~\mathtt{s.t.}~(u,v) \in E \}$ denotes the set of parents of nodes in $I_i$. Similarly, the post-condition for the $i$\th light phase is that we have pebbles on all of the nodes $\left(\parents(I_i) \cup S \right) \cap [(i-1)g]  \cup I_i$. If $P_j=\left(\parents(I_i) \cup S \right) \cap [(i-1)g]$ denotes the initial pebbling configuration at the start of the light phase then we can set $P_{j+x} = P_{j} \cup [(i-1)g, (i-1)g+x]$ so that $P_{j+g}$ gives us our post-condition. During each light phase we keep {\em at most } $\left| \left(\parents(I_i) \cup S \right) \cap [(i-1)g]  \cup I_i \right| \leq e + \delta g + g$ pebbles on the graph. Thus, the total cost incurred during each light phase is at most $(e+\delta g + g) g$ and the total cost incurred over all $\frac{N}{g}$ light phases is at most $N(e+\delta g + g)$. 
    \item \emph{Balloon Phases:} The $i$\th balloon phase takes place immediately after the $i$\th light phase with the goal of quickly recovering previously discarded pebbles to satisfy the pre-condition for the next ($(i+1)$\st) light phase. In particular, the post-condition for the $i$\th balloon phase should match the pre-condition for the $(i+1)$\st light phase. The pre-condition for the $i$\th balloon phase is that our starting configuration contains pebbles on all of the nodes $S \cap [ig]$. During a balloon phase, we are not worried about space so we can recover pebbles on the entire set $[ig]$ within $d$ rounds by exploiting the fact that $G-S$ contains no directed path of length $d$. Once we have recovered pebbles on the entire set $[ig]$ we can then discard all of the pebbles that are not needed for the next light phase. Thus, the total cost incurred by each individual balloon phase is at most $dN$ and the total cost incurred over all $\frac{N}{g}$ balloon phases is at most $\frac{N^2 d}{g}$.
\end{itemize}

\paragraph{Formal Description of \cite{C:AlwBlo16} Pebbling Attack.}

    Let $G=([N],E)$ be an $(e,d)$-reducible graph and $S$ be a depth-reducing set of size $e$. 
 The pebbling $P=\left(P_1,\dots, P_N\right)$ from $\genpeb$ lasts $N$ rounds, pebbling each $i$ on round $i$. The algorithm operates in disjoint and consecutive intervals of $I_c=[(c-1)g+1,cg]$ where $g\in[d,N]$. At the start of $I_c$, we perform a ``light phase'' with the following start and end conditions:
\begin{enumerate}
    \item $\startlight_c\coloneqq\text{$P_{(c-1)g+1}= \{(c-1)g+1\}\cup S_{\le (c-1)g+1}\cup\parents({I_c})_{\le (c-1)g+1}$, and}$
    \item $\mathsf{EndLight}_c\coloneqq\text{$P_{cg}\subseteq S_{\le cg}\cup\{cg\}$.}$
\end{enumerate}
Intuitively, before we start the light phase, we need to have pebbles on the depth-reducing set, and the parents of the nodes we are about to pebble. By the end of the light phase, all we require for the light phase is that the depth-reducing set and $cg$ is pebbled. As the name suggests, we can define a low-CC pebbling to implement the light phase. 
For $j\in[g]$ and $k=cg + j$ we define the required pebbles for the $j$\th pebble of the $c$\th light phase as 
\[\lightreq^c_j=[(c-1)g+j:k]\cup S_{\le k}\cup\parents(I_c)_{\le k}.\]
To pebble the nodes in $I_c$, we will let $P_{(c-1)g+j}=\lightreq^c_{j}$. This allows the light phase to only add a pebble to $i\in I_c$ at step $i$, keeping the overall number of pebbles low.  
However, this leaves us unprepared for the next light phase (we need to satisfy $\startlight_{c+1}$ by step $cg+1$). To fix this, we wait until near the end of the light phase and start a ``balloon phase'', pebbling as many nodes as possible to quickly pebble a superset of the nodes needed for the next light phase. Since the depth-reducing set $S_{\le k}$ is pebbled on step $k$, we can always pebble $G([k])$ by step $k+d$, and then we can simply remove all the pebbles that are unneeded for $\startlight_{c+1}$. 
Specifically, the start and end conditions of the balloon phase are
\begin{enumerate}
    \item $\startballoon_c\coloneqq\text{$S_{\le cg-d+1}\subseteq P_{cg-d+1}$}$, and
    \item $\mathsf{EndBalloon}_c\coloneqq\text{$P_{cg}=[cg]$}$.
\end{enumerate}
This way at round $cg-d+1$ we start the balloon phase which pebbles all available nodes for $d$ steps as to eventually satisfy $\startlight_{c+1}$. Let $R(P_k)=\{v\mid \parents(v)\subseteq P_k\}$ denote the set of nodes that can be pebbled in the next step. Then we can define the balloon requirements per step as  $\balloonreq^c_{1}=\lightreq^c_{g-d}\cup R(\lightreq^c_{g-d})$ and for $1< j\le d$, \[\balloonreq^c_{j}=\balloonreq^c_{j-1}\cup R(\balloonreq^c_{j-1}).\] Now we can define the low-CC pebbling $P$ such that
\[P_{cg+j}=\begin{cases}
\lightreq^c_{j} &\text{if } j\le g-d\text{, and}\\
\lightreq^c_{j}\cup\balloonreq^c_{g-j}&\text{otherwise.}
\end{cases}\]
    

It follows that $P$ is a legal pebbling for $G$ \cite{C:AlwBlo16}. We have that $\abs{\lightreq^c_j}\le e+g(\delta+1)$, since each $\lightreq^c_j$ contains at most $S$, the parents of $I_c$, and $I_c$ itself. Next we have that each $\abs{\balloonreq^c_j}\le N$, so $\ppeb_\Ccc(P)\le N\left(\frac{Nd}{g}+e+(\delta+1)g\right)$. We essentially take the depth-reducing pebbling attack $\genpeb$ from Alwen and Blocki~\cite{C:AlwBlo16}, and adapt it to be reversible without changing the upper bound asymptotically.  
    \newcommand{\cleanup}{\mathcal C}
    \subsection{A Reversible Pebbling Attack}
    In this section, we define a reversible pebbling extension of $\genpeb$. We begin with an intuitive overview. We first observe that most pebbling rounds in $\genpeb$ are monotonic, i.e., $P_{i+1} \supset P_i$. Since monotonic transitions do not involve removing pebbles, these transitions remain legal in the reversible pebbling. However, the $\genpeb$ pebbling strategy does include occasionally include a non-monotonic transition at the end of each balloon phase where unnecessary pebbles are simply discarded before the next light phase. Suppose that $P_i$ denotes the pebbling state at the end of the balloon phase and $P_{i+1}$ denotes the pebbling configuration after discarding all of the unnecessary pebbles for the next light phase. The non-monotonic transition from $P_i$ to $P_{i+1}$ will (almost certainly) not be a legal reversible pebbling transition. Our main challenge is to define a legal reversible pebbling sequence which takes us from the pebbling state at the end of each balloon phase to the pebbling state and the beginning of the next light phase. However, while $P_i \not \subseteq P_{i+1}$ we do have $P_{i+1} \subseteq P_i$ since $P_{i+1}$ was obtained by discarding pebbles. Our key idea is to argue that there is a short (i.e., $\leq d$ rounds) monotonic pebbling sequences which takes us $P_{i+1}$ to $P_i$, i.e., we exploit the fact that any node in $P_{i+1}$ has depth at most $d$ in $G-P_i$ and run a balloon phase. Since this short pebbling sequence is monotonic, it is also reversible. Thus, there is a legal reversible sequence from pebbling state $P_i$ to $P_{i+1}$ in at most $d$ steps.   
    
        In the pebbling corresponding to $\genpeb$, we must remove all unnecessary pebbles after the balloon phase to match the precondition for the following light phase. This is inherently irreversible since they are all removed at once (instead of unpebbling them). For ease of analysis, we start each balloon phase after the corresponding light phase and adjust our notation accordingly. We denote the $j$\th step of the $c$\th light phase as $\lightreq^c_j$ and the balloon phase as $\balloonreq^c_j$. The light phases themselves remain the same. The first half of each balloon phase remains the same, but $\balloonreq_{cg-d+j}$ for $j\in[d]$ as defined above. However, we need to ``clean up'' after each balloon phase in order to meet the precondition for the following light phase, taking care to ensure these pebbling sequences are reversible. In this new balloon phase, we must satisfy the following while maintaining reversibility:
        \begin{enumerate}
            \item $\startballoon_c\coloneqq\text{$S_{\le cg-d+1}\subseteq \balloonreq^c_{g-d+1}$}$,
            \item $\midballoon_c\coloneqq\text{$\balloonreq^c_{g}=[cg]$}$, and
            \item $\mathsf{EndBalloon}_c=\text{$\balloonreq^c_{g+1}=\lightreq^{c+1}_1$}$.
        \end{enumerate}

        For a sequence of pebbling configurations $P=(P_1,\dots,P_t)$ let $\rev(P)=(P_t,\dots, P_1)$. We will use ``monotonic'' pebbling sequences to generate the reversible pebbling defined above. 
        \begin{definition}
            A sequence of pebbling moves $\langle P_1,\dots, P_t\rangle$ is \emph{monotonic} if $P_1\subseteq P_2\subseteq\dots\subseteq P_t$.
        \end{definition}
            
    Immediately, we get that each light phase and balloon phase is monotone. The following result shows that monotonic pebbling sequences are reversible, and the formal proof is left to \FullVersion{\appref{missing}}{the full version}. Intuitively, the additional rules added to the reversible pebbling game only restrict which pebbles we can remove. If a sequence is monotonic then these additional restrictions do not apply. 
        \newcommand{\monorev}{
           If a legal (non-reversible) pebbling sequence $P=\langle P_1,\dots, P_t\rangle$ is monotonic, then $P$ is a legal reversible pebbling sequence.
        }
        \begin{lemma}\lemlab{monorev}
            \monorev
            \label{monorev}
        \end{lemma}
    
    Now we can use the reverse of the greedy pebbling sequence from $\lightreq_{cg+1}$ to $[cg]$.
    \begin{claim}[Satisfying $\startlight_c$ and $\mathsf{EndLight}_c$]
       The sequence $(\lightreq^c_1,\allowbreak\dots, \lightreq^c_{g})$ as defined above is a monotonic pebbling sequence. 
    \end{claim}
    \begin{proof}
        By construction each $\lightreq^c_i\subseteq \lightreq^c_{i+1}$.\qed
    \end{proof}
    
    Likewise, the first half of our balloon phase (which is the same as the classical version) is also monotonic, because it simply pebbles all possible nodes each round.
    \begin{claim}[Satisfying $\startballoon_c$ and $\midballoon_c$]
       The pebbling sequence $(\balloonreq^c_{1},\dots,\allowbreak\balloonreq^c_{d})$ is a legal monotonic pebbling sequence. 
    \end{claim}
    Next, we need to complete the balloon phase. For $j\in [d]$ we let \[\balloonreq^c_{d+j}=\balloonreq^c_{d-j+1}\cup{\lightreq^{c+1}_1.}\]
    \begin{claim}[Satisfying $\midballoon_c$ and $\Endballoon_c$]
        The pebbling sequence $(\balloonreq^c_{2d},\dots,\allowbreak\balloonreq^c_{d+1})$ is a legal monotonic pebbling sequence.
    \end{claim}
    \begin{proof}
        This follows from the fact that $(\balloonreq^c_{1},\dots, \balloonreq^c_d)$ is monotonic.\qed
    \end{proof}
  \newcommand{\pebrev}{P_{\text{rev}}}
Now we can define the first half of our low CC pebbling. Let $\lightreq^c=(\lightreq^c_1,\dots, \lightreq^c_g)$, $\balloonreq^c=(\balloonreq^c_1,\dots, \balloonreq^c_{2d})$, and \[\pebrev^1= \lightreq^1+\balloonreq^1+
\dots+\lightreq^{\ceil{N/g}}+\balloonreq^{\ceil{N/g}},\] where $+$ denotes sequence concatenation. The sequence $\pebrev^1$ is a legal reversible pebbling sequence by the construction of the $\lightreq^c_j$ and $\balloonreq^c_j$ and by \lemref{reverseMerge}. The proof of \lemref{reverseMerge} can be found in \FullVersion{\appref{missing}}{the full version}.
\newcommand{\reverseMerge}{
 Let $\langle P_1,\dots, P_t\rangle$ and $\langle P'_1,\dots, P'_{t'}\rangle$ be two legal reversible pebblings for some graph $G$ such that $P_t=P'_{t'}$. Then for any $T\subseteq P_t$, 
    \[\langle P_1,\dots, P_t, P'_{t'-1}\cup T, P'_{t'-2}\cup T,\dots, P'_{1}\cup T\rangle\] is also a legal reversible pebbling sequence for $G$.
}
\begin{lemma}\lemlab{reverseMerge}
        \reverseMerge
\end{lemma}

Now we can construct the last part of the pebbling, which simply cleans up by reversing all the prior steps while keeping $N$ pebbled. For a pebbling sequence $Q=(Q_1,\dots, Q_t)$ and a set $K$ we let $
Q(K)=(Q_1\cup K,\dots, Q_t\cup K)$. Let \[{\pebrev^2}'=\rev(\balloonreq^{\ceil{N/g}})+\rev(\lightreq^{\ceil{N/g}-1}),\dots,\rev( \balloonreq^1)+\rev(\lightreq^1)+(\emptyset)\] and \[\pebrev^2={\pebrev^2}'(\{N\}).\]
The final pebbling for $G$ is $\pebrev=\pebrev^1+\pebrev^2$.

For an arbitrary $(e,d)$-depth reducible DAG $G$ with depth-reducing set $S$ of size at most $e$ and any $g\in[d,N]$, we let $\rgenpeb(G,e,d,S,g)$ denote the pebbling for $G$ constructed exactly as $\pebrev$. The following lemma shows that $\pebrev$ is a legal reversible pebbling for $G$. The proof follows from \lemref{reverseMerge}.

\newcommand{\revpebLegal}{
 For any $(e,d)$-depth reducible graph $G$ with depth-reducing set $S$ of size at most $e$. Then for any $g\in[d,N]$, $\pebrev=\rgenpeb(G,e,d,S,g)$ is a legal reversible pebbling for $G$.
}
\begin{lemma}
    \revpebLegal
    \lemlab{revpebLegal}
\end{lemma}

Next we analyze the CC of $\pebrev$. This follows similarly to $\genpeb$, except we need to account for the cost of the extra length of the balloon phases and the cost of having to reverse the pebbling. 
\begin{theorem}\thmlab{thm:revcc}
    For any $(e,d)$-depth reducible graph $G$ on $N$ nodes and any $g\in[d, N]$, \[\pqpeb_\Ccc(G)\le 2N\left(\frac{2Nd}{g}+e+(\delta+1)g\right)+N+\frac{2Nd}{g}.\] 
\end{theorem}
\begin{proof}
    We already know that \[\sum_{c\in[\ceil{N/g}]}\sum_{i\in [g]}\abs{\lightreq^{c-1}_i}\le N/g(e+(\delta+1)g).\]
    Next $\balloonreq^c$ contains at most $2d$ pebbling steps, so \begin{align*}
        \sum_{c\in{\ceil{N/g}-1}}\sum_{i\in[d]}\abs{\balloonreq^{c-1}_i}&\le\sum_{c\in{\ceil{N/g}-1}}2Nd\\
        &\le N\frac{2Nd}{g}.
    \end{align*}

Then $\pqpeb_\Ccc(\pebrev^1)\le N\left(\frac{2Nd}{g}+e+(\delta+1)g\right)$. Next, it's immediate that \begin{align*}
    \pqpeb_\Ccc(\pebrev^2)\le \pqpeb_\Ccc(\pebrev^1)+\abs{\pebrev^1}&\le N\left(\frac{2Nd}{g}+e(\delta+1)g\right)+N+\frac{2Nd}{g},\\
\end{align*}
so \[\pqpeb_\Ccc(\pebrev)\le 2N\left(\frac{2Nd}{g}+e+(\delta+1)g\right)+N+\frac{2Nd}{g}.\qedmath\]
\end{proof}
For any iMHF corresponding to a DAG $G$ the reversible cumulative pebbling complexity obtained from our attack is identical to the attack from Alwen and Blocki~\cite{C:AlwBlo16}. In particular, for Argon2i-A and Argon2i-B we obtain \corref{Argon2iCMC}:

\begin{corollary} \corlab{Argon2iCMC}
Let $G_\ArgonA=(V_A=[N],E_A)$ and $G_\ArgonB=(V_B=[N],E_B)$ be randomly sampled graphs according to the Argon2i-A and Argon2i-B edge distributions, respectively. Then with high probability, we have $\pqpeb_\Ccc(G_\ArgonA)=\O{N^{1.75} \log N }$ and $\pqpeb_\Ccc(G_\ArgonB)=\O{N^{1.8}}$.
\end{corollary}
\begin{proof}
Alwen and Blocki~\cite{C:AlwBlo16} argued that (whp) a random Argon2i-A DAG $G_\ArgonA=(V_A=[N],E_A)$ is $(e,d)$-reducible with $d=\sqrt{N}$ and $e=\O{N^{0.75} \log N}$. The result for Argon2i-A now follows directly from  \thmref{thm:revcc} by setting $g=e$. Alwen and Blocki~\cite{ESP:AlwBlo17} also argued that (whp) a random Argon2i-B DAG $G_\ArgonB=(V_B=[N],E_B)$ is $(e,d)$-reducible with $d=N^{0.6}$ and $e=\O{N^{0.8}}$. The result for Argon2i-B now follows directly from \thmref{thm:revcc} by setting $g=e$.\qed
\end{proof}

Similar to Alwen and Blocki~\cite{C:AlwBlo16} we can also obtain a general upper bound for any DAG $G$ with constant indegree. 

\begin{corollary} \corlab{generalCMCDAG}
For any DAG $G=(V=[N],E)$ with constant indegree $\delta = \O{1}$ the reversible cumulative pebbling cost at most $\pqpeb_\Ccc(G)=\O{\frac{N^2 \log \log N}{ \log N}}$.
\end{corollary}
\begin{proof}
Any DAG $G=(V=[N],E)$ with constant indegree $\delta = \O{1}$ is $(e,d)$-reducible with $d= \frac{N}{\log^2 N}$ and $e= \O{\frac{N \log \log N}{\log N}}$. The result now follows immediately from \thmref{thm:revcc} by setting $g=e$.\qed
\end{proof}

\section{\FullVersion{Conclusion and Open Questions}{Open Questions}}
\seclab{sec:open}

\FullVersion{We introduced the parallel reversible pebbling game and applied it to analyze the reversible space-time complexity of a line graph, Argon2i-A, Argon2i-B, and DRSample. Our motivation is to understand the post-quantum resistance of these MHFs to brute-force pre-image attacks. In particular, we showed that the reversible space-time cost of pebbling a line graph of size $N$ is $\O{N^{1+\frac{2}{\sqrt{\log N}}}}$ by extending Bennett's reversible pebbling strategy \cite{Bennett89}. We also showed that there is a reversible pebbling strategy for an $(e,d)$-reducible indegree-2 DAG $G$ of size $N$ with the space-time cost $\O{Ne+Nd2^d}$, which becomes meaningful whenever $e=o(N)$ and $d2^d=o(N)$. We applied this attack to Argon2i-A and Argon2i-B to yield reversible pebbling attacks with space-time cost $\O{N^2\log\log N/\sqrt{\log N}}$ and $\O{N^2/\sqrt[3]{\log N}}$ for Argon2i-A and Argon2i-B, respectively. Finally, we introduced a general reversible pebbling attack on a DAG $G$ of size $N$ by reducing the graph to a line graph $L_{\lceil N/b\rceil}$, and given a legal quantum pebbling $P'$ of the line graph with space-time cost $\pqpeb_\Cspacetime(P')$, we provided a legal quantum pebbling $P$ of $G$ with space-time cost $\O{sN+b^2\pqpeb_\Cspacetime(P')}$, where $s$ denotes the number of skip nodes in $G$. Tuning the parameter $b=\O{N/\log^2N}$ the skip number for DRSample is $\O{\frac{N\log\log N}{\log N}}$ leading to a reversible pebbling attack with space-time cost $\O{N^2\log\log N/\log N}$. We also studied the cumulative pebbling cost of reversible pebblings by extending the depth-reducing attack from Alwen and Blocki~\cite{C:AlwBlo16} on depth-reducible graphs.}{}

One open question is to determine if there is a DAG with constant indegree having (parallel) reversible space-time cost $\Omega(N^2)$. Alternatively, is there a generic reversible pebbling attack which rules out this possibility. Blocki et al.~\cite{C:BHKLXZ19} proposed a new iMHF candidate called DRS+BRG (DRSample plus Bit-Reversal Graph) by overlaying a bit-reversal graph \cite{LT82,AC:ForLucWen14} on top of DRSample, which provides the best resistance to known \emph{classical} pebbling attacks. This graph could plausibly have parallel reversible space-time cost $\Omega(N^2)$. In particular, none of the reversible pebbling attacks we proposed perform well against DRS+BRG --- there is no small depth-reducing set for DRS+BRG and the extra bit-reversal edges ensure that the number of skip nodes will be large as well. 

Another research challenge is to either develop asymptotically stronger reversible pebbling attacks for iMHFs such as Argon2i or establish lower bounds on the parallel reversible space-time complexity. Finally, Alwen et al.~\cite{EC:AlwBloPie17} defined a recursive (non-reversible) pebbling attack for DAGs that are $(e_i,d_i)$-depth-reducible for a range of parameters $(e_i,d_i)$ with $e_i > e_{i+1}$ and $d_{i+1} < d_i$. The recursive pebbling attack often leads to  improved pebbling attacks with asymptotically lower cumulative pebbling cost (CC). Thus, extending the recursive pebbling attack to the reversible pebbling setting is a natural challenge.  


\section*{Acknowledgements}
Jeremiah Blocki was supported in part by the National Science Foundation under NSF CAREER Award CNS-2047272 and NSF Award CCF-1910659. 
Seunghoon Lee was supported in part by the Center for Science of Information (NSF CCF-0939370). 
Blake Holman was supported in part by a Ross Fellowship at Purdue University and by a Ford Foundation Fellowship. We would like to thank anonymous reviewers for helpful feedback which improved this paper. 

\def\shortbib{0}
\bibliographystyle{alpha}
\bibliography{references,abbrev3,crypto}
\newpage
\appendix
\allowdisplaybreaks

\section{Reversible Pebbling Strategies on a Line Graph}\applab{app:line}

In this section, we review the (sequential) reversible pebbling strategy for a line graph $L_N$ from Li and Vit\'{a}nyi \cite{LiVit96} which translated Bennett's reversible simulation \cite{Bennett89} into a (sequential) reversible pebbling game, and we give a reversible pebbling strategy with a better space-time cost. We remark that a similar argument seems to be implicitly assumed in Bennett \cite{Bennett89}, though no explicit description of the reversible pebbling is provided. Hence, we include this result for completeness.

The strategy from Li and Vit\'{a}nyi \cite{LiVit96} is as follows: let $I(k)=I(k-1)\circ i_{k-1}\circ I(k-2)\circ i_{k-2}\circ \ldots \circ I_1\circ i_1 \circ I_0\circ i_0$, where for $j=0,1,\ldots,k$, $I(j)$ denotes the sequence of consecutive locations in $L_N$, $I(0)=\{\}$, and $i_j$ denotes the node incident to $I(j)$. Let $N(k)$ be the size of $I(k)$. Then we have $N(k)=\sum_{i=1}^{k-1}(N(i)+1)$ with $N(0)=0$, which implies $N(k)=2^k-1$. The reversible pebbling works as we pebble the block $I(j)$, pebble $i_j$, and unpebble $I(j)$. If $P$ denotes such reversible pebbling, then Li and Vit\'{a}nyi \cite{LiVit96} showed that $P\in\qPeb{I(k)}$ and $\qpeb_\Cspace(P)=\O{\log N(k)}$ and $\qpeb_\Ctime(P)=\O{N(k)^{\log 3}}$, since if we denote $S(k)$ (resp. $T(k)$) the space (resp. time) cost to reversibly pebble $I(k)$ then it satisfies the recurrence relation $S(k)=\max_i\{i+S(k-i)\}=S(k-1)+1$ (resp. $T(k)=2T(k-1)+1+2T(k-2)+1+\ldots+2T(1)+1=3T(k-1)+1$). Taken together, the reversible space-time cost for this pebbling strategy is $\qpeb_\Cspacetime(P)=\O{N(k)^{\log 3}\log N(k)}$, which implies that $\qpeb_\Cspacetime(L_N)=\O{N^{\log 3}\log N}$.

\subsubsection{A Reversible Pebbling Strategy with a Better ST Cost.}

We extend this approach and first recursively define the sequence of consecutive locations $I(k)$ (of nodes in a line graph) as
\[ I(k) = \begin{cases}
I(k-1)'\circ I(k-2)'\circ \ldots\circ I(0)',&\text{if }k>0\\
\{\},&\text{if }k=0,
\end{cases}
\]
where $\circ$ denotes concatenation and for $0\leq j<k$, $I(j)'$ is defined as
\[I(j)' := I(j)^{(1)}\circ i_j^{(1)}\circ I(j)^{(2)}\circ i_j^{(2)}\circ\ldots\circ I(j)^{(c)}\circ i_j^{(c)},\] 
where $A^{(\ell)}$ denotes the $\ell\th$ copy of $A$. 
Let $N(k)$ be the size of $I(k)$. Since $I(j)'$ consists of $c$ copies of $I(j)$ and a single node $i_j$, we observe that $N(k)$ satisfies the following recursive relation: 
\begin{align*}
N(k) &= c(N(k-1)+1) + c(N(k-2)+1) + \cdots + c(N(0)+1) \\
&= c(N(k-1)+1) + N(k-1) \\
&= (c+1)N(k-1) + c,
\end{align*}
which implies that $N(k) = \Theta((c+1)^k)$. We have the following pebbling strategy $\prevpeb(I(k))$ for $I(k)$ as shown in \algref{alg:prevpeb}. Here, $\prevpeb^{-1}(I(k))$ denotes the procedure which runs $\prevpeb(I(k))$ in reverse order, i.e., it starts with the final configuration of $\prevpeb(I(k))$ and ends with the starting configuration of $\prevpeb(I(k))$. Intuitively, it sequentially pebbles $I(k-1)',\ldots,I(0)'$ in this order. When we pebble $I(j)'=I(j)^{(1)}\circ i_j^{(1)}\circ I(j)^{(2)}\circ i_j^{(2)}\circ\ldots\circ I(j)^{(c)}\circ i_j^{(c)}$, we run $\prevpeb(I(j)^{(1)})$ to pebble the first block, and we pebble the incident node $i_j^{(1)}$. After that, for $\ell=2,\ldots,c$, we run $\prevpeb^{-1}(I(j)^{(\ell-1)})$ to remove pebbles from the previous block and we move forward to pebble the next block by running $\prevpeb(I(j)^{(\ell)})$. 
\begin{algorithm}[ht!]
\caption{The Procedure $\prevpeb(I(k))$.}
\alglab{alg:prevpeb}
\DontPrintSemicolon
\SetAlgoLined
\KwIn{}
\KwOut{}
\vspace{0.3em}
\For{$j=k-1,\ldots,0$}{
	Run $\prevpeb(I(j)^{(1)})$\;
	Pebble node $i_j^{(1)}$\;
	\For{$\ell=2,\ldots,c$}{
	    Run $\prevpeb^{-1}(I(j)^{(\ell-1)})$\;
		Run $\prevpeb(I(j)^{(\ell)})$\; 
		Pebble node $i_j^{(\ell)}$
	}
}
\Return{}
\end{algorithm}

Now we have the following lemma.

\begin{lemma}\lemlab{lem:line}
For a line graph $L_N$, there exists a reversible pebbling $P\in \qPeb{L_N}$ such that $\qpeb_\Cspacetime(P) = \O{N^{1+(2+o(1))\frac{1}{\sqrt{\log N}}}}$, and a parallel reversible pebbling $P'\in \pqPeb{L_N}$ with $\pqpeb_\Cspacetime(P')=\O{N^{1+\frac{2}{\sqrt{\log N}}}}$.
\end{lemma}

\begin{proof}
Let $P=\prevpeb(I(k))$. Then we can easily see that $P\in\qPeb{I(k)}$.

We first consider the space cost of $P=\prevpeb(I(k))$. Intuitively, we first observe that when we pebble $I(k)$, the space cost of pebbling $I(k-1)'$ dominates the space cost of pebbling $I(k-2)',\ldots,I(0)'$ since they are recursively defined. Now when pebbling $I(k-1)'$, we would need to remove pebbles from $I(k-1)^{(\ell)}$ and add pebbles on $I(k-1)^{(\ell+1)}$ for each $\ell$, and further, we would need pebbling $c$ intermediate nodes $i_{k-1}^{(1)},\ldots,i_{k-1}^{(c)}$. Hence, the space complexity of $\prevpeb(I(k))$ satisfies the recurrence relation $\qpeb_\Cspace(\prevpeb(I(k))) \leq  \qpeb_\Cspace(\prevpeb(I(k-1))) + c$. Solving the recurrence relation gives us $\qpeb_\Cspace(P)=\O{ck}$.

When it comes to the time cost of $P$, we would need to be careful and we define $T_f(j)$ to be the amount of time to place a pebble on the last node of $I(j)$, without removing pebbles from earlier nodes in $I(j)$, and we define $T_r(j)$ to be the amount of time to remove such nodes afterwards. Since the pebbling is reversible, we can easily observe that $T_f(j)=T_r(j)$ for each $j$. In \algref{alg:prevpeb}, when we pebble $I(k)$, we pebble $I(k-1)'$ first which contains the procedure that (1) we pebble $I(k-1)^{(1)}$ and $i_{k-1}^{(1)}$, (2) we remove pebble from $I(k-1)^{(1)}$ and pebble $I(k-1)^{(2)}$, and (3) keep repeating this until the last copy $I(k-1)^{(c)}$ and $i_{k-1}^{(c)}$ is pebbled. Taken together, we have the following recurrence relation for $T_f(k)$:

\begin{align*}
T_f(k) &= 2c(T_f(k-1)+1) + T_r(k-1) + \underbrace{2c(T_f(k-2)+1) + T_r(k-2) + \cdots}_{=T_f(k-1)} \\
&= (2c+1)T_f(k-1)+T_r(k-1)+c = (2c+2)T_f(k-1)+c,
\end{align*}
which tells us that $T_f(k)=\O{(2c+2)^k}$. Hence, $\qpeb_\Ctime(P)=T_f(k)+T_r(k)=\O{(2c+2)^k}$ and we have $\qpeb_\Cspacetime(P)=\qpeb_\Cspace(P)\qpeb_\Ctime(P)=\O{ck(2c+2)^k}$.

To express $\qpeb_\Cspacetime(P)$ in terms of $N(k)=\Theta((c+1)^k)$, by setting $c=2^k$ we observe that
\[ \frac{\qpeb_\Cspacetime(P)}{N(k)} = \O{ck2^k} = \O{k4^k}. \]
We observe that $k4^k = (2^{k^2})^{\frac{2k+\log k}{k^2}} = N(k)^{\frac{2k+\log k}{k^2}}$. Since $N(k)=\Omega(2^{k^2})$ implies $k = \O{\sqrt{\log N(k)}}$, we have
\begin{align*}
    \qpeb_\Cspacetime(P) &= \O{N(k)\cdot k4^k} = \O{N(k)^{1+\frac{2k+\log k}{k^2}}}\\ 
    &= \O{N(k)^{1+\frac{2}{\sqrt{\log N(k)}}+\frac{\log\log N(k)}{2\log N(k)}}} = \O{N(k)^{1+(2+o(1))\frac{1}{\sqrt{\log N(k)}}}}.
\end{align*}
We can parallelize this strategy by removing pebbles from $I(k-1)^{(\ell)}$ and adding pebbles on $I(k-1)^{(\ell+1)}$ in parallel. If we denote this pebbling strategy $P'=\mathsf{PRevPeb}(I(k))$, then the recurrence relation for the space cost becomes $\pqpeb_\Cspace(\mathsf{PRevPeb}(I(k)))\leq 2\pqpeb_\Cspace(\mathsf{PRevPeb}(I(k-1)))+c$, which yields $\pqpeb_\Cspace(P')=\O{c2^k}$. On the other hand, parallelizing it could save time in each recursion by half, which implies that the recurrence relation for the time cost becomes $T_f(k)=(c+2)T_f(k-1)+c$, which gives us the time cost $\pqpeb_\Ctime(P')=\O{(c+2)^k}$. In this case, $\pqpeb_\Cspacetime(P')=\O{c(2c+4)^k}$. To express $\qpeb_\Cspacetime(P)$ in terms of $N(k)=\Theta((c+1)^k)$, by setting $c+1=2^k$ we observe that
\begin{align*}
\frac{\pqpeb_\Cspacetime(P')}{N(k)} &= \O{\frac{c(2c+4)^k}{(c+1)^k}}\\
&= \O{c2^k\left(1+\frac{1}{c+1}\right)^k}\\
&= \O{2^k\cdot 2^k \cdot 1}= \O{4^k},
\end{align*}
since $\left(1+\frac{1}{c+1}\right)^k = \left( 1+\frac{1}{2^k}\right)^k = \Theta(1)$\footnote{for $k>0$ we have $(1+\frac{1}{2^k})^k < (1+\frac{1}{2^k})^{2^k}<e$.}. Since $\O{4^k} = \O{(2^{k^2})^{2/k}} = \O{N(k)^{2/k}}$ and $N(k)=\Theta(2^{k^2})$ implies $k = \Theta(\sqrt{\log N(k)})$, we have
\[ \pqpeb_\Cspacetime(P') = \O{N(k)\cdot 4^k} = \O{N(k)^{1+\frac{2}{k}}} = \O{N(k)^{1+\frac{2}{\sqrt{\log N(k)}}}}. \qedmath\]
\end{proof}


\section{Reversible Pebbling Strategy Examples}\applab{examples}

\subsection{Example on an $(e,d)$-Reducible Graph}\applab{example:ed}

In this example, we give a DAG $G=(V=[N],E)$ with $N=16$, and $E=\{(i,i+1):i\in[15]\}\cup\{((i-1)4+1,(i-1)4+3),((i-1)4+1,(i-1)4+4),((i-1)4+1,(i-1)4+5),((i-1)4+1,(i-1)4+6):i\in[3]\}\cup\{(13,15),(13,16)\}$, as shown in \figref{fig:reducible}. We observe that $G$ is $(4,3)$-reducible. 
 
\begin{figure}[ht!]
\resizebox{\linewidth}{!}{
  \begin{tikzpicture} 
  [scale=0.88,node distance=1cm,auto,font=\small,
    every node/.style={node distance=2cm},
    peb/.style={rectangle,draw,fill=black,inner sep=2pt, minimum width=0.4cm,minimum height=0.4cm},
    unpeb/.style={rectangle,draw,fill=white,inner sep=2pt, minimum width=0.4cm,minimum height=0.4cm},
    node/.style={circle,draw,black,fill=black!20,inner sep=2pt, minimum width=0.6cm},
    delete/.style={circle,draw,black!10,fill=black!5,text=black!10,inner sep=2pt, minimum width=0.6cm}
    ]
\node[node] (1) at (0,0) {$1$};
\node[node] (2) at (1.5,0) {$2$};
\node[node] (3) at (3,0) {$3$};
\node[node] (4) at (4.5,0) {$4$};
\node[node] (5) at (6,0) {$5$};
\node[node] (6) at (7.5,0) {$6$};
\node[node] (7) at (9,0) {$7$};
\node[node] (8) at (10.5,0) {$8$};
\node[node] (9) at (12,0) {$9$};
\node[node] (10) at (13.5,0) {$10$};
\node[node] (11) at (15,0) {$11$};
\node[node] (12) at (16.5,0) {$12$};
\node[node] (13) at (18,0) {$13$};
\node[node] (14) at (19.5,0) {$14$};
\node[node] (15) at (21,0) {$15$};
\node[node] (16) at (22.5,0) {$16$};

\path[draw,thick,->] (1) -- (2);
\path[draw,thick,->] (2) -- (3);
\path[draw,thick,->] (3) -- (4);
\path[draw,thick,->] (4) -- (5);
\path[draw,thick,->] (5) -- (6);
\path[draw,thick,->] (6) -- (7);
\path[draw,thick,->] (7) -- (8);
\path[draw,thick,->] (8) -- (9);
\path[draw,thick,->] (9) -- (10);
\path[draw,thick,->] (10) -- (11);
\path[draw,thick,->] (11) -- (12);
\path[draw,thick,->] (12) -- (13);
\path[draw,thick,->] (13) -- (14);
\path[draw,thick,->] (14) -- (15);
\path[draw,thick,->] (15) -- (16);

\path[draw,thick,->] (1) edge[bend left=30] (3);
\path[draw,thick,->] (1) edge[bend left=35] (4);
\path[draw,thick,->] (1) edge[bend right=20] (5);
\path[draw,thick,->] (1) edge[bend right=25] (6);

\path[draw,thick,->] (5) edge[bend left=30] (7);
\path[draw,thick,->] (5) edge[bend left=35] (8);
\path[draw,thick,->] (5) edge[bend right=20] (9);
\path[draw,thick,->] (5) edge[bend right=25] (10);

\path[draw,thick,->] (9) edge[bend left=30] (11);
\path[draw,thick,->] (9) edge[bend left=35] (12);
\path[draw,thick,->] (9) edge[bend right=20] (13);
\path[draw,thick,->] (9) edge[bend right=25] (14);

\path[draw,thick,->] (13) edge[bend left=30] (15);
\path[draw,thick,->] (13) edge[bend left=35] (16);

\node[delete] (r1) at (0,-2) {$1$};
\node[node] (r2) at (1.5,-2) {$2$};
\node[node] (r3) at (3,-2) {$3$};
\node[node] (r4) at (4.5,-2) {$4$};
\node[delete] (r5) at (6,-2) {$5$};
\node[node] (r6) at (7.5,-2) {$6$};
\node[node] (r7) at (9,-2) {$7$};
\node[node] (r8) at (10.5,-2) {$8$};
\node[delete] (r9) at (12,-2) {$9$};
\node[node] (r10) at (13.5,-2) {$10$};
\node[node] (r11) at (15,-2) {$11$};
\node[node] (r12) at (16.5,-2) {$12$};
\node[delete] (r13) at (18,-2) {$13$};
\node[node] (r14) at (19.5,-2) {$14$};
\node[node] (r15) at (21,-2) {$15$};
\node[node] (r16) at (22.5,-2) {$16$};

\path[draw,thick,->,color=black!10] (r1) -- (r2);
\path[draw,thick,->] (r2) -- (r3);
\path[draw,thick,->] (r3) -- (r4);
\path[draw,thick,->,color=black!10] (r4) -- (r5);
\path[draw,thick,->,color=black!10] (r5) -- (r6);
\path[draw,thick,->] (r6) -- (r7);
\path[draw,thick,->] (r7) -- (r8);
\path[draw,thick,->,color=black!10] (r8) -- (r9);
\path[draw,thick,->,color=black!10] (r9) -- (r10);
\path[draw,thick,->] (r10) -- (r11);
\path[draw,thick,->] (r11) -- (r12);
\path[draw,thick,->,color=black!10] (r12) -- (r13);
\path[draw,thick,->,color=black!10] (r13) -- (r14);
\path[draw,thick,->] (r14) -- (r15);
\path[draw,thick,->] (r15) -- (r16);

\path[draw,thick,->,color=black!10] (r1) edge[bend left=30] (r3);
\path[draw,thick,->,color=black!10] (r1) edge[bend left=35] (r4);
\path[draw,thick,->,color=black!10] (r1) edge[bend right=20] (r5);
\path[draw,thick,->,color=black!10] (r1) edge[bend right=25] (r6);

\path[draw,thick,->,color=black!10] (r5) edge[bend left=30] (r7);
\path[draw,thick,->,color=black!10] (r5) edge[bend left=35] (r8);
\path[draw,thick,->,color=black!10] (r5) edge[bend right=20] (r9);
\path[draw,thick,->,color=black!10] (r5) edge[bend right=25] (r10);

\path[draw,thick,->,color=black!10] (r9) edge[bend left=30] (r11);
\path[draw,thick,->,color=black!10] (r9) edge[bend left=35] (r12);
\path[draw,thick,->,color=black!10] (r9) edge[bend right=20] (r13);
\path[draw,thick,->,color=black!10] (r9) edge[bend right=25] (r14);

\path[draw,thick,->,color=black!10] (r13) edge[bend left=30] (r15);
\path[draw,thick,->,color=black!10] (r13) edge[bend left=35] (r16);

  \end{tikzpicture}}
  \caption{An $(e,d)$-reducible DAG $G$ of $N=4N'$ nodes, with $e=N'$ and $d=3$ (we set $N'=4$ in the figure above). Note that with depth-reducing set $S=\{1,5,9,13\}$, we have an original DAG $G$ (top) and the induced subgraph $G-S$ (bottom).} \figlab{fig:reducible}
\end{figure}
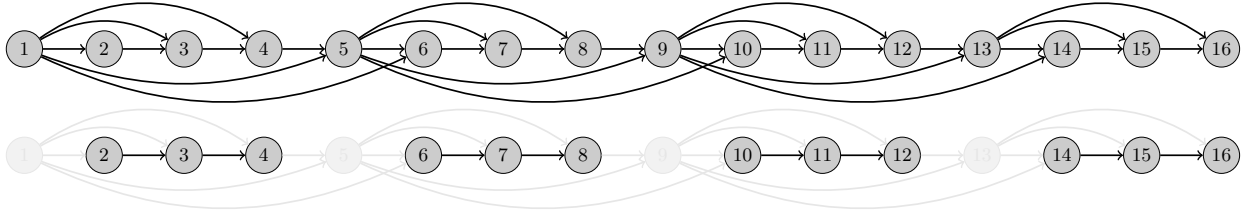

Recall that $P=(P_0,P_1,\ldots,P_{2N})$ such that $P_0=\varnothing$, for $v\in[N],P_v\coloneqq S_{\leq v}\cup B_v$, and for $N<v\leq 2N$, $P_v\coloneqq P_{2N-v}\cup\{N\}$ is a legal reversible pebbling for $G$, as shown in \lemref{lem:edlegal}, where $B_v := \bigcup_{j=1}^{d+1}\bigcup_{i=j}^{d+1} \left( A_{v+1-j,S,i}\cup A_{v-1+j,S,i} \right)$, with the definition $A_{w,S,i} \coloneqq \left\{v:\lpath_{G-S_{\leq w-1}}(v,w)= i\right\}$. For example, when we compute $P_8$ for the graph above, it is described as 
\begin{align*}
P_8 &= S_{\leq 8}\cup B_8\\
    &= \{1,5\} \cup \bigcup_{j=1}^4\bigcup_{i=j}^4 (A_{9-j,S,i}\cup A_{7+j,S,i})\\
    &= \{1,5\} \cup (A_{8,S,1}\cup A_{8,S,2} \cup A_{8,S,3}\cup A_{8,S,4}) \cup (A_{7,S,2}\cup A_{7,S,3}\cup A_{7,S,4} \cup A_{9,S,2} \cup A_{9,S,3} \cup A_{9,S,4})\\
    &\qquad \cup (A_{6,S,3} \cup A_{6,S,4} \cup A_{10,S,3} \cup A_{10,S,4}) \cup (A_{5,S,4} \cup A_{11,S,4})\\
    &= \{1,5\} \cup \{6,7 \} \cup \{6,7,8\} \cup \{ \} \cup \{2 \}\\
    &= \{1,2,5,6,7,8\}.
\end{align*}
Then entire pebbling process is illustrated in \figref{fig:edpebbling}. Note that for our example, $\pqpeb_\Ctime(P)=32=2N$ and $\pqpeb_\Cspace(P)=9$, which leads to $\pqpeb_\Cspacetime(P)=32\cdot 9 = 288$. While this is not a significant improvement on the na\"ive pebbling strategy for small $N=16$, the space-time costs scale with $\O{N}$ for the graphs defined above.

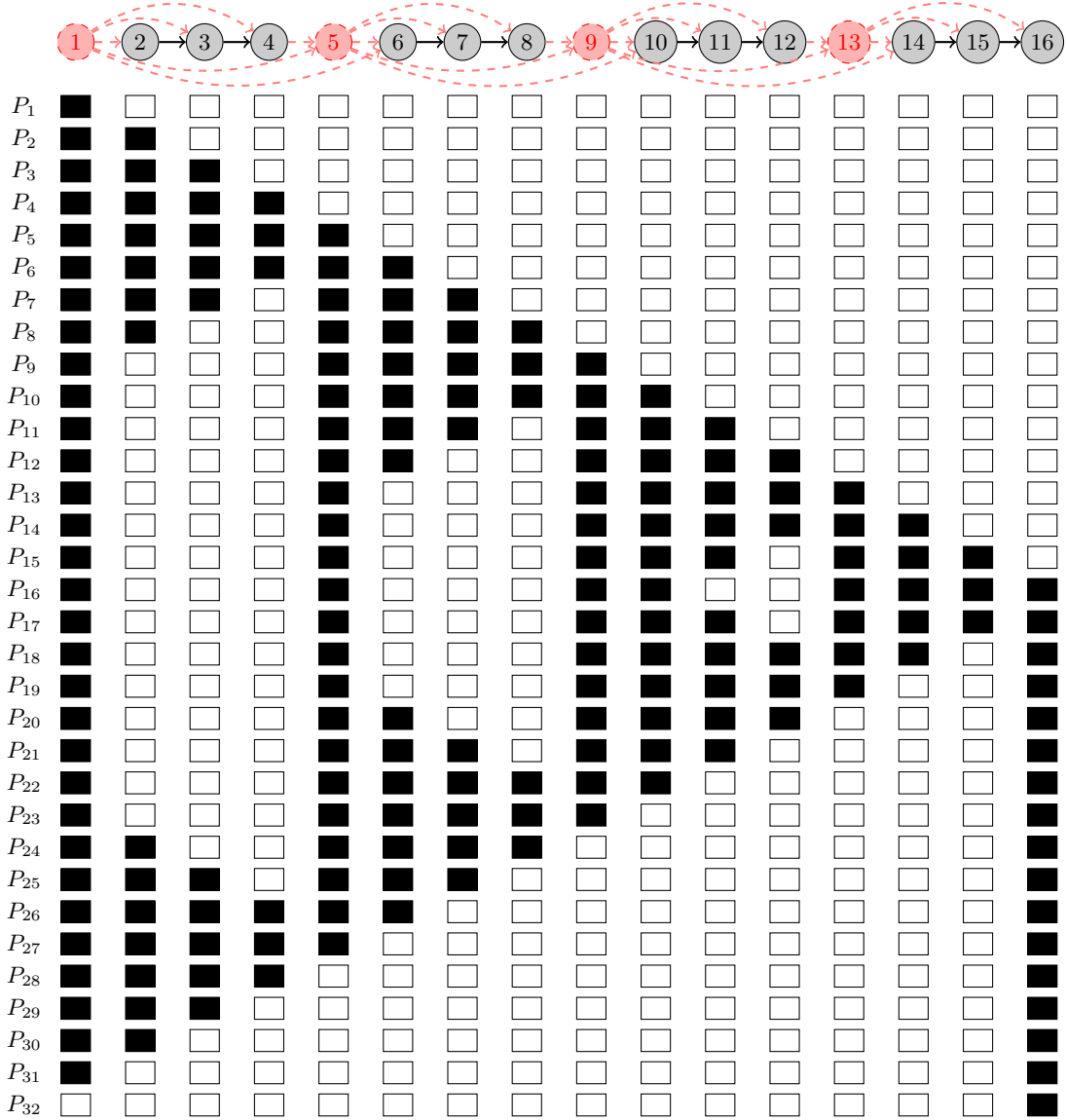
\begin{figure}[ht!]
\centering
  \begin{tikzpicture} 
  [scale=0.88,node distance=1cm,auto,font=\small,
    every node/.style={node distance=2cm},
    reduce/.style={draw,circle,dashed,red,fill=red!30,inner sep=2pt, minimum width=0.5cm},
    peb/.style={rectangle,draw,fill=black,inner sep=2pt, minimum width=0.4cm,minimum height=0.3cm},
    unpeb/.style={rectangle,draw,fill=white,inner sep=2pt, minimum width=0.4cm,minimum height=0.3cm},
    node/.style={circle,draw,black,fill=black!20,inner sep=2pt, minimum width=0.5cm}
    ]
\node[reduce] (1) at (0,0) {$1$};
\node[node] (2) at (1,0) {$2$};
\node[node] (3) at (2,0) {$3$};
\node[node] (4) at (3,0) {$4$};
\node[reduce] (5) at (4,0) {$5$};
\node[node] (6) at (5,0) {$6$};
\node[node] (7) at (6,0) {$7$};
\node[node] (8) at (7,0) {$8$};
\node[reduce] (9) at (8,0) {$9$};
\node[node] (10) at (9,0) {$10$};
\node[node] (11) at (10,0) {$11$};
\node[node] (12) at (11,0) {$12$};
\node[reduce] (13) at (12,0) {$13$};
\node[node] (14) at (13,0) {$14$};
\node[node] (15) at (14,0) {$15$};
\node[node] (16) at (15,0) {$16$};

\path[draw,thick,->,dashed,red!50] (1) -- (2);
\path[draw,thick,->] (2) -- (3);
\path[draw,thick,->] (3) -- (4);
\path[draw,thick,->,dashed,red!50] (4) -- (5);
\path[draw,thick,->,dashed,red!50] (5) -- (6);
\path[draw,thick,->] (6) -- (7);
\path[draw,thick,->] (7) -- (8);
\path[draw,thick,->,dashed,red!50] (8) -- (9);
\path[draw,thick,->,dashed,red!50] (9) -- (10);
\path[draw,thick,->] (10) -- (11);
\path[draw,thick,->] (11) -- (12);
\path[draw,thick,->,dashed,red!50] (12) -- (13);
\path[draw,thick,->,dashed,red!50] (13) -- (14);
\path[draw,thick,->] (14) -- (15);
\path[draw,thick,->] (15) -- (16);

\path[draw,thick,->,dashed,red!50] (1) edge[bend left=30] (3);
\path[draw,thick,->,dashed,red!50] (1) edge[bend left=35] (4);
\path[draw,thick,->,dashed,red!50] (1) edge[bend right=20] (5);
\path[draw,thick,->,dashed,red!50] (1) edge[bend right=25] (6);

\path[draw,thick,->,dashed,red!50] (5) edge[bend left=30] (7);
\path[draw,thick,->,dashed,red!50] (5) edge[bend left=35] (8);
\path[draw,thick,->,dashed,red!50] (5) edge[bend right=20] (9);
\path[draw,thick,->,dashed,red!50] (5) edge[bend right=25] (10);

\path[draw,thick,->,dashed,red!50] (9) edge[bend left=30] (11);
\path[draw,thick,->,dashed,red!50] (9) edge[bend left=35] (12);
\path[draw,thick,->,dashed,red!50] (9) edge[bend right=20] (13);
\path[draw,thick,->,dashed,red!50] (9) edge[bend right=25] (14);

\path[draw,thick,->,dashed,red!50] (13) edge[bend left=30] (15);
\path[draw,thick,->,dashed,red!50] (13) edge[bend left=35] (16);

\node[peb] at (0,-1) {};
\foreach \x in {1,...,15}
	\node[unpeb] at (\x,-1) {};
\foreach \x in {0,1}
	\node[peb] at (\x,-1.5) {};
\foreach \y in {2,...,15}
	\node[unpeb] at (\y,-1.5) {};
\foreach \x in {0,1,2}
	\node[peb] at (\x,-2) {};
\foreach \y in {3,...,15}
	\node[unpeb] at (\y,-2) {};
\foreach \x in {0,1,2,3}
	\node[peb] at (\x,-2.5) {};
\foreach \y in {4,...,15}
	\node[unpeb] at (\y,-2.5) {};
\foreach \x in {0,1,2,3,4}
	\node[peb] at (\x,-3) {};
\foreach \y in {5,...,15}
	\node[unpeb] at (\y,-3) {};
\foreach \x in {0,...,5}
	\node[peb] at (\x,-3.5) {};
\foreach \y in {6,...,15}
	\node[unpeb] at (\y,-3.5) {};
\foreach \x in {0,1,2,4,5,6}
	\node[peb] at (\x,-4) {};
\foreach \y in {3,7,8,...,15}
	\node[unpeb] at (\y,-4) {};
\foreach \x in {0,1,4,5,6,7}
	\node[peb] at (\x,-4.5) {};
\foreach \y in {2,3,8,9,...,15}
	\node[unpeb] at (\y,-4.5) {};
\foreach \x in {0,4,5,6,7,8}
	\node[peb] at (\x,-5) {};
\foreach \y in {1,2,3,9,10,...,15}
	\node[unpeb] at (\y,-5) {};
\foreach \x in {0,4,5,...,9}
	\node[peb] at (\x,-5.5) {};
\foreach \y in {1,2,3,10,11,...,15}
	\node[unpeb] at (\y,-5.5) {};
\foreach \x in {0,4,5,6,8,9,10}
	\node[peb] at (\x,-6) {};
\foreach \y in {1,2,3,7,11,12,...,15}
	\node[unpeb] at (\y,-6) {};
\foreach \x in {0,4,5,8,9,10,11}
	\node[peb] at (\x,-6.5) {};
\foreach \y in {1,2,3,6,7,12,13,14,15}
	\node[unpeb] at (\y,-6.5) {};
\foreach \x in {0,4,8,9,...,12}
	\node[peb] at (\x,-7) {};
\foreach \y in {1,2,3,5,6,7,13,14,15}
	\node[unpeb] at (\y,-7) {};
\foreach \x in {0,4,8,9,...,13}
	\node[peb] at (\x,-7.5) {};
\foreach \y in {1,2,3,5,6,7,14,15}
	\node[unpeb] at (\y,-7.5) {};
\foreach \x in {0,4,8,9,10,12,13,14}
	\node[peb] at (\x,-8) {};
\foreach \y in {1,2,3,5,6,7,11,15}
	\node[unpeb] at (\y,-8) {};
\foreach \x in {0,4,8,9,12,13,14,15}
	\node[peb] at (\x,-8.5) {};
\foreach \y in {1,2,3,5,6,7,10,11}
	\node[unpeb] at (\y,-8.5) {};
	
\foreach \x in {0,4,8,9,10,12,13,14,15}
	\node[peb] at (\x,-9) {};
\foreach \y in {1,2,3,5,6,7,11}
	\node[unpeb] at (\y,-9) {};
\foreach \x in {0,4,8,9,...,13,15}
	\node[peb] at (\x,-9.5) {};
\foreach \y in {1,2,3,5,6,7,14}
	\node[unpeb] at (\y,-9.5) {};	
\foreach \x in {0,4,8,9,...,12,15}
	\node[peb] at (\x,-10) {};
\foreach \y in {1,2,3,5,6,7,13,14}
	\node[unpeb] at (\y,-10) {};
\foreach \x in {0,4,5,8,9,10,11,15}
	\node[peb] at (\x,-10.5) {};
\foreach \y in {1,2,3,6,7,12,13,14}
	\node[unpeb] at (\y,-10.5) {};
\foreach \x in {0,4,5,6,8,9,10,15}
	\node[peb] at (\x,-11) {};
\foreach \y in {1,2,3,7,11,12,13,14}
	\node[unpeb] at (\y,-11) {};
\foreach \x in {0,4,5,...,9,15}
	\node[peb] at (\x,-11.5) {};
\foreach \y in {1,2,3,10,11,...,14}
	\node[unpeb] at (\y,-11.5) {};
\foreach \x in {0,4,5,6,7,8,15}
	\node[peb] at (\x,-12) {};
\foreach \y in {1,2,3,9,10,...,14}
	\node[unpeb] at (\y,-12) {};
\foreach \x in {0,1,4,5,6,7,15}
	\node[peb] at (\x,-12.5) {};
\foreach \y in {2,3,8,9,...,14}
	\node[unpeb] at (\y,-12.5) {};
\foreach \x in {0,1,2,4,5,6,15}
	\node[peb] at (\x,-13) {};
\foreach \y in {3,7,8,...,14}
	\node[unpeb] at (\y,-13) {};
\foreach \x in {0,...,5,15}
	\node[peb] at (\x,-13.5) {};
\foreach \y in {6,...,14}
	\node[unpeb] at (\y,-13.5) {};
\foreach \x in {0,1,2,3,4,15}
	\node[peb] at (\x,-14) {};
\foreach \y in {5,...,14}
	\node[unpeb] at (\y,-14) {};
\foreach \x in {0,1,2,3,15}
	\node[peb] at (\x,-14.5) {};
\foreach \y in {4,...,14}
	\node[unpeb] at (\y,-14.5) {};
\foreach \x in {0,1,2,15}
	\node[peb] at (\x,-15) {};
\foreach \y in {3,...,14}
	\node[unpeb] at (\y,-15) {};	
\foreach \x in {0,1,15}
	\node[peb] at (\x,-15.5) {};
\foreach \y in {2,...,14}
	\node[unpeb] at (\y,-15.5) {};	
\node[peb] at (0,-16) {};
\node[peb] at (15,-16) {};
\foreach \x in {1,...,14}
	\node[unpeb] at (\x,-16) {};
\node[peb] at (15,-16.5) {};
\foreach \x in {0,...,14}
	\node[unpeb] at (\x,-16.5) {};

\node at (-0.8,-1) {$P_1$};
\node at (-0.8,-1.5) {$P_2$};
\node at (-0.8,-2) {$P_3$};
\node at (-0.8,-2.5) {$P_4$};
\node at (-0.8,-3) {$P_5$};
\node at (-0.8,-3.5) {$P_6$};
\node at (-0.8,-4) {$P_7$};
\node at (-0.8,-4.5) {$P_8$};
\node at (-0.8,-5) {$P_9$};
\node at (-0.8,-5.5) {$P_{10}$};
\node at (-0.8,-6) {$P_{11}$};
\node at (-0.8,-6.5) {$P_{12}$};
\node at (-0.8,-7) {$P_{13}$};
\node at (-0.8,-7.5) {$P_{14}$};
\node at (-0.8,-8) {$P_{15}$};
\node at (-0.8,-8.5) {$P_{16}$};
\node at (-0.8,-9) {$P_{17}$};
\foreach \x in {18,...,32}
	\node at (-0.8,-\x/2-0.5) {$P_{\x}$};
  \end{tikzpicture}
  \caption{A parallel reversible pebbling strategy for an $(e,d)$-reducible graph with $N=16,e=N/4=4,$ and $d=3$. A filled square denotes a pebble on the node and an unfilled square denotes an unpebbled node.} \figlab{fig:edpebbling}
\end{figure}

\newpage
\subsection{Example of a Reversible Pebbling Using an Induced Line Graph}\applab{example:meta}

In this example, we give a DAG $G=(V=[18],E)$ with the edge distribution as illustrated in \figref{fig:metapebbling}. As we discussed in \secref{sec:attack:reducedgraph}, we reduce our DAG $G$ to a line graph $L_6$ by choosing the block size $b=3$. Given an efficient reversible pebbling $P'$ of $L_6$ as shown in \figref{fig:metapebbling-linegraph}, we apply $\trans(G,P',b=3)$ to produce a legal reversible pebbling of $G$. Note that we have $\lasta(P',6)=6$, hence, in reversible pebbling rounds of $G$ that corresponds to $P'_6$, we pebble all nodes in $B_6$ and delete pebbles from the block in reverse topological order except for the last node as shown in \algref{alg:trans} in \appref{algorithms}, which takes $b+N-(\lceil N/b\rceil-1)b-1=3+18-(6-1)3-1 = 5$ steps to complete. We also note that pebbles colored in red are \emph{skip nodes}, which will be kept until the corresponding block is deleted for the last time, i.e., we keep a skip node $v\in B_i$ until we reach rounds that correspond to $P'_j$ (of $L_6$) with $j=\lastd(P',i)$.

\input{figures/fig-metapebbling}
\allowdisplaybreaks
\section{Missing Proofs}\applab{missing}

\begin{remindertheorem}{\claimref{claim:Pv}}
\claimPv
\end{remindertheorem}

\begin{proofof}{\claimref{claim:Pv}}
We observe that for $v\in[N]$, $P_v\setminus P_{v-1}\subseteq \bigcup_{i=0}^{d} A_{v+i,S,i+1}$, and $P_{v-1}\setminus P_v \subseteq A_{v,S,2}\cup\left(\bigcup_{i=1}^{d+1} A_{v-i,S,i}\right)$. Then by \claimref{claim:AwSi} below, we have 
\begin{align*}
\parents(P_v\setminus P_{v-1},G)\setminus S &\subseteq \bigcup_{i=0}^{d}\parents(A_{v+i,S,i+1},G)\\
&\subseteq \bigcup_{i=0}^{d} A_{v+i,S,i+2}\\
&= \left( \bigcup_{i=0}^{d-1} A_{v+i,S,i+2} \right) \cup A_{v-1+d,S,d+2}\\
&= \bigcup_{i=0}^{d-1} A_{v+i,S,i+2} \subseteq P_{v-1}\cap P_v,
\end{align*}
and
\begin{align*}
\parents(P_{v-1}\setminus P_v,G)\setminus S &\subseteq \parents(A_{v,S,2},G)\cup\left(\bigcup_{i=1}^{d+1}\parents(A_{v-i,S,i},G)\right)\\
&\subseteq A_{v,S,3}\cup\left(\bigcup_{i=1}^{d+1} A_{v-i,S,i+1}\right)\\
&= A_{v,S,3}\cup\left(\bigcup_{i=1}^{d} A_{v-i,S,i+1}\right) \cup A_{v-d,S,d+2}\\
&= A_{v,S,3}\cup\left(\bigcup_{i=1}^{d} A_{v-i,S,i+1}\right) \subseteq P_{v-1}\cap P_v,
\end{align*}
where we have $A_{v-1+d,S,d+2}=A_{w-d,S,d+2}=\varnothing$ by the $(e,d)$-reducibility. Taken together, we have $\parents(P_v\setminus P_{v-1},G)\cup\parents(P_{v-1}\setminus P_v,G)\subseteq S_{\leq v-1}\cup(P_{v-1}\cap P_v) = P_{v-1}\cap P_v$.
\end{proofof}

\begin{claim}\claimlab{claim:AwSi}
$\parents(A_{w,S,i},G)\setminus S \subseteq A_{w,S,i+1}$.
\end{claim}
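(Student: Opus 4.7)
The plan is to fix an arbitrary $u \in \parents(A_{w,S,i},G) \setminus S$ and verify the defining condition $\lpath_{G-S_{\leq w-1}}(u,w) = i+1$ for membership in $A_{w,S,i+1}$.

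First I would unpack the hypothesis. By definition of $\parents(\cdot,G)$, there is some $v \in A_{w,S,i}$ with $(u,v) \in E$, and $v \in A_{w,S,i}$ records that $v \notin S_{\leq w-1}$ together with $\lpath_{G-S_{\leq w-1}}(v,w) = i$. Because $G$ has vertex set $[N]$ indexed in topological order, the edge $(u,v) \in E$ forces $u < v \leq w$, so $u \leq w-1$; combined with $u \notin S$ this yields $u \notin S_{\leq w-1}$. Hence $u$ still lives in $G - S_{\leq w-1}$ and the edge $(u,v)$ survives in this subgraph.

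Second, I would establish the lower bound $\lpath_{G-S_{\leq w-1}}(u,w) \geq i+1$ by prepending $u$ to a longest $v$-to-$w$ path inside $G - S_{\leq w-1}$: the result is a directed $u$-to-$w$ path with exactly $i+1$ nodes, so the longest such path has at least $i+1$ nodes.

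The main obstacle is the matching upper bound. I would use the recurrence
\[
\lpath_{G-S_{\leq w-1}}(u,w) \;=\; 1 + \max_{y}\,\lpath_{G-S_{\leq w-1}}(y,w),
\]
where $y$ ranges over children of $u$ in $G - S_{\leq w-1}$ (using the convention $\lpath(y,w) = 0$ when $y$ cannot reach $w$), and argue that no such child can push this maximum above $i$, which pins $\lpath_{G-S_{\leq w-1}}(u,w) = i+1$ and places $u$ in $A_{w,S,i+1}$. For the downstream use inside \claimref{Pv} it actually suffices to conclude only that $u$ lies in $\bigcup_{j \geq i+1} A_{w,S,j}$: this weaker inclusion is already immediate from the lower bound together with the fact that $u$ reaches $w$ inside $G - S_{\leq w-1}$, and it is absorbed by the definition of $B_v \cap B_{v-1}$, which contains every layer $A_{w,S,j}$ with $j \geq i+1$ in the indexing used there.
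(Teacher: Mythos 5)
Your instinct here is right, and in fact you have put your finger on a real gap in the paper's own argument. The paper's proof of \claimref{claim:AwSi} is a two-line assertion that for $x\in A_{w,S,i}$ and $x'\in\parents(x,G)\setminus S$ one has $\lpath_{G-S_{\leq w-1}}(x',w)=1+\lpath_{G-S_{\leq w-1}}(x,w)$; as you observe, prepending $x'$ to a longest $x$-to-$w$ path only gives the inequality $\lpath_{G-S_{\leq w-1}}(x',w)\geq i+1$, and the reverse inequality is false in general because the longest $x'$-to-$w$ path may leave $x'$ through a different child. Concretely, take $V=[4]$ with edges $(1,2),(2,3),(3,4),(1,3)$, $S=\varnothing$, $w=4$: then $3\in A_{4,S,2}$ and $1\in\parents(3,G)$, but $\lpath_G(1,4)=4$, so $1\in A_{4,S,4}\setminus A_{4,S,3}$ and the claim as literally stated fails. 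You should therefore drop the sentence promising to show that no child of $u$ can push the maximum above $i$ --- no such argument exists --- and commit to the corrected statement $\parents(A_{w,S,i},G)\setminus S\subseteq\bigcup_{j\geq i+1}A_{w,S,j}$, which your lower-bound argument already establishes (your preliminary observation that $u<v\leq w$ forces $u\notin S_{\leq w-1}$ is also a detail the paper skips). Your check that the weaker inclusion suffices downstream is correct: in the proof of \claimref{claim:Pv} each set $\parents(A_{w,S,\cdot},G)\setminus S$ only ever needs to land inside $P_{v-1}\cap P_v$, the definition of $B_v$ (and of $B_{v-1}$) already contains the layers $A_{w,S,j}$ for all relevant $j$ up to $d+1$, and the layers with $j\geq d+2$ are empty by $(e,d)$-reducibility, so replacing the single layer $A_{w,S,i+1}$ by the union over $j\geq i+1$ is harmless. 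In short, your proposal does not prove the claim as stated, but that is because the claim as stated is not provable; the repaired union form you identify is the statement the surrounding argument actually needs.
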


\begin{proof}
If $x\in A_{w,S,i}$ then by definition we have $\lpath_{G-S_{\leq w-1}}(x,w)=i$. For any $x'\in\parents(x,G)\setminus S$, we observe that $\lpath_{G-S_{\leq w-1}}(x',w)= 1+\lpath_{G-S_{\leq w-1}}(x,w)=i+1$, which completes the proof.\qed
\end{proof}

\begin{remindertheorem}{\lemref{lem:argon2i}}
\argonlemma
\end{remindertheorem}

\begin{proofof}{\lemref{lem:argon2i}}
We divide $N$ nodes into $\lambda$ layers of size $N/\lambda$ and reduce the depth of each layer to $d'$ so that the final depth becomes $d_1=d_2=d'\lambda$ for both Argon2i-A and Argon2i-B. 
To do so, we (a) delete all nodes with parents in the same layer, and (b) delete one out of $d'$ nodes in each layer. 
Let $\mathsf{Delete}_i$ be the event that a node $v$ in $i\th$ layer is deleted in step (a), i.e., $r(v)$ remains in the same layer.
\begin{enumerate}
\item For $G_\ArgonA$, since all the layers have the same number of nodes and $r(v)$ is picked uniformly at random from $[v-2]$, we observe that $\Pr[\mathsf{Delete}_i]\leq\frac{1}{i}$. It is clear that we delete $N/d'$ nodes in step (b). Hence,
\begin{align*}
e_1 &= \frac{N}{d'} + \text{(\# nodes deleted in step (a))}\\
  &= \frac{N}{d'} + \sum_{i=1}^\lambda \Pr[\mathsf{Delete}_i]\cdot\frac{N}{\lambda} \simeq \frac{N}{d'} + \frac{N\ln\lambda}{\lambda}.
\end{align*}
\item For $G_\ArgonB$, since we have $i\left(1-\frac{x^2}{N^2}\right)\in(j-1,j]$ if and only if $N\sqrt{1-\frac{j}{i}}\leq x<N\sqrt{1-\frac{j-1}{i}}$, we have that $\Pr[r(i)=j]=\sqrt{1-\frac{j-1}{i}}-\sqrt{1-\frac{j}{i}}$. Similarly, we have $\Pr[a< r(i)<b]=\Pr_{x\in[N]}\left[i\left(1-\frac{x^2}{N^2}\right)\in(a,b-1]\right] = \sqrt{1-\frac{a}{i}}-\sqrt{1-\frac{b-1}{i}}$. Thus,
\begin{align*}
\Pr[\mathsf{Delete}_i] &= \Pr\left[\frac{(i-1)N}{\lambda}<r(v)<v\right]\\
&= \sqrt{1-\frac{(i-1)N/\lambda}{v}}-\sqrt{1-\frac{v-1}{v}}\\
&= \sqrt{1-\frac{(i-1)N}{\lambda v}}-\sqrt{\frac{1}{v}}\\
&\leq \sqrt{1-\frac{i-1}{i}}-\sqrt{\frac{\lambda}{iN}} = \sqrt{\frac{1}{i}}-\sqrt{\frac{\lambda}{iN}},
\end{align*}
where the last inequality holds since $\sqrt{1-\frac{(i-1)N}{\lambda v}}-\sqrt{\frac{1}{v}}$ is an increasing function of $v$ and the largest possible $v$ is $iN/\lambda$ since it should lie in the $i\th$ layer. Hence,
\begin{align*}
e_2 &= \frac{N}{d'} + \sum_{i=1}^\lambda \Pr[\mathsf{Delete}_i]\cdot\frac{N}{\lambda}\\
&\leq \frac{N}{d'} + \left(\frac{N}{\lambda}-\sqrt{\frac{N}{\lambda}}\right)\sum_{i=1}^{\lambda}\sqrt{\frac{1}{i}}\\
&\leq \frac{N}{d'} +\left(\frac{N}{\lambda}-\sqrt{\frac{N}{\lambda}}\right)\left(\int_1^\lambda \frac{dx}{\sqrt{x}} + 1\right)\\
&= \frac{N}{d'} + \left(\frac{N}{\lambda}-\sqrt{\frac{N}{\lambda}}\right)(2\sqrt{\lambda}-1) \leq \frac{N}{d'} + \frac{2N}{\sqrt{\lambda}}.
\end{align*}
\end{enumerate}
\end{proofof}

\begin{remindertheorem}{\corref{cor:argon2i}}
\argoncorollary
\end{remindertheorem}

\begin{proofof}{\corref{cor:argon2i}}
From \thmref{thm:ed-reducible} and \lemref{lem:argon2i}, we have
\[ \pqpeb_\Cspacetime(G_\ArgonA) \leq \O{N+Ne+Nd2^d} \simeq \O{ N+\frac{N^2}{d'}+\frac{N^2\ln\lambda}{\lambda} + \lambda d' 2^{\lambda d'}N }. \]
To make the upper bound optimal, we want to make the upper bound as small as possible. Hence, we want to find $d'$ and $\lambda$ such that $\frac{N^2}{d'}\approx\frac{N^2\ln\lambda}{\lambda} \approx \lambda d' 2^{\lambda d'}N$ as much as possible. Hence, $d'=\frac{\lambda}{\ln\lambda}$ and $\lambda$ should satisfy $\frac{\lambda^3}{(\ln\lambda)^2}2^{\lambda^2/\ln\lambda}\approx N$. Setting $\lambda=\sqrt{\log N}$, we have $d'=\frac{\lambda}{\ln\lambda}=\frac{2\sqrt{\log N}}{\ln\log N}$ and $d=d'\lambda = \frac{2\log N}{\ln\log N}$. Thus,
\begin{align*}
\pqpeb_\Cspacetime(G_\ArgonA) &\leq \O{N + \frac{2N^2\ln\log N}{2\sqrt{\log N}} + \frac{2N\log N}{\ln\log N}2^{2\log N/\ln\log N} } \\
&= \O{N + \frac{2N^2\ln\log N}{2\sqrt{\log N}} + \frac{2N^{1+\frac{2}{\ln\log N}}\log N}{\ln\log N} }\\
&= \O{\frac{N^2\log\log N}{\sqrt{\log N}}},
\end{align*}
since $\ln x = (\ln2)(\log x)$ for any $x>0$.

For Argon2i-B, we have
\[ \pqpeb_\Cspacetime(G_\ArgonB) \leq \O{N+Ne+Nd2^d} \simeq \O{ N+\frac{N^2}{d'}+\frac{2N^2}{\sqrt{\lambda}} + \lambda d' 2^{\lambda d'}N }. \]
Similarly, to make the upper bound optimal, we want to make $\frac{N^2}{d'}\approx\frac{2N^2}{\sqrt{\lambda}}\approx\lambda d' 2^{\lambda d'}N$ as much as possible. Hence, we have $d'\approx\sqrt{\lambda}/2$ and plugging in $\lambda=\sqrt[3]{\log^2 N}$ and $d'=\sqrt[3]{\log N}/2$, we have
\begin{align*}
\pqpeb_\Cspacetime(G_\ArgonB) &\leq \O{N+\frac{4N^2}{\sqrt[3]{\log N}}+\frac{N\sqrt{N}\log N}{2}}\\
&=\O{\frac{N^2}{\sqrt[3]{\log N}}}.
\end{align*}
\end{proofof}

\begin{remindertheorem}{\lemref{lem:translegal}}
\translegal
\end{remindertheorem}

\begin{proofof}{\lemref{lem:translegal}}
We want to show that it satisfies conditions in \defref{def:quantum-pebbling}.\\[5pt]
\noindent \ul{Conditions (1) and (5): $P_{tb+N-(\lceil N/b\rceil-1)b-1}=\{N\}$.}
\begin{itemize}
\item It is clear by construction because we remove all nodes except for the target node $N$.
\end{itemize}

\noindent \ul{Condition (2): $\forall j\in[tb+N-(\lceil N/b\rceil-1)b-1] : v\in (P_j\setminus P_{j-1}) \Rightarrow \parents(v,G)\subseteq P_{j-1}$.}
\begin{itemize}
\item We first observe that whenever we pebble a new node $w$ in $L_{\lceil N/b\rceil}$, the node $w-1$ must have been pebbled in the previous round.
\item Suppose that $v\in B_w$ for some $w\in[\lceil N/b\rceil]$. For every edge of the form $(u,v)$, we have the following possibilities:
\begin{enumerate}
\item[(a)] If $u\in B_w$, $u$ must be (re)pebbled before node $v$ since both $u$ and $v$ corresponds to placing the node $w$ in $L_{\lceil N/b\rceil}$. Hence, $u\in P_{j-1}$.
\item[(b)] If $u\in B_{w-1}$, we are guaranteed that $u$ is already pebbled before we begin pebbling nodes in block $B_w$ since every node in $B_{w-1}$ is pebbled. Hence, $u\in P_{j-1}$.
\item[(c)] If $u\in B_j$ with $j<w-1$, then $u$ is a skip node and will already be pebbled before placing a pebble on $v$. Hence, $u\in P_{j-1}$.
\end{enumerate}
\item Taken together, we have $\parents(v,G)\subseteq P_{j-1}$.
\end{itemize}

\noindent \ul{Condition (3): $\forall j\in[tb+N-(\lceil N/b\rceil-1)b-1] : v\in (P_{j-1}\setminus P_j) \Rightarrow \parents(v,G)\subseteq P_{j-1}$.}
\begin{itemize}
\item We first observe that whenever we remove a pebble from $w$ in $L_{\lceil N/b\rceil}$, the node $w-1$ must have been pebbled in the previous round.
\item Suppose that $v\in B_w$ for some $w[\in\lceil N/b\rceil]$. For every edge of the form $(u,v)$, we have the following possibilities:
\begin{enumerate}
\item[(a)] If $u\in B_w$, a pebble on $u$ is not yet removed in the previous round because we remove pebbles in $B_w$ in a reverse topological order. Hence, $u\in P_{j-1}$.
\item[(b)] If $u\in B_{w-1}$, we are guaranteed that $u$ is already pebbled before we begin removing nodes in block $B_w$ since every node in $B_{w-1}$ is pebbled. Hence, $u\in P_{j-1}$.
\item[(c)] If $u\in B_j$ with $j<w-1$, then $u$ is a skip node and will already be pebbled before removing a pebble from $v$. Hence, $u\in P_{j-1}$.
\end{enumerate}
\item Taken together, we have $\parents(v,G)\subseteq P_{j-1}$.
\end{itemize}

\noindent \ul{Condition (4): $\forall j\in[tb+N-(\lceil N/b\rceil-1)b-1]:v\in \parents(P_j\setminus P_{j-1},G)\cup\parents(P_{j-1}\setminus P_j,G)$, then $v\in P_j$.}
\begin{itemize}
\item If $v\in\parents(P_j\setminus P_{j-1},G)$, then there exists some $v'\in P_j\setminus P_{j-1}$ and some $w\in[\lceil N/b\rceil]$ such that $(v,v')\in E$ and $v'\in B_w$. Now we have the following possibilities:
\begin{enumerate}
\item[(a)] If $v\in B_w$, then $v$ must be (re)pebbled before node $v'$ and keep pebbled since both $u$ and $v$ corresponds to placing the node $w$ in $L_{\lceil N/b\rceil}$. Hence, $v\in P_j$.
\item[(b)] If $v\in B_{w-1}$, we are guaranteed that $v$ is already pebbled when we begin pebbling nodes in block $B_w$ since every node in $B_{w-1}$ is pebbled. Hence, $v\in P_j$.
\item[(c)] If $v\in B_j$ with $j<w-1$, then $v$ is a skip node and will already be pebbled and keep pebbled when placing a pebble on $v'$. Hence, $v\in P_j$.
\end{enumerate}
\item If $v\in\parents(P_{j-1}\setminus P_j,G)$, then there exists some $v''\in P_{j-1}\setminus P_j$ and some $w'\in[\lceil N/b\rceil]$ such that $(v,v'')\in E$ and $v''\in B_{w'}$. Now we have the following possibilities:
\begin{enumerate}
\item[(a)] If $v\in B_{w'}$, a pebble on $v$ is not yet removed in $P_j$ because we remove pebbles in $B_{w'}$ in a reverse topological order. Hence, $v\in P_j$.
\item[(b)] If $v\in B_{w'-1}$, we are guaranteed that $v$ is already pebbled when we begin removing nodes in block $B_{w'}$ since every node in $B_{w'-1}$ is pebbled. Hence, $v\in P_j$.
\item[(c)] If $v\in B_j$ with $j<w'-1$, then $v$ is a skip node and will already be pebbled and keep pebbled when removing a pebble from $v''$. Hence, $v\in P_j$.
\end{enumerate}
\end{itemize}
Taken together, we can conclude that if $P'\in\pqPeb{L_{\lceil N/b\rceil}}$, then $P=\trans(G,P',b)\in\pqPeb{G}$.
\end{proofof}

\begin{remindertheorem}{\lemref{lem:skipnodesDRS}}
\skipnodesDRS
\end{remindertheorem}

\begin{proofof}{\lemref{lem:skipnodesDRS}}
For each $v\in V_\DRS$, let $Y_v$ be an indicator random variable for the event that $v-r(v) > b$. Then we observe that $\numskip(G_\DRS,b)\leq \sum_{v\in V_\DRS} Y_v$, since $\numskip(G_\DRS,b)$ is upper bounded by the number of edges that skip over a block. Since there are at most $\log v$ buckets for $r(v)$ and $\log b$ buckets with $v-r(v)\leq b$, we have $\Pr[v-r(v)>b]\leq 1-\frac{\log b}{\log v} \leq 1-\frac{\log b}{\log N} = \frac{\log(N/b)}{\log N}$. Hence, by linearity of expectation it follows that
\[ \Ex{\numskip(G_\DRS,b)}\leq \sum_{v\in V_\DRS}\Ex{Y_v} = \sum_{v\in V_\DRS} \Pr[v-r(v)>b] \leq \sum_{v\in V_\DRS} \frac{\log(N/b)}{\log N} = \frac{N\log(N/b)}{\log N}. \]
As the expected value is the sum of independent random variables, we can use Chernoff bounds with $\mu=\frac{N\log(N/b)}{\log N}\geq \sum_{v\in V_\DRS}\Ex{Y_v}$ to show that for any constant $\delta>0$, we have
\[ \Pr[\numskip(G_\DRS,b)>(1+\delta)\mu] < \exp\left( -\frac{\delta^2 N\log(N/b)}{3\log N} \right).\]
Hence, with high probability, we have $\numskip(G_\DRS,b)=\O{\frac{N\log(N/b)}{\log N}}$. Setting $b=\frac{N}{\log^2N}$, we get the desired result.
\end{proofof}

\begin{remindertheorem}{\lemref{reverseMerge}}
\reverseMerge
\end{remindertheorem}
\begin{proof}
    First we'll show that $\langle P'_1\cup T, \dots, P'_{t'-1}\cup T,P'_{t'}\rangle$ is a legal reversible pebbling. See that since $P'$ satisfies requirements $(2)$, $(3)$, and $(4)$, so does $\langle P_1'\cup T,\dots, P'_{t'-1}\cup T, P'_{t'}\rangle$ since no pebbles are removed. Since $P'_t=P'_{t'}$ we have that $\langle P_1,\dots, P_t, P'_{t'-1}\cup T, P'_{t'-2}\cup T,\dots, P'_{1}\cup T\rangle$ is a legal reversible pebbling sequence.\qed
\end{proof}
\begin{remindertheorem}{\lemref{monorev}}
    \monorev
\end{remindertheorem}
 \begin{proof}
            Since $P$ is a legal (standard parallel) pebbling and no nodes are deleted, then it suffices to show reversibility. Suppose $x\in\parents(P_i\setminus P_{i-1})\cup\parents(P_{i-1}\setminus P_i)$. Since \[\parents(P_i\setminus P_{i-1})\cup\parents(P_{i-1}\setminus P_i)=\parents(P_i\setminus P_{i-1}),\] $x\in P_i$ as it was pebbled in some prior pebbling step and $P$ never removes any pebbles.\qed
        \end{proof}
\ignore{     
        For each $1\le c\le \ceil{N/g}$, let $\mathcal P^c=\langle P_{(c-1)g+1}, P_{(c-1)g+2},\dots P_{cg}\rangle$. No pebbles are deleted within each $\mathcal P^c$, but many pebbled are deleted between each $\mathcal P^c$ and $\mathcal P^{c+1}$. We'll construct pebbling moves $\mathcal Q^c$ that acts as the balloon and clean up phase between $\mathcal P^c$ and $\mathcal P^{c+1}$ to make a legal reversible pebbling for $G$. First, consider the sequence ${\mathcal \mathcal Q^{c,1}}=\langle Q^{c,1}_1,\dots, Q^{c,1}_{d}\rangle$, where $Q^{c,1}_0=P_{cg-d}=\lightreq_{cg-d}$, $Q^{c,1}_{d+1}=[cg]$, and $Q^{c,1}_i=Q^{c,1}_{i-1}\cup R(Q^{c,1}_{i-1})_{\le cg}$. This subsequence is legal and reversible since $S_{\le cg}\subseteq Q^{c,1}_0$ is a depth reducing set for $G$. Next, consider the sequence $Q^{c,2}=\langle Q^{c,2}_{1},\dots, Q^{c,2}_{d}\rangle$ such that 
$Q^{c,2}_{0}=\lightreq_{cg+1}$, $Q^{c,2}_{d+1}=[cg]$, and $Q^{c,2}_{i}=Q^{c,2}_{i-1}\cup R(Q^{c,2}_{i-1})$ for $j\in[d]$. Since $Q^{c,2}$ is monotonic, we have that ${\mathcal Q}^c$ is reversible, so we'll finally let $\mathcal Q^c=\langle Q^{c,1}_{1},\dots, Q^{c,1}_{d},Q^{c,1}_{d+1}, Q^{c,2}_{d},\dots, Q^{c,2}_{1}\rangle$, which is reversible by \lemref{reverseMerge}. We define the first half our reversible pebbling for $G$ as \[P^1_{\text{rev}}=\mathcal P^1 +  \mathcal Q^1 + \mathcal P^2 +  \mathcal Q^2+\dots +  \mathcal Q^{\ceil{N/g}-1}+\mathcal P^{\ceil{N/g}},\] where $+$ denotes the concatenation operator between sequences. Now we must clean up again so that only $N$ is pebbled. For a pebbling sequence $K=\langle K_1,\dots, K_t\rangle$ and a set $T$, we let $K(T)=\langle K_1\cup T,\dots, K_t\cup T\rangle $. Then we let \[\pebrev^2= Q^{\ceil{N/g}-1}(\{N\})+\mathcal P^{\ceil{N/g}-1}(\{N\})+\dots+\mathcal Q^{1}(\{N\})+\mathcal P^1(\{N\}) + \langle\{N\}\rangle\] and $\pebrev=\pebrev^1+\pebrev^2$. }

\begin{remindertheorem}{\lemref{revpebLegal}}
    \revpebLegal
\end{remindertheorem}
\begin{proof}
    In the discussion above we've shown that each $\mathcal P^c$ and $\mathcal Q^c$ are reversible pebblings, so for the first half, it suffices to show that $\langle P_{cg}, Q^{c,1}_{1}\rangle$ and $\langle Q^{c,2}_{1}, P_{cg+1}\rangle$ are legal reversible pebblings. Since $Q^{c,1}_1=P_{cg}\cup R(P_{cg})$, it is a legal monotonic (and thus reversible) sequence. Since $Q^{c,2}_0=P_{cg+1}$ (recall we defined $Q^c_d$ but didn't include it in $\mathcal Q^c$), $\langle Q^{c,2}_{1}, P_{cg+1}\rangle$ is also monotonic and thus reversible. Since $\pebrev^1$ is a reversible pebbling so is $\pebrev$ by \lemref{reverseMerge}. \qed
\end{proof}
\newpage
\section{Reversible Pebbling Strategy using an Induced Line Graph}\applab{algorithms}

\begin{algorithm}[ht!]
\caption{The Procedure $\trans(G,P',b)$.}
\alglab{alg:trans}
\DontPrintSemicolon
\SetAlgoLined
\KwIn{A constant-indegree DAG $G=(V=[N],E)$, a parameter $b$ (size of the block), and a legal reversible pebbling $P'=\{P'_0,P'_1,\ldots,P'_t\}\in\pqPeb{L_{\lceil N/b\rceil}}$ for an induced line graph $L_{\lceil N/b\rceil}$}
\KwOut{A legal reversible pebbling $P\in\pqPeb{G}$ of $G$}
\vspace{0.3em}
Partition $V=[N]$ into $B_1,\ldots,B_{\lceil N/b\rceil}$ where $B_i=\{(i-1)b+1,(i-1)b+2,\ldots,ib\}$ for $i\in[\lceil N/b\rceil-1]$ and $B_{\lceil N/b\rceil} = \{(\lceil N/b\rceil-1)b+1,(\lceil N/b\rceil-1)b+2,\ldots,N\}$.\;
Initialize $P^{(i)}_{0,b}=\varnothing$ and $P^{(i)}_{j,k}=\varnothing$ for each $i\in[\lceil N/b\rceil]$, $j\in[t]$, and $k\in[f(j)]$, where $f(j)=b+N-\left(\lceil \frac{N}{b}\rceil-1\right)b-1$ if $j=\lasta(P',\lceil N/b\rceil)$, and $f(j)=b$ elsewhere.\;
\For(\tcp*[f]{for each block $B_1,\ldots, B_{\lceil N/b\rceil-1}$ except for the last one}){$i=1,\ldots,\lceil N/b\rceil-1$}{
	Compute $S_i\coloneqq\skipnode(B_i,G)$ using \eqnref{skipnodes}.\;
	\For(\tcp*[f]{for each round in $P'$}){$j=1,\ldots,t$}{
		\If{$j\neq \lasta(P',\lceil N/b\rceil)$}{
			$\{P^{(i)}_{j,1},\ldots,P^{(i)}_{j,b}\}\leftarrow\blockpebble(B_i,b,S_i,P',P^{(i)}_{j-1,f(j)},i,j)$.
		}
		\Else(\tcp*[f]{i.e., $j = \lasta(P',\lceil N/b\rceil)$}){
			$\{P^{(i)}_{j,1},\ldots,P^{(i)}_{j,b}\}\leftarrow\blockpebble(B_i,b,S_i,P',P^{(i)}_{j-1,f(j)},i,j)$.\;
			Maintain pebbles for the extra $N-\left(\lceil \frac{N}{b}\rceil-1\right)b-1\leq b-1$ steps, i.e., $P^{(i)}_{j,b} = P^{(i)}_{j,b+1} = \cdots = P^{(i)}_{j,b+N-\left(\lceil \frac{N}{b}\rceil-1\right)b-1}$.
		}
	}
}
\For(\tcp*[f]{for the last block $B_{\lceil N/b\rceil}$ and for each round in $P'$}){$j=1,\ldots,t$}{
	\If{$j \neq \lasta(P',\lceil N/b\rceil)$}{
		$\{P^{(\lceil \frac{N}{b}\rceil)}_{j,1},\ldots,P^{(\lceil \frac{N}{b}\rceil)}_{j,b}\}\leftarrow\lastblockpebble(N,b,P',P^{(\lceil \frac{N}{b}\rceil)}_{j-1,f(j)},j)$.
	}
	\Else(\tcp*[f]{i.e., $j = \lasta(P',\lceil N/b\rceil)$}){
		$\{P^{(\lceil \frac{N}{b}\rceil)}_{j,1},\ldots,P^{(\lceil \frac{N}{b}\rceil)}_{j,b}\}\leftarrow\lastblockpebble(N,b,P',P^{(\lceil \frac{N}{b}\rceil)}_{j-1,f(j)},j)$.\;
		Delete pebbles from the block in a reverse topological order, except for the sink node, with $N-(\lceil N/b\rceil-1)b-1$ steps, i.e., $P^{(\lceil \frac{N}{b}\rceil)}_{b+k} = P^{(\lceil \frac{N}{b}\rceil)}_{b+k-1} \setminus\{N-k\}$ for $k=1,\ldots,N-(\lceil N/b\rceil-1)b-1$.
	}
}
\For{$j=1,\ldots,t$}{
	\For{$k=1,\ldots,f(j)$}{
		$P_{j,k} = \bigcup_{i=1}^{\lceil N/b\rceil}P^{(i)}_{j,k}$.\;
		\If(\tcp*[f]{Ordering the pebbling configurations}){$j\leq\lasta(P',\lceil N/b\rceil)$}{
			$P_{(j-1)b+k} \leftarrow P_{j,k}$	
		}
		\Else{
			$P_{N-(\lceil N/b \rceil-1)b-1+(j-1)b+k} \leftarrow P_{j,k}$
		}
	}
}
\Return{$P=\{P_1,\ldots,P_{tb+N-(\lceil N/b\rceil-1)b-1}\}$.}
\end{algorithm}

\begin{algorithm}[ht!]
\caption{The Subfunction $\blockpebble(B,b,S,P',P_0,i,j)$.}
\alglab{alg:block}
\DontPrintSemicolon
\SetAlgoLined
\KwIn{A set of nodes $B$, a parameter $b$ (size of the set), a set of skip pebbles $S$, a legal reversible pebbling $P'=\{P'_0,P'_1,\ldots,P'_t\}$, a pebbling configuration $P_0$ on $B$, and parameters $i$ and $j$}
\KwOut{A legal relaxed reversible pebbling $P=\{P_1,\ldots,P_b\}$ of the set $B$}
\vspace{0.3em}
Assert $|B|=b$.\;
\If{$i\in P_j'\setminus P_{j-1}'$}{
	Place pebbles in the block $B$ with $b$ steps, i.e., $P_1 = P_0 \cup \{(i-1)b+1\}$, and $P_k = P_{k-1} \cup \{(i-1)b+k\}$ for $k=2,\ldots,b$.
}
\ElseIf{$i\in P_{j-1}'\setminus P_j'$}{
	\If{$j-1=\lastd(P',i)$}{
		Delete pebbles from the block $B$ in a reverse topological order with $b$ steps, i.e., $P_1 = P_0 \setminus \{ib\}$, and $P_k = P_{k-1} \setminus \{ib-(k-1)\}$ for $k=2,\ldots,b$.
	}
	\Else{
		Delete pebbles from the block $B$ \emph{except for the skip nodes}, i.e., $P_1 = P_0\setminus (\{ib\}\setminus S)$, and $P_k = P_{k-1}\setminus (\{ib-(k-1)\}\setminus S)$ for $k=2,\ldots,b$.
	}
}
\Else{
	Maintain pebbles in the block $B$ for $b$ steps, i.e., $P_0 = P_1 = \cdots = P_b$.
}
\Return{$P=\{P_1,\ldots,P_b\}$}
\end{algorithm}

\begin{algorithm}[ht!]
\caption{The Subfunction $\lastblockpebble(N,b,P',P_0,j)$.}
\alglab{alg:lastblockpebble}
\DontPrintSemicolon
\SetAlgoLined
\KwIn{A parameter $N$, $b$, a legal reversible pebbling $P'=\{P'_0,P'_1,\ldots,P'_t\}$, a pebbling configuration $P_0$ of the last block, and a parameter $j$}
\KwOut{A legal relaxed reversible pebbling $P=\{P_1,\ldots,P_b\}$ of the last block}
\vspace{0.3em}
\If{$\lceil N/b\rceil\in P_j'\setminus P_{j-1}'$}{
	Place pebbles in the block with $N-(\lceil N/b\rceil-1)b$ steps, and maintain the status for the next $b-N+(\lceil N/b\rceil-1)b$ steps, i.e., $P_1 = P_0 \cup \{(\lceil N/b\rceil-1)b+1\}$, $P_k = P_{k-1} \cup \{(\lceil N/b\rceil-1)b+k\}$ for $k=2,\ldots,N-(\lceil N/b\rceil-1)b$, and $P_{N-(\lceil N/b\rceil-1)b}=P_{N-(\lceil N/b\rceil-1)b+1}=\cdots=P_b$.
}
\ElseIf{$\lceil N/b\rceil\in P_{j-1}'\setminus P_j'$}{
	Delete pebbles from the block in a reverse topological order with $N-(\lceil N/b\rceil-1)b$ steps, and maintain the status for the next $b-N+(\lceil N/b\rceil-1)b$ steps, i.e., $P_1 = P_0 \setminus \{N\}$, $P_k = P_{k-1} \setminus \{N-(k-1)\}$ for $k=2,\ldots,N-(\lceil N/b\rceil-1)b$, and $P_{N-(\lceil N/b\rceil-1)b}=P_{N-(\lceil N/b\rceil-1)b+1}=\cdots=P_b$.
}
\Else{
	Maintain pebbles in the block for $b$ steps, i.e., $P_0 = P_1 = \cdots = P_b$.
}
\Return{$P=\{P_1,\ldots,P_b\}$}
\end{algorithm}

\end{document}